\documentclass[12pt]{article}
\usepackage{titling}
\newcommand{\cO}[1]{}
\PassOptionsToPackage{hyphens}{url}
\usepackage{csquotes}
\usepackage[
    authordate,
    backend=biber,
    natbib=true
]{biblatex-chicago}

\usepackage[doublespacing]{setspace}
\usepackage{hanging}
\usepackage{amsmath, amssymb}
\usepackage{amsthm}
\usepackage{bbm}
\usepackage[hyphens]{xurl}
\usepackage{hyperref}
\hypersetup{
    colorlinks,
    linkcolor={red!50!black},
    citecolor={orange!50!black},
    urlcolor={blue!80!black}
}
\newcounter{assum}
\newtheorem{prop}{Proposition}
\newtheorem{assumption}[assum]{Assumption}

\usepackage{empheq}

\newtheorem{remark}{Remark}

\usepackage{appendix}
\usepackage{units}
\usepackage{enumerate}
\usepackage{booktabs}
\usepackage{tabularx}
\usepackage{arydshln}
\usepackage{amssymb}
\usepackage[dvipsnames]{xcolor}
\usepackage{graphicx}
\usepackage{subfigure}
\usepackage{tikz}
\usepackage{enumitem}
\usepackage{pgfplots}
    \usepackage{caption}
\usetikzlibrary{decorations.pathreplacing,calligraphy}
\pgfplotsset{width=10cm,compat=1.9}

\newcommand{\argmin}{\mathop{\rm arg~min}\limits}

\usepackage{accents}

\usepackage[top=25truemm,bottom=25truemm,left=25truemm,right=25truemm]{geometry}

\definecolor{green}{HTML}{30AE17}

\begin{document}

\title{Designing Signals for Deterrence}

\author{
\begin{tabular}{cc}
Hiroto Sawada &\quad\quad\quad\quad\quad Raphael Boleslavsky \\[0.3em]
Kristopher W. Ramsay &\quad\quad\quad\  Mehdi Shadmehr\thanks{
Sawada: Department of Politics, Princeton University 
(e-mail: \texttt{hsawada@princeton.edu});
Boleslavsky: Department of Business Economics \& Public Policy, 
Kelley School of Business, Indiana University 
(e-mail: \texttt{rabole@iu.edu});
Ramsay: Department of Politics, Princeton University 
(e-mail: \texttt{kramsay@princeton.edu});
Shadmehr: Department of Public Policy, University of North Carolina at Chapel Hill 
(e-mail: \texttt{mshadmeh@gmail.com}).
}
\end{tabular}
}

\date{}
\maketitle

\begin{abstract} 
\begin{singlespace}
States and alliances spend enormous resources signaling resolve to deter adversaries. Two canonical strategies are burning money (sunk costs) and burning bridges (audience costs). Large audience costs can deter effectively, but they may be infeasible and, even when feasible, can increase the likelihood of war relative to burning money. We propose a signaling design in which the defender retains the flexibility to fight or back down but commits ex ante, before learning its resolve, to a rule mapping realized resolve into costly actions. When audience costs are limited, this design raises defender payoffs while reducing the likelihood of war. We discuss implementation through laws and automated algorithms. We also characterize defender payoffs when audience costs are chosen endogenously from a feasible set, showing that lowering the maximum feasible audience cost can raise defender payoffs, and we compare these payoffs with those under the standard burning-money setting without commitment.

\end{singlespace}

\vspace{.3in}

\noindent \textit{JEL} Codes: D82, D83, F51, F52

\vspace{.1in}

\noindent \textit{Keywords:} Deterrence, Signaling, Commitment, Burning Money, Audience Costs

\end{abstract}

\let\markeverypar\everypar
 \newtoks\everypar
 \everypar\markeverypar
 \markeverypar{\the\everypar\looseness=-2\relax}
\thispagestyle{empty}
\parskip=2.8pt

\thispagestyle{empty}
\newpage
\baselineskip=20.pt

\setcounter{page}{1}

\section{Introduction}

Following Russia's seizure of Crimea, Deputy Secretary of Defense Robert Work explained that the European Reassurance Initiative---which funded additional rotational deployments and infrastructure in Eastern Europe---was intended ``to signal our resolve and capacity to deliver'' \citep{workCSISGlobalSecurity2014}. Two years later, Secretary of Defense Ash Carter similarly stated, ``We're also increasing military exercises with allies and partners to demonstrate resolve and build their resilience,'' while also emphasizing their contribution to interoperability \citep{carterStrategicTransition2016}.
Similarly, as NATO prepared in 2016 to establish an enhanced Forward Presence (eFP) in the eastern part of the Alliance, Secretary General Jens Stoltenberg explained that it ``shows our resolve. It enhances our deterrence'' \citep{stoltenbergWarsawSummit2016}. He likewise described the purpose of NATO's Trident Juncture 2018 exercise not only as ``to strengthen our ability to operate together,'' but also as ``to send a clear message. NATO has the capabilities and the resolve to protect all Allies against any threat'' \citep{stoltenbergTridentJuncture2018}. Following Russia's invasion of Ukraine in 2022, NATO placed about 40,000 troops under NATO command in the eastern part of the Alliance and more than 100,000 troops on heightened alert. Stoltenberg explained: ``This is to send a very clear message to Moscow and leave no room for miscalculation or misunderstanding: an attack on one NATO Ally will trigger the full response from the whole Alliance. This is deterrence'' \citep{stoltenbergSecuringEurope2022}.

Indeed, this logic is embedded in NATO doctrine. NATO's \emph{Allied Joint Doctrine} identifies credibility and communication among the five principles of deterrence \citep[p.~43]{natoAlliedJointDoctrine2022}. It states that NATO must be perceived as possessing both the ``collective will to act'' and the capability to do so, and that, in some circumstances, the joint force should take considered risks to ``communicate the strength of the Alliance's commitment'' \citep[p.~44]{natoAlliedJointDoctrine2022}; see also \citet[p.~19]{janzenAlliedCommandOperations2022}. NATO's current deterrence guidance similarly states that exercises contribute to deterrence by communicating the Alliance's ``capability, readiness and resolve to potential adversaries'' \citep{natoDeterrenceDefence2026}.

As these examples highlight, to deter an adversary, a state must make its willingness to defend sufficiently credible. But assertions may not be credible: a state that would ultimately back down has the same incentive to say that it will fight. States can make such claims more credible in two fundamentally different ways. They can take costly actions before a challenge occurs—mobilizing forces, conducting military exercises, accumulating weapons, or undertaking other costly preparations. Or they can take actions that make backing down costly if a challenge later occurs---for example, by publicly staking their reputation on a threat.\footnote{States can also increase their strength. Even absent any information transmission incentives, states take costly actions to raise their preparedness. For example, if the state can virtually ensure that it will win a fight against an adversary, that may guarantee deterrence. Our focus is on what states can do conditional on a preparedness level. More generally, the marginal gains from preparedness actions eventually fall below their marginal costs. When information transmission incentives are present, states then may go beyond those levels to communicate information even though such costly actions have minimal effect on preparedness.} We call the first strategy \emph{burning money} and the second \emph{burning bridges} or \emph{creating audience costs}. \citet{fearonSignalingForeignPolicy1997} refers to the corresponding signaling mechanisms as ``sinking costs'' and ``tying hands.''

Which is the better way to deter an adversary: burning money or burning bridges (creating audience costs)? Burning bridges has an apparent advantage. Money burned is lost regardless of what happens next. A cost created by burning bridges, by contrast, is incurred only if the state is challenged and subsequently backs down. If the threat succeeds in deterring the challenger, or if the state fights when challenged, the cost will not be paid. When creating audience costs is feasible, it is therefore a remarkably effective approach.

This logic underlies \citeauthor{fearonSignalingForeignPolicy1997}'s \citeyearpar{fearonSignalingForeignPolicy1997} result that tying hands yields a higher expected payoff on average than sinking costs. An empirical literature finds evidence consistent with audience costs \citep{tomzDomesticAudienceCosts2007,trager_PoliticalCostsCrisis2011,levyBackingOutBacking2015,liPublicOpinionInternational2021,gelpiWinnersLosersDemocracies2001,weeksAutocraticAudienceCosts2008,takeiAudienceCostsCredibility2024,crismanCoxAudienceCostsDynamics2018}. However, evidence that audience costs are typically large enough to generate commitment effects is thinner \citep{snyderCostEmptyThreats2011,katagiriCredibilityPublicPrivate2019,trachtenbergAudienceCostsHistorical2012,gartzkeStillLookingAudience2012,crocoWhatCostReexamining2021}; see \citet[pp.~46-62]{slantchev2011book} and \citet{hoshiroThirtyYearsAudience2026} for reviews and \cite{tangEmpiricalEvidenceAudience2026} for a meta-analysis. Moreover, when some challengers are undeterrable, even sufficiently large audience costs can increase the probability of war relative to standard sunk-cost signaling by locking low-resolve defenders into fighting when they would otherwise concede. This raises a natural question: are there better ways for states to signal resolve?

We introduce a new form of commitment into the classical burning-money approach that simultaneously increases the state's payoff and (weakly) reduces the likelihood of war. What if a state can commit, \emph{before it learns its resolve}, to how it will burn money once its resolve is realized? That is, before knowing how much it will value the issue at stake in a future confrontation, the state chooses a contingent signaling strategy that maps its subsequently realized resolve into observable costly actions.\footnote{This approach applies \citet{BoleslavskyShadmehr_Signaling_with_Commitmentpdf}'s game theoretical framework to crisis-signaling models in international relations.} The state does not commit to fight. It retains the choice between fighting and backing down if a challenge actually occurs. Instead, it commits to \emph{how it will signal its willingness to fight} after learning its resolve but before the adversary decides whether to challenge. The next section discusses how laws and automated algorithms can create this commitment power.

This form of commitment is conceptually distinct from burning bridges. Burning bridges restricts the state's future choices by changing the consequences of backing down. Commitment to a burning-money strategy instead restricts the state's future signaling choice: it determines how the state will communicate after learning information but before the adversary acts. A state can therefore commit to its eventual response, or it can commit to the process by which its information is revealed. We characterize when each form is more valuable.

This form of commitment changes the logic of burning money. Without commitment, a costly signal must be sufficiently expensive to discourage imitation. A resolved state burns money to distinguish itself from an unresolved state, and the amount it burns must be large enough that the unresolved state prefers not to mimic it. The cost of the action is therefore precisely what makes the action informative.

With ex ante commitment, the costly action no longer has to perform this function. The state chooses the mapping from future resolve into actions before knowing which resolve will be realized. It can therefore assign different observable actions to its future types in advance. Conditional on being able to commit to that assignment, the actions need only be distinguishable; their difference in cost need not be large enough to deter imitation ex post. Therefore, the state uses the smallest available positive costly action. Commitment strictly improves the state's ex ante payoff because it preserves the informational benefit of separation while reducing the resources that must be burned.

This logic generates a striking reversal of the usual interpretation of costly signals. When the state is relatively unlikely to be highly resolved, the familiar pattern obtains: the high-resolve state takes the costly action and the low-resolve state does not. But when high resolve is sufficiently likely, the optimal assignment reverses. The high-resolve state takes the costless action and the relatively unlikely low-resolve state burns money. The absence of a costly action then signals \emph{greater} resolve. We refer to this feature as \textit{reversal of meaning}. This reversal of meaning is possible because the meaning of an action is determined by the ex ante commitment to the signaling rule, rather than by a type's ex post willingness to incur its cost.

 Does this strengthened form of burning money overturn the advantage of burning bridges (creating audience costs)? The answer depends on how large a bridge the state is capable of burning---how large an audience cost the state can create. 
We show that sufficiently large audience costs retain a powerful advantage over burning money. But that advantage can arise through two distinct mechanisms. In the first and most familiar mechanism, burning bridges does not reveal resolve; it creates ex post resolve. An audience cost can be large enough that even the low-resolve defender, who would otherwise back down, prefers to fight rather than incur the cost of retreat. Both types can then credibly stand firm. When all challengers are deterrable, this approach trivially creates the highest payoff for the defender. However, even when undeterrable challengers may arise, low-resolve types create no externality for high-resolve ones:  once they create large enough audience costs to lock themselves into fighting, they will be treated as high-resolve by challengers.

In the second mechanism,
burning bridges works as a signaling device.  A sufficiently large prospective cost of backing down makes the low-resolve defender unwilling to imitate the high-resolve defender. The two types separate, but the audience cost is never actually paid on the equilibrium path: the high-resolve type fights if challenged, while the low-resolve type does not send the signal. Burning bridges therefore produces the same information as burning money without dissipating resources along the equilibrium path.

Whenever the defender can generate an audience cost large enough to achieve at least one of these two objectives---separating the types or committing the low-resolve type to fight---burning bridges yields a strictly higher ex ante payoff than burning money. This is the best the defender can hope for under standard conditions.\footnote{In particular, this holds when the defender's interim payoffs are convex, as we will show is the case with standard distributions of the challenger's valuation of the contested object when the expected value is large compared to the costs of conflict.}
Therefore, this ranking survives even with the proposed new form of commitment. Ex ante commitment makes burning money cheaper, but it cannot reproduce these advantages when sufficiently large audience costs are feasible.

Importantly, this ranking reverses when sufficiently large audience costs are infeasible. With low audience costs, the low-resolve defender now has an incentive to mimic the high-resolve defender and the equilibrium involves pooling or semi-separation: the low-resolve defender may bluff by taking the same action as the high-resolve defender and then back down when challenged.
Consequently, when audience costs are too low to prevent bluffing (i.e., either to induce separation or to change subsequent behavior), committing to how money will be burned provides the defender with a strictly higher ex ante payoff for every interior prior. The reason is not that the state can burn more money. In fact, commitment makes large expenditures unnecessary. By assigning actions to future types before learning its resolve, the state can use the smallest available costly action to communicate information. Even without commitment, burning money outperforms constrained audience costs if and only if the low-resolve conditions, and hence bluffing and backing down, arise with sufficient frequency.

The two signaling approaches also have different implications for war. Commitment to a burning-money strategy changes the cost---and sometimes even the meaning---of the signal, but it does not change the equilibrium probability of war. With and without commitment, burning money ultimately separates high and low resolve, so war occurs only when a high-resolve defender encounters an undeterrable challenger. 

The audience-cost approach leads to war at least as often as burning money, but for different reasons. When a large audience cost separates the types without committing the low-resolve defender (second mechanism above), burning bridges and burning money generate the same probability of war. When the audience cost instead works by committing the low-resolve defender to fight (first mechanism), burning bridges generates strictly more war: some challenges that would have ended in acquiescence now lead to fighting. This is the familiar danger of tying hands emphasized by \citet{fearonSignalingForeignPolicy1997}---the same commitment that strengthens deterrence can leave a state locked into fighting when deterrence fails.

When audience costs are too small to sustain separation, they also generate strictly more war, but for the opposite reason. They do \emph{not} successfully tie the low-resolve defender's hands. Instead, they allow bluffing. Because a signal can come from either type, the challenger sometimes tests a defender it believes may be unresolved and instead encounters the high-resolve type, which fights. Thus, excess war under burning bridges can arise from two very different failures of discretion: sufficiently strong audience costs can commit too many types to fight, while insufficiently strong audience costs can leave too much uncertainty about which type the challenger faces.

This analysis points to a different way of thinking about the resources that support credible signaling. One might expect the choice between burning money and burning bridges (creating audience costs) to depend on both political institutions that create commitment powers (e.g., audience costs) and economic capacity that enables richer states to sink larger costs. However, once a sufficiently rich set of burning-money actions includes a small positive action, the ability to burn still larger amounts is not what makes this approach effective. 
With commitment, the state wants to use the cheapest distinguishable signal. Therefore, it is the defender's ability to commit to a burning-money strategy, largely affected by political institutions, not the size of the money burned, that controls the efficacy of burning money for the defender. By contrast, the largest bridge the state can burn (the maximum feasible audience cost), also largely affected by political institutions, is decisive. Whether audience costs are large enough to sustain separation or alter subsequent incentives determines both the payoff ranking and, in part, the probability of war. The comparison, therefore, shifts attention from the sheer capacity to dissipate resources toward the institutional capacity.

Our main contribution is to introduce a new form of commitment to burning-money signaling protocols that is robustly better than signaling through audience costs in settings where states cannot create sufficiently large audience costs to eliminate bluffing—full commitment and full separation through audience costs are both infeasible. This advantage is robust in the sense that it applies to any (interior) prior and in the sense that it features a lower probability of war. We show how such a protocol leads to the reversal of the meaning of money burning, where low-resolve types burn money while high types do not. Our results apply the techniques developed in \citet{BoleslavskyShadmehr_Signaling_with_Commitmentpdf} to crisis signaling settings—see \citet{horzCommunityInterventions2025},  \citet{chenSociallyOptimalExante2025}, and \citet{boleslavskyAlgorithmTransparency2026} for applications to law and online sales platforms.  We discuss the feasibility of such commitment protocols in the next section.

Our secondary contribution is to characterize equilibrium payoffs in settings with limited audience costs and compare them with payoffs in burning-money settings with and without commitment. A large theoretical literature investigates various aspects of crisis signaling, including the endogenous determination of audience costs \citep{smithInternationalCrisesDomestic1998,acharyaBehavioralFoundationAudience2019,ashworthAccountabilityPoliticiansInternational2024} and  variants of the canonical crisis signaling model with limited audience costs  \citep{tararLimitedAudienceCosts2013,kurizakiDetectingAudienceCosts2015}. For example, \citet[][p.~954]{kurizakiDetectingAudienceCosts2015} consider an exogenously fixed audience cost and show that as the audience cost shrinks, the equilibrium moves from separating to semi-separating to pooling. Our contribution here is to show that the defender’s payoff is non-monotone in the maximum feasible audience cost and that, when feasible audience costs are limited, the defender’s payoff under burning money without commitment exceeds that under audience costs if and only if the defender is a priori sufficiently likely to be low-resolve.

\section{Commitment Mechanisms for Signaling Resolve}
  
Our analysis highlights the value of constructing or strengthening mechanisms that enable commitment to mapping realized resolve into costly actions. In this section, we highlight two such mechanisms: laws and automated algorithms.

\paragraph{Laws} The government can pass laws that map the state's resolve (e.g., its valuation of the contested object or other parameters that determine its willingness to fight) into an action.
States already employ systems that combine preestablished plans with observable and costly actions that aim, at least partly, to signal resolve to defend against attacks. 
 For example, the U.S. DEFCON (defense readiness condition) system maps a level of threat to prespecified actions that are both costly and informative about U.S. intentions and resolve. Discussing DEFCON levels in a meeting of the Executive Committee of the National Security Council during the Cuban Missile Crisis, for example, Secretary of Defense Robert McNamara emphasized both the informational content and the material cost of U.S. readiness:\footnote{\href{https://history.state.gov/historicaldocuments/frus1961-63v11/d170}{\emph{Foreign Relations of the United States}, Volume XI, Document 170.}}
\begin{quote}
any reduction in the state of readiness of U.S. forces would be a sign to the Soviet Union. The Russians would know immediately if our state of alert was reduced. He suggested that no reduction be made today. However, SAC [Strategic Air Command] should stand down as soon as possible because the present alert involves burning out large amounts of spare parts.
\end{quote}

As another example, the federal regulation establishing Graduated Mobilization Response (GMR) states that ``[t]he GMR system enables the nation to approach mobilization planning and actions as part of the deterrent response capability and to use it to reduce the probability of conflict.''\footnote{\href{https://www.govinfo.gov/content/pkg/CFR-2025-title44-vol1/pdf/CFR-2025-title44-vol1-part334.pdf}{44 C.F.R.\ 334.3(c)}.} The 1990 \emph{National Security Strategy of the United States} similarly called for plans to ``develop graduated responses that will themselves signal U.S.\ resolve and thus contribute to deterrence.''\footnote{\href{https://history.defense.gov/Portals/70/Documents/nss/nss1990.pdf}{The White House, March 1990. \emph{National Security Strategy of the United States}.} Washington, DC: The White House, p.~27. See also \href{https://www.nationalacademies.org/read/28201/chapter/1}{National Research Council. 1991. \emph{Graduated Mobilization Response: Forging a Strong Partnership with Industry}.} Washington, DC: The National Academies Press.} Other mappings include DoD Force Protection Conditions (FPCON), NATO Alert System during the Cold War, and Department of Energy Graded Protection.

These examples illustrate relevant institutional building blocks---preplanned contingent responses, differentiated observable actions, and costly measures explicitly understood to convey information to an adversary. Our results push further in this direction. A state may benefit from committing---e.g., through legislation---to the mapping between subsequently realized conditions and costly signals, thereby surrendering some discretion over \emph{how it communicates} while retaining discretion over whether ultimately to fight.\footnote{In the burning-bridges approach, it is difficult to imagine how a country can commit to create audience costs, e.g., by requiring the government to make certain public statements, and no others, under certain conditions.}

\paragraph{Automated Algorithms}   The government can set up automated systems that evaluate resolve, map it into costly actions, and execute the actions. Automated systems have been in place since the Cold War and are analyzed particularly in the context of nuclear warfare. \cite{schwartzDelegatingDestruction2026} discuss specific programs and provide evidence that automated systems are perceived as a commitment device. In the burning money setting, automated algorithms could, e.g., issue warnings that induce costly actions, or even expend ammunition and equipment without causing casualties.\footnote{In other domains such as online sales platforms, computer algorithms are viewed as commitment mechanisms on the part of the seller \citep{boleslavskyAlgorithmTransparency2026}.}

\section{Model}

We begin by outlining the model informally before turning to its formal exposition. We consider deterrence settings with two canonical forms: burning money and audience costs (or burning bridges). The two settings have the same structure and differ only in their payoffs. In both cases, a challenger seeks to obtain an object currently held by a defender. If the challenger attacks and the defender does not defend, the object passes from the defender to the challenger. If the defender instead chooses to defend, war follows. War is costly to both sides, its outcome is uncertain, and the winner obtains the contested object.

The defender's valuation of the object is private information. Some defender types value the object enough to fight for it, while others do not. The defender's valuation can therefore be interpreted as its resolve to fight in order to preserve the status quo.

In the burning-money setting, the defender takes an early costly action that may signal how much it values the object. In the audience-cost setting, the defender instead takes an early action that becomes costly only if the challenger later attacks and the defender backs down rather than fighting.

In the burning-money setting, the defender can commit in advance to an action that depends on its valuation of the object. For example, when the rule of law is strong in a country, it can pass laws requiring certain costly actions upon learning the value of the contested object. The executive agencies responsible for carrying out those actions then act accordingly. By contrast, it is harder to interpret such a commitment power in audience-cost settings.

Formally, there is a defender and a challenger. First, the defender chooses an action $s\in S=\{0,s_1,s_2,\cdots,s_N\}$ with $0<s_1<s_2<\cdots<\overline s:=s_N$.  Next, the challenger observes the defender's action and then chooses $a\in\{0,1\}$, where $1$ means ``attack" and $0$ means ``don't attack." If $a=0$, the game ends. If $a=1$, then the defender chooses $b\in\{0,1\}$, where $1$ means ``fight" and $0$ means ``don't fight,'' payoffs are realized, and the game ends.

When we need to distinguish between the settings, we use the superscript $b$ for the burning-money setting and $t$ for the audience-cost setting. For example, $S^b$ refers to the set of available actions (signals) and $s^b_1$ to the smallest strictly positive action (signal) in the burning money setting, and $\overline s^t:=s^t_N$ refers to the largest action (signal) in the audience cost setting.

 There is a state of the world $(\omega,v_c)\in\{\omega^H,\omega^L\}\times \mathbb{R}_+$ with $0<\omega^L<\omega^H$, where $\omega, v_c>0$ are the values of the contested prize to the defender and challenger, respectively. We refer to the defender with $\omega=\omega^H$ as the high type and the one with $\omega^L$ as the low type.   Let $u(a,b,v_c)$  be the challenger's payoff, $v^b(s,a,b,\omega)$ be the defender's payoff in the burning money setting, and $v^t(s,a,b,\omega)$  be the defender's payoff in the audience-cost setting:
\begin{eqnarray}
u(a,b,v_c)&=&a\left(b\ ((1-p)v_c-c_c)+(1-b) v_c \right)\label{eqn:u}\\
v^b(s,a,b,\omega)&=& (1-a) \omega+ ab (p \omega-c_d)-s\label{eqn:v-b}\\
v^t(s,a,b,\omega)&=&(1-a) \omega+ a b (p \omega-c_d)-a(1-b)s,\label{eqn:v-t}
\end{eqnarray}
where $p\in(0,1)$ is the probability that the defender wins a war with the challenger, and $c_d,c_c>0$ are the costs of war to the defender and challenger, respectively.

The first component of the state, $\omega$, is the private information of the defender, and its second component, $v_c$, is the private information of the challenger. 
$\omega$ and $v_c$ are independent from each other, and players share a prior that $v_c\sim F$ and that $\Pr(\omega=\omega^H)=\mu_0\in(0,1)$.
$F$ is the CDF with density $f$, where $F(0)=0$, and $F(x)$ is twice-differentiable and $f(x)>0$ for $x\in\left(0,c_c/(1-p)\right)$.
We denote a posterior belief after a signal  $s$ by $\mu_s=\Pr(\omega=\omega^H|s)$.
We denote by $\widehat v^b(\mu_s,s)$ and $\widehat v^t(\mu_s,s)$ the defender's expected (interim) payoff, given a signal $s$ with the associated posterior $\mu_s$, in the burning-money and audience-cost settings, respectively.

The game without commitment proceeds as follows:
\begin{enumerate}
    \item Nature determines $\omega$ and $v_c$.
    \item The defender observes $\omega$ and chooses $s$.
    \item The challenger observes $v_c$ and $s$ and chooses $a$.
    \item The defender observes $a$ and chooses $b$.
    \item Payoffs are realized, and the game ends.
\end{enumerate}
 
In the burning money setting, we further analyze an environment in which the defender, before learning the state $\omega$, commits to a strategy of choosing $s$. That is, the defender, before observing $\omega$, commits to a strategy $\pi(s|\omega):\{\omega^L,\omega^H\}\rightarrow\Delta(S)$ that maps $\omega$ to a probability distribution over $S$.\footnote{Commitment power in audience-cost settings seems substantively implausible. For theoretical completeness, the results for the audience-cost setting with commitment are in Proposition \ref{Prop_AC_2sided} in  Appendix \ref{app:C}.} The timing of the game in this setting is as follows:
\begin{enumerate}
\item The defender commits to $\pi(s|\omega)$ and this mapping is observed by the challenger.
    \item Nature determines $\omega$ and $v_c$.
    \item The defender observes $\omega$ and chooses $s$ according to $\pi(s|\omega)$.
    \item The challenger observes $v_c$ and $s$ and chooses $a$.
    \item The defender observes $a$ and chooses $b$.
    \item Payoffs are realized, and the game ends.
\end{enumerate}

We focus on sender-optimal perfect Bayesian equilibria that satisfy the D1 criterion \citep{banksEquilibriumSelectionSignaling1987}. Given parameters, a sender-optimal equilibrium is one that maximizes the sender's (defender's) ex ante payoff.

The following notation will simplify the exposition: \begin{equation*}
    W^i=p\omega^i-c_d,\, i\in\{L,H\},\, \delta=\omega^H-W^H,\, v^*_c(\mu)=\frac{\mu c_c}{1-\mu p},\, F_{\mu}= F(v_c^*(\mu)),\end{equation*}
where $\mu$ is the challenger's belief that $\omega=\omega^H$. $W^i$ is a type $i$ defender's payoff from war, and $\delta$ is the high type's expected marginal gain from deterring the challenger over going to war with the challenger. If $W^L,\, W^H>0$, neither signaling through burning money nor creating audience costs has any value. If  $W^L,\, W^H<0$, signaling through burning money has no value and hence the comparison is straightforward. Thus, we focus on settings in which 
\begin{equation*}
    W^L<0<W^H,
\end{equation*} 
so that, under the default of $s=0$ in \eqref{eqn:v-b}-\eqref{eqn:v-t}, a high-type defender will defend if attacked, but the low-type defender will not. 

We collect the assumptions used in the analysis below. We specify the parts that are used in each proposition.

\begin{assumption}
    \begin{enumerate}[label={\textup{(\roman*)}},ref=\roman*]
        \item\label{assum_convex} The defender's interim payoff in the burning-money setting, $\widehat v^b(\mu,s)$ in equation \eqref{payoff_uncertain_BM}, is strictly convex in $\mu$.
        \item\label{assum_sb1} The smallest strictly positive signal in the burning-money setting is sufficiently small:     
        \begin{empheq}[left={s^b_1 < \empheqlbrace}]{align}
            &F_1\, \omega^L \tag{ii-a}\label{assum_sb1_F1}\\
            & \left(F_1-F_{0.5}\right)\delta-F_{0.5}\ \omega^L \tag{ii-b}\label{assum_sb1_1/2}\\
            &\left(F_1-\frac{\overline{s}^t}{\omega^L+\overline{s}^t}\right)\, \delta\tag{ii-c}\label{assum_sb1_sbart}
        \end{empheq}
        \item \label{assum_rich} The set of feasible signals in the burning-money setting is rich in the following sense:
        \begin{align}
            & F_1\,\omega^L\in S^b \tag{iii-a}\label{assum_rich_F1}\\
            & S^b\cap \left(s,s+(F_1-F_{\mu_0}) \delta\right)\neq\emptyset~\text{for any }s\in S^b\cap\left[0,F_{\mu_0}\, \omega^L\right] \tag{iii-b}\label{assum_rich_s+Delta}
        \end{align}
    \end{enumerate}
    \label{assum_2sided}
\end{assumption}

Assumption \ref{assum_2sided}-(\ref{assum_convex}) is satisfied with common distributions of the challenger's valuation $v_c$ such as uniform, exponential, and lognormal in natural settings where the challenger's expected valuation normalized by the cost of war, $\mathbb{E}[v_c]/c_c$, is sufficiently large---see Proposition \ref{prop_convex} in the Appendix. 

The remaining assumptions are technical, concerning the richness of the action (signal) space in the burning-money setting $S^b$. We state them to invoke the exact requirements in each result; \ref{assum_2sided}-(\ref{assum_sb1}) means that small actions (signals) are feasible. The right-hand side is strictly positive: $\omega^L,F_1>0$; \ref{assum_2sided}-(\ref{assum_sb1_1/2}) is only invoked together with  \ref{assum_2sided}-(\ref{assum_convex}), under which the right-hand side of (\ref{assum_sb1_1/2}) is strictly positive (see Remark \ref{remark_v-c_positive} in the Appendix);
 \ref{assum_2sided}-(\ref{assum_sb1_sbart}) is only invoked when $\overline s^t$ is sufficiently small that the right-hand side is strictly positive.
 \ref{assum_2sided}-(\ref{assum_rich}) ensures that the results are not driven by the coarseness of the action (signal) space; e.g., (\ref{assum_rich_F1}) ensures that the signal that maximizes the defender's payoff among separating equilibria is feasible.

\section{Burning Money and the Reversal of Its Meaning}

We begin with the burning-money setting. Let $\widehat a(\mu_s,s,v_c)$ and $\widehat b(\omega,s)$ be the equilibrium values of $a$ and $b$, respectively. From \eqref{eqn:v-b} and $W^L<0<W^H$, the defender defends against an attack if and only if $\omega=\omega^H$: \vspace{-1.5em}
\begin{equation}\label{eqn:D-defend}
    \widehat b(\omega,s)=\mathbbm{1}(\omega=\omega^H).
\end{equation}
Anticipating this, from \eqref{eqn:u}, a challenger with a belief $\mu$ attacks if and only if $v_c>v^*_c(\mu)$. Therefore, from the perspective of the defender at the beginning of the game, upon observing a signal $s$, the challenger will attack with probability
\begin{equation}
1-F_{\mu_s},\label{eqn:C-prob-attack}
\end{equation}
where $\mu_s$ is the challenger's belief upon observing signal $s$. Combining \eqref{eqn:v-b}, \eqref{eqn:D-defend}, and \eqref{eqn:C-prob-attack}, the defender's interim payoff is
\begin{eqnarray}\label{payoff_uncertain_BM}
\widehat v^b(\mu_s,s) &=& \mathbb{E}_{\mu_s,v_c}\left[v^b(s,\widehat a(\mu_s,s,v_c),\widehat b(\omega, s),\omega)\right]\nonumber\\
&=& -s+\left(1-F_{\mu_s}\right)\ W^H\ \mu_s+F_{\mu_s} (\mu_s \omega^H+(1-\mu_s)\omega^L).
\end{eqnarray}

\begin{prop}[Burning Money without Commitment]\label{benchmark_BM2}
Suppose Assumption \ref{assum_2sided}-(\ref{assum_rich_F1}) and (\ref{assum_rich_s+Delta}) hold. Consider the burning-money setting without commitment.  In equilibrium, the defender chooses $s=F_1\omega^L$ when $\omega=\omega^H$ and $s=0$ when $\omega=\omega^L$; the challenger attacks when $s<F_1\omega^L$ or $v_c>v^*_c(1)$, upon which the defender fights if and only if $\omega=\omega^H$.
    
\end{prop}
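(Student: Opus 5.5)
The plan is to verify that the described profile is a PBE satisfying D1, and then to show that among all D1 equilibria it is the one the defender prefers ex ante. Throughout I take the equilibrium actions $\widehat b$ from \eqref{eqn:D-defend} and the attack probability \eqref{eqn:C-prob-attack}. The interim payoff of type $i$ from sending $s$ when the challenger holds belief $\mu_s$ is then $F_{\mu_s}\omega^i+(1-F_{\mu_s})\max\{W^i,0\}-s$. For the low type this equals $F_{\mu_s}\omega^L-s$. For the high type it equals $W^H+F_{\mu_s}\delta-s$.

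\textbf{Equilibrium check.} Specify off-path beliefs as $\mu_s=0$ for every $s\neq F_1\omega^L$, $s\neq 0$.
\begin{itemize}
\item The low type gets $0$ from $s=0$ and $F_1\omega^L-F_1\omega^L=0$ from mimicking, so it weakly prefers $s=0$. Any other $s$ yields $-s<0$.
\item The high type gets $W^H+F_1\delta-F_1\omega^L$ from $s=F_1\omega^L$ and $W^H$ from $s=0$. It therefore needs $F_1\delta\ge F_1\omega^L$. This step needs a check: I expect it to follow from $\omega^H-W^H=(1-p)\omega^H+c_d>\omega^L$, which holds because $W^L<0$ gives $c_d>p\omega^L$.
\item Since $F_1=F(v_c^*(1))=F(c_c/(1-p))$, the challenger attacks after $s=F_1\omega^L$ exactly when $v_c>v_c^*(1)$. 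It attacks after $s<F_1\omega^L$ (belief $0$) whenever $v_c>0$.
\end{itemize}

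\textbf{D1 refinement.} For any $s\in(0,F_1\omega^L)$ I compare the two types' sets of deviation-inducing responses. Index responses by the induced attack-deterrence probability $F\in[0,F_1]$. The low type gains from $s$ iff $F\omega^L-s>0$, and the high type gains iff $F\delta-s>F_1\delta-F_1\omega^L$. Because $\delta>\omega^L$, the high type's threshold $F$ is strictly lower whenever $s<F_1\omega^L$. D1 then forces $\mu_s=1$, which breaks the candidate. Hence any D1 separating equilibrium must have the high type at $s\ge F_1\omega^L$, and among these the cheapest is $F_1\omega^L$. This signal is feasible by (\ref{assum_rich_F1}). For $s>F_1\omega^L$ the same comparison puts belief $1$ there, and these deviations are unprofitable since they are costlier.

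\textbf{Ruling out pooling and semi-separation.} This is the main obstacle. Suppose both types send some $s$ with $\mu_s<1$. The low type must weakly prefer it, so $s\le F_{\mu_s}\omega^L\le F_{\mu_0}\omega^L$ in the pooling case. By (\ref{assum_rich_s+Delta}) there exists a signal $s'\in(s,s+(F_1-F_{\mu_s})\delta)$. The low type does not gain from $s'$ even at belief $1$ when $s'$ is large enough relative to $F_1\omega^L-F_{\mu_s}\omega^L$; I need to pick $s'$ carefully here. The high type strictly gains at belief $1$. The D1 set comparison again gives $\mu_{s'}=1$, so the high type deviates, a contradiction. Semi-separating profiles are handled the same way, applied at the signal where the low type mixes. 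I would do the bookkeeping via the difference in payoff gains $(F_1-F_{\mu_s})(\delta-\omega^L)>0$.

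Finally, since all D1 equilibria reduce to this outcome, sender-optimality is immediate.
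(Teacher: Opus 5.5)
Your architecture matches the paper's: verify the separating profile and its D1 beliefs, then eliminate pooling and semi-separating profiles with D1 deviations supplied by the richness assumptions. However, your D1 verification of the candidate is backwards, and as written it refutes the proposition. Take an off-path $s\in(0,F_1\omega^L)$. The high type, whose equilibrium payoff is $W^H+F_1\delta-F_1\omega^L$, weakly gains iff $F\ge F_1-(F_1\omega^L-s)/\delta$. The low type strictly gains iff $F>s/\omega^L=F_1-(F_1\omega^L-s)/\omega^L$. Because $\delta>\omega^L$, the high type's threshold is strictly \emph{higher}, not lower. For the high type this is a downward deviation, and the saving $F_1\omega^L-s$ is small relative to its stake $\delta$ in deterrence. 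So the high type's weak-gain set lies inside the low type's strict-gain set. D1 therefore deletes the high type and forces $\mu_s=0$, and that is exactly what sustains the equilibrium (the paper's Step 1). Your conclusion ``D1 forces $\mu_s=1$, which breaks the candidate'' would mean the stated profile fails D1. Three related corrections. First, the restriction $s'\ge F_1\omega^L$ comes from the low type's no-mimic constraint, not from D1. Second, for $s>F_1\omega^L$ both gain sets are empty: they require $F\ge s/\omega^L>F_1$ and $F\ge F_1+(s-F_1\omega^L)/\delta>F_1$, respectively. So D1 imposes nothing there rather than ``putting belief $1$'' there. Third, $F_1\omega^L$ is selected from the separating interval $[F_1\omega^L,F_1\delta]$ by sender-optimality.

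In the elimination step, (iii-b) does not cover the semi-separating profile in which the low type mixes. There the common signal is $s'=F_{\mu_{s'}}\omega^L$ with $\mu_{s'}>\mu_0$. So $s'>F_{\mu_0}\omega^L$ lies outside the range where (iii-b) applies. Moreover, the needed window $(s',s'+(F_1-F_{\mu_{s'}})\delta)$ is shorter than the one (iii-b) supplies. The paper instead uses the off-path signal $F_1\omega^L$, which is feasible by (iii-a). It lies in that window because $F_1\omega^L-s'=(F_1-F_{\mu_{s'}})\omega^L$, D1 assigns it belief $1$, and the high type's gain is $(F_1-F_{\mu_{s'}})(\delta-\omega^L)>0$, exactly the quantity you wrote down. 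You also need not ``pick $s'$ carefully'' so that the low type cannot gain even at belief $1$. For an upward deviation from a common signal $s$ to $s''$, the low type's weak-gain threshold $F_{\mu_s}+(s''-s)/\omega^L$ exceeds the high type's strict-gain threshold $F_{\mu_s}+(s''-s)/\delta$. So D1 gives belief $1$ to any feasible off-path $s''\in(s,s+(F_1-F_{\mu_s})\delta)$. Your stronger requirement would need a feasible signal in $[s+(F_1-F_{\mu_s})\omega^L,\,s+(F_1-F_{\mu_s})\delta)$, which (iii-b) does not guarantee. Finally, your one-signal argument does not cover profiles in which both types mix over two common signals $s'<s''$. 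The paper rules these out because the two indifference conditions give $s''-s'=(F_{\mu_{s''}}-F_{\mu_{s'}})\omega^L=(F_{\mu_{s''}}-F_{\mu_{s'}})\delta$, which is impossible when $\delta>\omega^L$ and $s''>s'$.
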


Proposition \ref{benchmark_BM2} shows that burning-money settings without commitment lead to separation, in which high types choose the least costly action that prevents low types from imitating. It is a canonical benchmark that appears in different forms in the literature \citetext{\citealp[pp.~75--76]{fearonSignalingForeignPolicy1997}; \citealp[p.~17]{reichWhenCanStates}; \citealp[p.~33]{slantchev2011book}}.
Other equilibria where low types pool with high types are not sustainable by D1 refinement, which requires the challenger to have reasonable beliefs off the equilibrium path, and by the sender optimal equilibrium selection, so that the low type separates when indifferent.\footnote{This latter choice is a technical assumption with no substantive consequences when the action (signal) space is sufficiently rich. Without it, high types would choose a larger signal than $F_1\omega^L$ to separate themselves.} 

The equilibrium minimizes the chances of war by deterring all deterrable challengers when the defender is of high type, and through concession when the defender is of low type. However, information transmission is costly as high types must burn just enough money to preclude imitation by low types. 

The cost of this approach stems from the need for the high type to take sufficiently costly action. Our key observation is that if the defender could, ex ante, before learning its type, commit to a burning-money strategy, it could maintain separation while reducing the signaling costs of information transmission. This commitment amounts to the defender ex ante choosing a mapping $\pi(s|\omega)$ from its type $\omega\in\{\omega^L,\omega^H\}$ into a probability distribution over its actions $S^b$.

Figure \ref{fig:burning-money-with-commitment} illustrates the logic. Dashed lines are the defender's interim payoffs $\hat v^b(\mu,s)$ for $s\in S^b=\{0,0.1,0.15\}$. $\hat v^b(\mu,s)$ is the defender's payoff if it takes action $s$ and the belief that it is a high type is $\mu$. Of course, not all such beliefs are feasible: a rational challenger makes inferences about the defender's type upon observing the defender's action $s$. The defender, of course, would like to somehow induce the challenger to believe that it is a high type by doing the free ``action'' of doing nothing $s=0$---it wants to be at the most extreme northeast corner: $\widehat v^b(\mu=1,s=0)$. But it cannot: If the defender always does nothing, the challenger maintains the prior $\mu_0$. What can the defender achieve and how? Proposition \ref{Prop_BM_2sided} describes the answer.

\begin{prop}[Burning Money with Commitment]\label{Prop_BM_2sided}
     Suppose Assumption \ref{assum_2sided}-(\ref{assum_convex}), (\ref{assum_sb1_F1}), and (\ref{assum_sb1_1/2}) hold. Consider the burning-money setting with commitment. In equilibrium,
    \begin{itemize}
        \item If $\mu_0\le1/2$, the defender sends $s=s^b_1$ when $\omega=\omega^H$ and $s=0$ when $\omega=\omega^L$. The challenger challenges if and only if $s=0$ or $v_c>v^*_c(1)$, upon which the defender fights if and only if $\omega=\omega^H$.
        \item If $\mu_0\ge1/2$, the defender sends $s=0$ when $\omega=\omega^H$ and $s=s^b_1$ when $\omega=\omega^L$. The challenger challenges if and only if $s=s^b_1$ or $v_c>v^*_c(1)$, upon which the defender fights if and only if $\omega=\omega^H$.
    \end{itemize}
    For any $\mu_0\in(0,1)$, commitment power strictly improves the defender's ex ante payoff.  \end{prop}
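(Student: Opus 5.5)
The plan is to treat the commitment problem as a Bayesian persuasion problem in which each signal carries its own cost. With commitment, the defender picks $\pi(s|\omega)$, which induces a distribution over pairs $(s,\mu_s)$ with $\sum_s \Pr(s)\mu_s=\mu_0$. After that the challenger and the defender play as in \eqref{eqn:D-defend}--\eqref{eqn:C-prob-attack}: type $H$ fights, type $L$ concedes, and the challenger attacks iff $v_c>v_c^*(\mu_s)$. Once the mapping is fixed, no incentive constraint for the defender remains, so D1 plays no role. The ex ante payoff is therefore $\sum_s \Pr(s)\,\widehat v^b(\mu_s,s)=\sum_s\Pr(s)\big(g(\mu_s)-s\big)$, where $g(\mu):=\widehat v^b(\mu,0)$. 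The problem becomes: choose a splitting of $\mu_0$ across distinct signals, each signal used at most once, to maximize $\mathbb{E}[g(\mu_s)]-\mathbb{E}[s]$.

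\textbf{Step 1 (full revelation is optimal).} By Assumption \ref{assum_2sided}-(\ref{assum_convex}), $g$ is strictly convex. For any information structure, $\mathbb{E}[g(\mu_s)]\le \mu_0 g(1)+(1-\mu_0)g(0)$, with equality iff every posterior lies in $\{0,1\}$. The cost term is minimized by placing as much mass as possible on $s=0$. Any structure that reveals the type needs at least two distinct signals, so its cost is at least $s_1^b\cdot\min\{\mu_0,1-\mu_0\}$. This bound is attained by sending $0$ to the more likely type and $s_1^b$ to the less likely one, which gives the two cases $\mu_0\le 1/2$ and $\mu_0\ge 1/2$.

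\textbf{Step 2 (no garbled structure beats it).} Alternatives that do not fully separate are what must be ruled out, and this is the main obstacle. The cleanest alternative is pooling everything on $s=0$, with payoff $g(\mu_0)$. More generally, a structure can leave one posterior interior while the remaining mass sits on positive signals. Any structure that uses a positive signal with positive probability pays at least $s_1^b$ times that probability. So it suffices to compare the separating value against structures supported on $\{0,s_1^b\}$, plus pure pooling on $0$. Suppose posterior $\mu$ is placed on $0$ and the other posterior on $s_1^b$. The convexity gap, i.e. the loss from not separating, then has to be weighed against the savings in cost, which is at most $s_1^b$ times a probability mass. I would use the explicit form
\[
g(\mu)=\mu W^H+F_\mu\big(\mu\delta+(1-\mu)\omega^L\big),\qquad g(1)-g(\mu)\ \ge\ \ldots,
\]
and bound the gap from below by chords. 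Assumption (\ref{assum_sb1_1/2}), $s_1^b<(F_1-F_{1/2})\delta-F_{1/2}\omega^L$, is designed for exactly this comparison at the worst case $\mu_0=1/2$. Strict convexity then makes the chord comparison hold for every $\mu_0$. Assumption (\ref{assum_sb1_F1}) guarantees that paying $s_1^b$ on the minority type costs less than the informational gain, because $g(1)-g(0)$ exceeds $F_1\omega^L$ once the terms are bounded. The step I expect to be delicate is turning the convexity argument into this bound uniformly in $\mu_0$. I would first reduce to the two-signal structures and then argue that the objective is convex in the split, so that it is maximized at an extreme point (full separation or pooling).

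\textbf{Step 3 (the strict gain over no commitment).} Proposition \ref{benchmark_BM2} gives the no-commitment payoff $\mu_0(g(1)-F_1\omega^L)+(1-\mu_0)g(0)$. The commitment payoff computed above is $\mu_0 g(1)+(1-\mu_0)g(0)-s_1^b\min\{\mu_0,1-\mu_0\}$. Since $s_1^b<F_1\omega^L$ by (\ref{assum_sb1_F1}) and $\min\{\mu_0,1-\mu_0\}\le\mu_0$, the commitment payoff is strictly higher for every $\mu_0\in(0,1)$. Finally, I would read off the equilibrium behavior from the posteriors $\mu\in\{0,1\}$: the challenger attacks with certainty on the low-type signal and attacks on the high-type signal only when $v_c>v_c^*(1)$. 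This is the behavior described in the statement.
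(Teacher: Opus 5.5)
Your framing matches the paper. With commitment there are no incentive constraints, each signal induces one posterior, and the payoff is $\sum_s\tau_s\bigl(g(\mu_s)-s\bigr)$ with $g(\mu)=\widehat v^b(\mu,0)$ strictly convex and $g(0)=0$. Assumption (ii-b) is exactly $2g(\frac{1}{2})<g(1)-s_1^b$, so your reading of it is right, and Steps 1 and 3 are essentially fine. The gap is Step 2, which carries the entire optimality claim; the route you sketch does not work as stated.

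\emph{The reduction is invalid.} Charging every positive signal only $s_1^b$ is a legitimate upper bound. But in that relaxed problem the costly mass can still sit on several distinct posteriors, for instance $s_1^b$ inducing $0$ and $s_2^b$ inducing $1$ next to an interior posterior on $s=0$. You cannot relabel them all as the single signal $s_1^b$ without merging posteriors and losing the convexity gain. So structures supported on $\{0,s_1^b\}$ do not bound the relaxed problem. The correct relaxation pushes all costly mass to posteriors $\{0,1\}$. This leaves the objective $\tau h(\mu)+\mu_0 g(1)-s_1^b$, with $h(\mu)=g(\mu)-\mu g(1)+s_1^b$ and constraints $\tau\mu\le\mu_0$, $\tau(1-\mu)\le 1-\mu_0$.

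\emph{The extreme-point claim is unproven.} ``The objective is convex in the split'' is neither proved nor precisely formulated. Reaching the endpoints in that relaxation would require quasi-convexity of ratios such as $(g(\mu)+s_1^b)/\mu$, and pooling would still have to be compared with separation afterwards. Also, (ii-a) plays no role in optimality; it enters only the strict gain over no commitment.

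The missing idea, which the paper uses, avoids any reduction. Since $s=0$ is the only free signal and it induces one posterior, split on whether $\mu_{s=0}\le\frac{1}{2}$ or $\mu_{s=0}\ge\frac{1}{2}$. In each case, bound every signal's interim payoff by one linear function of its own posterior.
\begin{itemize}
\item For any $s>0$, convexity and $g(0)=0$ give $g(\mu)-s\le \mu g(1)-s_1^b$.
\item If $\mu_{s=0}\le\frac{1}{2}$, the chord through $(0,0)$ and $(\frac{1}{2},g(\frac{1}{2}))$ together with (ii-b) gives $g(\mu_{s=0})\le 2\mu_{s=0}\,g(\frac{1}{2})\le \mu_{s=0}\bigl(g(1)-s_1^b\bigr)$. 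The positive signals are covered because $\mu g(1)-s_1^b\le\mu\bigl(g(1)-s_1^b\bigr)$.
\item If $\mu_{s=0}\ge\frac{1}{2}$, the chord on $[\frac{1}{2},1]$ together with (ii-b) gives $g(\mu_{s=0})\le \mu_{s=0}\, g(1)-(1-\mu_{s=0})s_1^b$. This line also dominates $\mu g(1)-s_1^b$.
\end{itemize}
Averaging with Bayes plausibility gives $\mu_0\bigl(g(1)-s_1^b\bigr)$ in the first case and $\mu_0 g(1)-(1-\mu_0)s_1^b$ in the second. Hence the bound $\mu_0 g(1)-\min\{\mu_0,1-\mu_0\}s_1^b$ holds for arbitrary structures with any number of signals.

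A smaller point on Step 3: your direct payoff comparison invokes Proposition~\ref{benchmark_BM2}, whose characterization needs (iii-a) and (iii-b). Neither is among this proposition's hypotheses. The paper instead argues that the optimal commitment mappings are not incentive compatible without commitment. When $\mu_0\le\frac{1}{2}$, the low type gains $F_1\omega^L-s_1^b>0$ by mimicking. When $\mu_0\ge\frac{1}{2}$, it saves $s_1^b$ by switching to $0$. Strictness in the bounds above makes these mappings the unique maximizers, which is what makes that argument go through.
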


\begin{figure}[t]
\centering\footnotesize
\begin{tikzpicture}[scale=1.3,transform shape]
    \fill[
        pink!40,
        opacity=0.55,
        scale=5
    ]
    plot[
        domain=0:0.5,
        smooth,
        variable=\x
    ]
    ({\x},
    {1.5*\x*(0.4*1.5-0.5
    +(0.6/(2.5*(1-0.4)))
      *((1-0.4)*1.5+0.5)-0.1)})
    --
    plot[
        domain=0.5:1,
        smooth,
        variable=\x
    ]
    ({\x},
    {1.5*(\x*(0.4*1.5-0.5
    +(0.6/(2.5*(1-0.4)))
      *((1-0.4)*1.5+0.5)+0.1)-0.1)})
    --
    plot[
        domain=1:0,
        smooth,
        variable=\x
    ]
    ({\x},
    {1.5*(\x*(0.4*1.5-0.5
    +((\x*0.6)/(2.5*(1-(\x*0.4))))
      *((1-0.4)*1.5+0.5))
    +(1-\x)*
      (((\x*0.6)/(2.5*(1-(\x*0.4))))*0.4)-0.15)})
    -- cycle;

    \draw[very thin,->]
        (0,0) -- (5.15,0) node[right] {$\mu$};

    \draw[very thin,->]
        (0,-1.35) -- (0,5.3);

    \node[below left] at (0,0) {$0$};

    \draw[very thin]
        (2.5,0.04) -- (2.5,-0.04)
        node[below] {$\frac12$};

    \draw[very thin]
        (3.75,0.04) -- (3.75,-0.04)
        node[below] {$\mu_0$};

    \draw[very thin]
        (5,0.04) -- (5,-0.04)
        node[below] {$1$};

    \draw[
        black,
        thin,
        densely dashed,
        scale=5,
        domain=0:1,
        smooth,
        variable=\x
    ]
    plot ({\x},
    {1.5*(\x*(0.4*1.5-0.5
    +((\x*0.6)/(2.5*(1-(\x*0.4))))
      *((1-0.4)*1.5+0.5))
    +(1-\x)*
      (((\x*0.6)/(2.5*(1-(\x*0.4))))*0.4))});

    \draw[
        black,
        thin,
        densely dashed,
        scale=5,
        domain=0:1,
        smooth,
        variable=\x
    ]
    plot ({\x},
    {1.5*(\x*(0.4*1.5-0.5
    +((\x*0.6)/(2.5*(1-(\x*0.4))))
      *((1-0.4)*1.5+0.5))
    +(1-\x)*
      (((\x*0.6)/(2.5*(1-(\x*0.4))))*0.4)-0.1)});

    \draw[
        black,
        thin,
        densely dashed,
        scale=5,
        domain=0:1,
        smooth,
        variable=\x
    ]
    plot ({\x},
    {1.5*(\x*(0.4*1.5-0.5
    +((\x*0.6)/(2.5*(1-(\x*0.4))))
      *((1-0.4)*1.5+0.5))
    +(1-\x)*
      (((\x*0.6)/(2.5*(1-(\x*0.4))))*0.4)-0.15)});

    \draw[
        very thick,
        black,
        scale=5,
        domain=0:0.5,
        smooth,
        variable=\x
    ]
    plot ({\x},
    {1.5*\x*(0.4*1.5-0.5
    +(0.6/(2.5*(1-0.4)))
      *((1-0.4)*1.5+0.5)-0.1)});

    \draw[
        very thick,
        black,
        scale=5,
        domain=0.5:1,
        smooth,
        variable=\x
    ]
    plot ({\x},
    {1.5*(\x*(0.4*1.5-0.5
    +(0.6/(2.5*(1-0.4)))
      *((1-0.4)*1.5+0.5)+0.1)-0.1)});

    \draw[
        thin,
        black,
        dashed
    ]
    (2.5,0) -- (2.5,2.1);

    \draw[
        thick,
        black,
        densely dotted
    ]
    (0,-0.75) -- (2.5,2.1);

    \draw[
        thick,
        black,
        densely dotted
    ]
    (2.5,2.1) -- (5,4.2);

    \coordinate (prior) at (3.75,3.525);

    \filldraw[black]
        (prior) circle (2.2pt);

    \node[anchor=west] at (0.18,0.16)
        {$s=0$};

    \node[anchor=west] at (0.08,-0.58)
        {$s=0.1$};

    \node[anchor=west] at (0.08,-1.1)
        {$s=0.15$};

\end{tikzpicture}

\caption{Burning Money with Commitment}
\label{fig:burning-money-with-commitment}
\end{figure}

Proposition \ref{Prop_BM_2sided} shows that the defender continues to separate with commitment power but uses very different signals to do so. When the high type is unlikely, as before, the low type does nothing, while the high type takes a costly action. However, this action is the least costly positive action. It no longer needs to be sufficiently costly to prevent imitation by the low type. It only needs to be an action different from the free ``doing nothing'' action of the low type. This already saves the defender the cost of money burning. 

Remarkably, when the high type is likely, the defender's behavior changes starkly. Now, the high type will take no action, while the low type takes a costly action. That is, burning money now reveals that the defender has low resolve. Intuitively, if the defender wants separation, it only needs to ensure that different types take different actions. To save money, these actions should be either to do nothing or to take the least costly action. But which type should take which action? The type that is ex ante less likely should take the more expensive action, and the type that is ex ante more likely should take the less expensive action.

Proposition \ref{Prop_BM_2sided} applies the techniques developed in \citet{BoleslavskyShadmehr_Signaling_with_Commitmentpdf}. We now describe the logic of this result in more detail. Recall that the defender would like to burn no money $s=0$ and at the same time induce the challenger to believe it is a high type. But it cannot; e.g., if all types always do nothing, then no information is transmitted, and the challenger's belief remains at the prior. What restrictions does the defender have, and how can it maximize its payoffs subject to those restrictions?

Adopting the analyses of \citet{kamenicaBayesianPersuasion2011} and \citet{BoleslavskyShadmehr_Signaling_with_Commitmentpdf}, the defender faces two simple restrictions. First, beliefs and expectations must satisfy the law of iterated expectations. In particular, if action (signal) $s$ induces a belief $\mu_s$ and if action $s$ is taken with probability $\tau_s$, then: $\mathbb{E}_{\tau}[\mu_s]=\mathbb{E}_{\tau}[ \mathbb{E}[\mathbf{1}\{\omega=\omega^H\}|s]  ]=\mathbb{E}_\omega[\mathbf{1}\{\omega=\omega^H\}]=\mu_0$. Because action $s$ induces $\mu_s$, $(\tau_s)_{s\in S^b}$ is effectively a distribution over posteriors. Moreover, any combination of posteriors that satisfies the law of iterated expectations can be induced by an appropriately chosen mapping $\pi(s|\omega)$. In particular, by (ex ante) committing to a signaling strategy $\pi(s|\omega):\{\omega^L,\omega^H\}\rightarrow \Delta\left(S^b\right)$, the defender can ``split'' any prior belief of the challenger $\mu_0$ into several posterior beliefs:
\begin{equation}\label{eqn:bayes-plausibility}
    \sum_{s\in S^b}\tau_s\, \mu_s=\mu_0\ \ \mbox{s.t.}\ \ \sum_{s\in S^b}\tau_s=1\ \ \mbox{and}\ \  \tau_s\geq0,
\end{equation}
where $\tau_s$ here is the probability that action (signal) $s\in S^b$ is sent, and hence the posterior $\mu_s$ is induced.
Second, each action $s\in S^b$ can induce only one posterior belief. It follows that the defender can produce any convex combination of its interim payoffs $\{\widehat v^b(\mu_s,s)\, :\, s\in S^b\}$ that satisfies the law of iterated expectations:
\begin{equation}\label{eqn:defender-payoff-in-signaling-w-commitment}
    \max_{(\mu_s,\tau_s)_{s\in S^b}}\sum_{s\in S^b} \tau_s\, \widehat v^b(\mu_s,s)\ \ \mbox{s.t.}\ \ \eqref{eqn:bayes-plausibility}.
\end{equation}
The optimization \eqref{eqn:defender-payoff-in-signaling-w-commitment} is simply the convex combination of the set of all defender's interim payoffs with the restriction that each interim payoff must be used at most once: a convex combination cannot use two points from the same interim payoff graph, because that would imply that the same action (signal) induces two different posterior beliefs. This is the topological join of a set of interim payoff graphs, $\{\widehat v^b(\mu_s,s)\, :\, s\in S^b\}$. The shaded area in Figure \ref{fig:burning-money-with-commitment} shows the join. At any given prior $\mu=\mu_0$, the defender wants to create the maximum value of the join, the join envelope, shown in the figure by solid black lines. We note that, in contrast to the Bayesian persuasion literature, in which the convex hull and concave envelope are the relevant objects, the join set is not generally convex and the join envelope is not generally concave---e.g., it is convex here.

Figure \ref{fig:burning-money-with-commitment} provides a simple characterization of the defender's optimal commitment strategy. Reading off the join envelope, the solid black line, when the high type is ex ante unlikely ($\mu_0< 1/2$), the defender splits the prior into two posteriors $\mu=0$ and $\mu=1$: when $s=0$, the posterior is $\mu_{s=0}=0$; when $s=0.1$, it is $\mu_{s=0.1}=1$. That is, the low type burns no money, and the high type burns the minimum available amount $s^b_1=0.1$. There is full separation, but at a minimal cost: because the defender has committed to the mapping $\pi(s=0|\omega=\omega^L)=\pi(s=0.1|\omega=\omega^H)=1$,  the low type is not able to imitate the high type. The cost of signaling here stems from the need to take an action different from the free action of the low type, not from the need to take an expensive enough action that prevents the low type's imitation.

By contrast, when the high type is ex ante likely ($\mu_0> 1/2$), the defender still splits the prior into two posteriors $\mu=0$ and $1$, but reverses the signals used to induce them: when $s=0$, the posterior is $\mu_{s=0}=1$, not $0$; when $s=0.1$, it is $\mu_{s=0.1}=0$, not $1$. That is, the high type burns no money, and the low type burns the minimum available amount $s^b_1=0.1$. Because the defender has committed to the mapping $\pi(s=0.1|\omega=\omega^L)=\pi(s=0|\omega=\omega^H)=1$, the low type cannot imitate the high type. The defender could still employ the strategy of having high types burn money, while low types do nothing. That would still save the defender money compared to the setting without commitment in which the high types would take even more costly actions. But the defender can do better by reversing the meaning of the actions: burning money means low type, while doing nothing means high type. 
The low type will have strictly negative payoffs: they take a costly action and back down. But they are unlikely to arise, and the costs saved by the likely high types more than compensate for those losses. 

Through the reversal of the meaning of burning money, the defender can significantly reduce the costs of the burning-money approach and raise its payoff. As Figure \ref{fig_comparison} demonstrates, this commitment power generates most value exactly when the high type is more likely, and hence burning money without commitment, or even with commitment but without reversal of meaning, is most expensive.

\section{Audience Costs}

We now analyze the audience-cost setting and then compare the defender's ex ante payoffs in the two settings. We distinguish between the settings in which audience costs can be large and those where they cannot.

Upon the challenger's attack, the defender fights if $\omega=\omega^H$ or $-s\leq  p\omega^L-c_d$. Thus, the defender's interim payoff is
{\small
\begin{eqnarray}\label{payoff_uncertain_AC}
&&\widehat v^t(\mu_s,s)\\
&=&\mathbbm{1}(-s>  p\omega^L-c_d) \left[ F\left(v^*_c(\mu_s)\right) (\mu_s \omega^H+(1-\mu_s)\omega^L)+\left(1-F\left(v^*_c(\mu_s)\right)\right) (\mu_s(p\omega^H-c_d)-(1-\mu_s)s) \right] \notag\\
&&+\ \ \mathbbm{1}(-s\leq  p\omega^L-c_d)\ \left[F_1\ (\mu_s \omega^H+(1-\mu_s)\omega^L)+\left(1-F_1\right)\ (\mu_s(p\omega^H-c_d)+(1-\mu_s)(p\omega^L-c_d)) \right].\notag
\end{eqnarray}
}
\normalsize

\subsection{Large Audience Costs}

Consider settings where sufficiently large audience costs are feasible:
$$\overline{s}^t\ge s^*:=\min\left\{ -p\omega^L+c_d, \frac{F_1}{1-F_1}\, \omega^L \right\},$$
where we recognize that $s^*$ is a function of $p,\ c_d,\ \omega^L$, and $F_1$. We use the convention that $\pm a/0=\pm\, \infty$ for $a>0$.  Moreover, $s^*$ is weakly increasing in $F_1$, the likelihood that the challenger does not attack if it is sure that the defender will defend, with $\lim_{F_1\rightarrow 0}s^*=0$ and $\lim_{F_1\rightarrow1}s^*=-p\omega^L+c_d>0$. In particular, if the challenger attacks almost surely even if it knows that the defender will defend upon an attack ($F_1\approx0$), then we are in this setting.

\begin{prop}[Large Audience Costs]\label{AC2_Large}
  Suppose $\overline{s}^t\ge s^*$. In the audience-cost setting, 
    \begin{itemize}
        \item If $-p\omega^L+c_d
        > \frac{F_1}{1-F_1}\, \omega^L$, the defender sends some $s\ge\frac{F_1}{1-F_1}\omega^L$ when $\omega=\omega^H$ and sends $s=0$ when $\omega=\omega^L$; the challenger challenges when $s<\frac{F_1}{1-F_1}\omega^L$ or $v_c>v^*_c(1)$, upon which the defender fights if and only if $\omega=\omega^H$.
        \item If $-p\omega^L+c_d
        < \frac{F_1}{1-F_1}\, \omega^L$, each type of the defender sends a signal $s\ge c_d-p\omega^L$. The challenger challenges when $s<c_d-p\omega^L$ or $v_c>v^*_c(1)$. Both types of the defender fight when challenged on the equilibrium path.
    \end{itemize}
    \label{benchmark_AC2_BridgeBurning}
\end{prop}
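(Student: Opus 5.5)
The plan is to show, in each case, that the described profile is a perfect Bayesian equilibrium satisfying D1 and that it gives the defender its first-best payoff. First-best optimality then delivers sender-optimality at once.

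\textbf{An upper bound.} Because $W^L<0$, the low type's payoff from deterring is $\omega^L$. If challenged, it either backs down and pays $s$, or fights and gets $W^L<0$. The high type likewise gets at most $\omega^H$ if not attacked and $W^H$ if attacked. A challenger with $v_c>v^*_c(1)=c_c/(1-p)$ attacks under every belief, even one that assigns probability one to being fought. So every type is attacked with probability at least $1-F_1$, which bounds each type's payoff. Concretely, the high type gets at most $F_1\omega^H+(1-F_1)W^H$. The low type gets at most $F_1\omega^L$ if it backs down on the path. If it instead fights on the path, it needs $s\ge c_d-p\omega^L$, and its payoff is bounded by $F_1\omega^L+(1-F_1)W^L$.

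\textbf{Case 1: $c_d-p\omega^L>\frac{F_1}{1-F_1}\omega^L$.} Here $s^*=\frac{F_1}{1-F_1}\omega^L\le\overline s^t$. Let the high type send some feasible $s\ge s^*$ and the low type send $0$. Beliefs are $\mu_s=1$ after the high type's signal and $\mu_0=0$ after $0$. The challenger then attacks iff $v_c>v^*_c(1)$ after the high signal, and always after $0$, since $v^*_c(0)=0$. Both types attain the upper bounds above, so this is first-best.

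The incentive checks are as follows. The low type would not mimic because mimicking yields $F_1\omega^L-(1-F_1)s\le 0$. Its other option at that signal is fighting, which is worse because $s<c_d-p\omega^L$ can be arranged. If the only feasible signals at or above $s^*$ exceed $c_d-p\omega^L$, then mimicking would make it fight, giving $F_1\omega^L+(1-F_1)W^L<0$; so it still does not mimic. The high type would not deviate, since the audience cost is never paid on the path.

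Off-path signals $s<s^*$ receive belief $0$. I will justify this by D1: for any such $s$, the set of challenger responses that make the low type want to deviate contains the set that makes the high type deviate. The high type pays nothing at any $s$ because it fights. The low type's gain from deviating is larger because its equilibrium payoff is $0$, while the high type's equilibrium payoff is already maximal. The case-one strict inequality is what rules out the pooling alternative in which the low type commits to fight. I will show that this alternative is worse ex ante by the bound $F_1\omega^L+(1-F_1)W^L<F_1\omega^L-\ldots$. In fact, in this case $F_1\omega^L+(1-F_1)W^L<0$, which is below the low type's payoff of $0$ from separating.

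\textbf{Case 2: $c_d-p\omega^L<\frac{F_1}{1-F_1}\omega^L$.} Here $s^*=c_d-p\omega^L$. Both types send some $s\ge s^*$. The low type is then locked in, so both types fight, and the challenger attacks iff $v_c>v^*_c(1)$ whatever its belief. The low type gets $F_1\omega^L+(1-F_1)W^L$, which is positive by the case inequality. The high type gets its first-best payoff.

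To show sender-optimality, compare with any other equilibrium. In any other equilibrium the low type either fights, and then gets at most this amount, or backs down, and then gets at most $F_1\omega^L-(1-F_1)s$. When it backs down, sustaining challenger deterrence at a separating signal requires $s\ge\frac{F_1}{1-F_1}\omega^L$, and pooling signals only lower the challenger's deterrence. I expect this comparison to be the main obstacle: bounding the low type's payoff in semi-separating and bluffing equilibria that use $s<s^*$. I would use the convexity structure of $\widehat v^t$ in \eqref{payoff_uncertain_AC} to show that such profiles give the high type strictly less than first best, because it is attacked with probability $1-F_{\mu_s}>1-F_1$. I would also show that they give the low type at most $F_1\omega^L$ minus the expected audience cost, which does not reach the locked-in payoff once summed ex ante.

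Off-path signals below $s^*$ get belief $0$ by the same D1 argument. In both cases, the final step is to confirm that the challenger's responses follow from \eqref{eqn:u} with the computed beliefs.
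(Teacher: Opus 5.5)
Your construction of the two profiles and your D1 justification of belief $0$ after small off-path signals are fine. The gap is in sender-optimality.

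\textbf{Case~1.} The claim that ``both types attain the upper bounds'' is false. In the separating profile the low type is attacked for sure after $s=0$ and earns $0$, not its bound $F_1\omega^L$. So the profile is not first-best, and you never rule out bluffing profiles in which the low type shares a signal with the high type and backs down. Such profiles can be PBE with a strictly higher ex ante payoff. Suppose $s^*\le\overline s^t<c_d-p\omega^L$, which Case~1 allows, and take pooling on $s=0$ with belief $0$ after every $s>0$. This is a PBE, because no lock-in signal is feasible and any deviation triggers a sure attack. It beats separation by $F_{\mu_0}(\mu_0\delta+(1-\mu_0)\omega^L)-\mu_0F_1\delta$, which is positive when $F_{\mu_0}$ is close to $F_1>0$. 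The proposition does not assume Assumption~\ref{assum_2sided}-(\ref{assum_convex}), so nothing excludes this.

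What removes these profiles is D1 applied to them, and this is the step the paper's proof supplies:
\begin{itemize}
\item Any signal both types send must lie below $\frac{F_1}{1-F_1}\omega^L$. At or above it, with $\mu_s<1$, the low type earns a negative payoff, while sending $0$ and backing down guarantees it at least $0$.
\item Hence a feasible $s'\ge\frac{F_1}{1-F_1}\omega^L$ cannot be on path: the high type would earn its maximum there and would not send the shared signal.
\item At $s'$ the low type earns at most $\max\{F_1\omega^L-(1-F_1)s',\,F_1\omega^L+(1-F_1)W^L\}\le0$. The high type strictly gains once $\mu_{s'}=1$.
\item D1 therefore sets $\mu_{s'}=1$, and the high type deviates.
\end{itemize}

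\textbf{Case~2.} Your planned lemma, that bluffing profiles fall short of the locked-in payoff once summed ex ante, is false as stated. For $F_1<1$, pooling on $s=0$ exceeds $\mu_0[F_1\omega^H+(1-F_1)W^H]+(1-\mu_0)[F_1\omega^L+(1-F_1)W^L]$ whenever $F_{\mu_0}$ is close to $F_1$, since the gap tends to $-(1-\mu_0)(1-F_1)W^L>0$.

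The missing observation is that after any $s\ge c_d-p\omega^L$ both types fight, so the challenger attacks iff $v_c>v^*_c(1)$ whatever its belief. The high type can therefore always secure $F_1\omega^H+(1-F_1)W^H$. Your remark that it is attacked with probability $1-F_{\mu_s}>1-F_1$ in a bluffing profile then shows such profiles are not equilibria at all, not merely suboptimal. To finish:
\begin{itemize}
\item A signal below $c_d-p\omega^L$ sent only by the low type gives it $-s\le0<F_1\omega^L+(1-F_1)W^L$.
\item A signal below $c_d-p\omega^L$ sent only by the high type would be mimicked, because $F_1\omega^L-(1-F_1)s>F_1\omega^L+(1-F_1)W^L$ there.
\end{itemize}
Hence both types send $s\ge c_d-p\omega^L$.

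Your payoff-bounding route does work if you add Assumption~\ref{assum_2sided}-(\ref{assum_convex}). Then $\widehat v^t(\mu,s)\le\mu\widehat v^b(1,0)+(1-\mu)\max\{0,F_1\omega^L+(1-F_1)W^L\}$ for every $s$, and Bayes plausibility bounds every equilibrium payoff as in Proposition~\ref{Prop_AC_2sided}. That proves the result only under a hypothesis the statement does not impose.
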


We note that in the second equilibrium, the two types of the defender do not have to send the same $s\ge c_d-p\omega^L$. Even if they send different $s^\prime$ and $s^{\prime\prime}\neq s^\prime$, these signals constitute an equilibrium that induces the same probability of attack and war, as long as $s^\prime,s^{\prime\prime}\ge c_d-p\omega^L$. 

Intuitively, the low type can choose $s\geq |p\omega^L-c_d|$ to commit itself to fighting if attacked, thereby transforming itself into the high type from the challenger's perspective. The low type will do that whenever
\begin{equation}\label{eqn:AC-low-type-commitment}
    F_1\omega^L+(1-F_1)(p\omega^L-c_d)\geq0,\ \mbox{i.e.}\ \frac{F_1}{1-F_1}\omega^L\geq -p\omega^L+c_d,
\end{equation}
where $0$ is the low type's payoff if the challenger attacks and the defender does not defend. This logic has nothing to do with information transmission, and it is exactly the second equilibrium in Proposition \ref{AC2_Large}. When \eqref{eqn:AC-low-type-commitment} is reversed, the weak type does not want to commit. Instead, it wants to imitate the high type and reduce the probability of attack to $1-F_1$, but backs down if the attack actually happens. But the high type can always choose a high enough signal to make imitation unattractive to the low type. If the high type takes $s\geq |p\omega^L-c_d|$, the low type would not imitate. In fact, even if the high type takes $s\geq \frac{F_1}{1-F_1}\omega^L$, the low type will not imitate because
\begin{equation*}
    F_1\omega^L+(1-F_1)(-s)\leq  F_1\omega^L+(1-F_1)\left(-\frac{F_1}{1-F_1}\omega^L\right)=0.
\end{equation*}

That is, if $\overline s^t\geq \frac{F_1}{1-F_1}\omega^L$, the high type can choose a signal that the low type would want to imitate if and only if it wants to follow through and defend against an attack. Importantly, the low type would want to do so exactly when \eqref{eqn:AC-low-type-commitment} holds. But then, the low type would not need to pretend to be the high type; it could commit on its own by taking $s>|p\omega^L-c_d|$.

With full separation or the low type's commitment to defend if attacked, it follows that the defender's ex ante payoff (under the prior) is linear in the prior $\mu_0$. The red line in Panel (a) of Figure \ref{fig_comparison} demonstrates this.

\begin{figure}[t]
\centering
\footnotesize

\begin{minipage}[t]{0.5\textwidth}
\centering

\begin{tikzpicture}[x=1.15cm,y=1cm]

    \draw[very thin,->] (0,0) -- (5.15,0) node[right] {$\mu_0$};
    \draw[very thin,->] (0,-0.8) -- (0,5.3);

    \node[below left] at (0,0) {$0$};

    \draw[very thin] (5,0.04) -- (5,-0.04)
        node[below] {$1$};

    \draw[very thick,red,scale=5,domain=0:1,
          variable=\x]
    plot ({\x},
        {0.75*\x*(0.6*3-1.5
        +(0.4/(2.5*(1-0.6)))*((1-0.6)*3+1.5))});
    \draw[very thick,blue,scale=5,domain=0:49/51,
          variable=\x]
    plot ({\x},
        {0.75*\x*(0.6*3-1.5
        +(0.4/(2.5*(1-0.6)))*((1-0.6)*3+1.5-0.8))});

    \draw[very thick,black,scale=5,domain=0:0.5,
          variable=\x]
    plot ({\x},
        {0.75*\x*(0.6*3-1.5
        +(0.4/(2.5*(1-0.6)))*((1-0.6)*3+1.5)-0.1)});
    \draw[very thick,black,scale=5,domain=0.5:1,
          variable=\x]
    plot ({\x},
        {0.75*(\x*(0.6*3-1.5
        +(0.4/(2.5*(1-0.6)))*((1-0.6)*3+1.5)+0.1)-0.1)});

\end{tikzpicture}

\par\vspace{2pt}
\textbf{(a) Large audience costs}

\end{minipage}
\hfill
\begin{minipage}[t]{0.49\textwidth}
\centering

\begin{tikzpicture}[x=1.15cm,y=1.02cm]

    \draw[very thin,->] (0,0) -- (5.15,0) node[right] {$\mu_0$};
    \draw[very thin,->] (0,-0.55) -- (0,5.3);

    \node[below left] at (0,0) {$0$};

    \draw[very thin] (5,0.05) -- (5,-0.05)
        node[below] {$1$};

    \draw[very thick,red,scale=5,domain=0:9/11,
          variable=\x]
    plot ({\x},
        {0.5*\x*(0.6*3-1.5
        +(0.6/(0.8+0.6))*((1-0.6)*3+1.5))});
    \draw[very thick,red,scale=5,domain=9/11:1,samples=100,
          variable=\x]
    plot ({\x},
        {0.5*(\x*(0.6*3-1.5
        +((\x*0.4)/(1.5*(1-(\x*0.6))))*((1-0.6)*3+1.5))
        +(1-\x)*(((\x*0.4)/(1.5*(1-(\x*0.6))))*0.8
        -(1-((\x*0.4)/(1.5*(1-(\x*0.6)))))*0.6))});
    \draw[very thick,blue,scale=5,domain=0:85/87,
          variable=\x]
    plot ({\x},
        {0.5*\x*(0.6*3-1.5
        +(0.4/(1.5*(1-0.6)))*((1-0.6)*3+1.5-0.8))});

    \draw[very thick,black,scale=5,domain=0:0.5,
          variable=\x]
    plot ({\x},
        {0.5*\x*(0.6*3-1.5
        +(0.4/(1.5*(1-0.6)))*((1-0.6)*3+1.5)-0.1)});
    \draw[very thick,black,scale=5,domain=0.5:1,
          variable=\x]
    plot ({\x},
        {0.5*(\x*(0.6*3-1.5
        +(0.4/(1.5*(1-0.6)))*((1-0.6)*3+1.5)+0.1)-0.1)});
    \draw[thin,black,dashed]
        ({5*((-5+sqrt(10105))/112)},0) --
        ({5*((-5+sqrt(10105))/112)},{5*(0.5*((-5+sqrt(10105))/112)*(0.6*3-1.5
        +(0.4/(1.5*(1-0.6)))*((1-0.6)*3+1.5-0.8)))});
    \node[below] at ({5*((-5+sqrt(10105))/112)},0) {$\overline\mu_0$};    
\end{tikzpicture}

\par\vspace{2pt}
\textbf{(b) Small audience costs}

\end{minipage}

\caption{The defender's ex ante payoff under audience costs (red), burning
money without commitment (blue), and burning money with commitment (black).
In both panels, $\omega^L=0.8$, $\omega^H=3$, $p=0.6$, $c_d=1.5$,
$c_c=0.4$, and
$S^b=S^t=\{0,0.1,
\dots, 0.6\}$,
so $s_1^b=0.1$ and $\overline{s}^t=0.6$.
In panel (a), $v_c\sim U[0,2.5]$, $F_1=\frac25$, and $s^*=\frac{8}{15}$.
In panel (b), $v_c\sim U[0,1.5]$, $F_1=\frac23$, and
$s^*=\frac{51}{50}>\overline{s}^t$.
Vertical scales differ across panels.}
\label{fig_comparison}
\end{figure}

Thus, in large audience-cost settings where $\overline s^t\geq s^*$, signaling has value if and only if $\frac{F_1}{1-F_1}\, \omega^L< -p\omega^L+c_d$. Otherwise, the commitment of burning bridges alone, without any signaling value, delivers the results. Moreover, when $\frac{F_1}{1-F_1}\, \omega^L< -p\omega^L+c_d$, so that signaling has value in the audience-cost setting, combining  Propositions \ref{benchmark_BM2} and \ref{AC2_Large} reveals that signaling requires a more costly available signal in the audience-cost setting, $\frac{F_1}{1-F_1}\omega^L$, than in the burning-money setting where $F_1\omega^L<\frac{F_1}{1-F_1}\omega^L$. 

That is, while signaling costs are not paid in equilibrium in large audience-cost settings, they require a higher feasibility threshold: the same cost $F_1\omega^L$ that delivers separation in burning-money settings will not deliver separation in audience-cost settings.

Combining Propositions \ref{benchmark_BM2}, \ref{Prop_BM_2sided}, and \ref{AC2_Large} allows us to compare the defender's ex ante payoffs across the audience-cost and burning-money settings.

\newpage
\begin{prop}[Large Audience Costs versus Burning Money]\label{Comparison_BMwoComm_ACwoComm2}
    Suppose Assumption \ref{assum_2sided}-(\ref{assum_convex}), (\ref{assum_sb1_F1}), and (\ref{assum_sb1_1/2}) hold and  $\overline{s}^t\ge s^*$. For any $\mu_0\in(0,1)$, the defender's ex ante payoff is strictly higher in the audience-cost setting than in the burning-money setting with and without commitment.
\end{prop}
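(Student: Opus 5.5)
The plan is to compute the defender's ex ante payoff in closed form in each of the three settings, using the equilibria already characterized, and then compare the resulting expressions directly. All the comparisons reduce to the observation that burning money always dissipates a strictly positive amount of resources with positive probability, while large audience costs achieve at least the same informational outcome with no resources dissipated on the equilibrium path.

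\emph{Step 1 (audience costs).} Using Proposition \ref{AC2_Large}, I split into cases according to the sign of $-p\omega^L+c_d-\frac{F_1}{1-F_1}\omega^L$.
\begin{itemize}
\item In the separating case, the high type is believed with $\mu=1$ and is attacked only when $v_c>v_c^*(1)$. Its payoff is therefore $F_1\omega^H+(1-F_1)W^H=W^H+F_1\delta$. The low type is revealed with $\mu=0$; since $v_c^*(0)=0$ and $F(0)=0$, it is attacked for sure and concedes at zero audience cost (it sent $s=0$), so its payoff is $0$. Hence
\[
V^t=\mu_0\,(W^H+F_1\delta).
\]
\item In the pooling-on-commitment case, the high type again receives $W^H+F_1\delta$. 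The low type receives $F_1\omega^L+(1-F_1)W^L$, which is nonnegative by \eqref{eqn:AC-low-type-commitment}. Hence $V^t\ge\mu_0(W^H+F_1\delta)$.
\item In the knife-edge case of equality, both equilibria yield the same value, since the low type's payoff is exactly $0$ in either.
\end{itemize}
So in all cases $V^t\ge \mu_0(W^H+F_1\delta)$. I would also note that the sender-optimal selection can only raise this, which causes no trouble because we only need a lower bound on $V^t$.

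\emph{Step 2 (burning money).}
\begin{itemize}
\item \emph{Without commitment.} By Proposition \ref{benchmark_BM2} the separating equilibrium gives the high type $W^H+F_1\delta-F_1\omega^L$ and the low type $0$. Hence $V^b_{nc}=\mu_0(W^H+F_1\delta)-\mu_0F_1\omega^L$.
\item \emph{With commitment.} By Proposition \ref{Prop_BM_2sided}, which is where Assumption \ref{assum_2sided}-(\ref{assum_convex}), (\ref{assum_sb1_F1}), and (\ref{assum_sb1_1/2}) enter, the value is
\[
V^b_{c}=\mu_0(W^H+F_1\delta)-\min\{\mu_0,1-\mu_0\}\,s^b_1 .
\]
For $\mu_0\le 1/2$ the high type pays $s^b_1$; for $\mu_0\ge1/2$ the low type pays $s^b_1$ and still obtains $0$ otherwise.
\end{itemize}

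\emph{Step 3 (comparison).} Since $\mu_0\in(0,1)$, $s^b_1>0$, $\omega^L>0$, and $F_1>0$ (because $f>0$ on $(0,c_c/(1-p))$), each subtracted term is strictly positive. Hence $V^b_{c}<V^t$ and $V^b_{nc}<V^t$. As a consistency check, Proposition \ref{Prop_BM_2sided} already gives $V^b_{nc}<V^b_c$, so comparing against $V^b_c$ alone would suffice.

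The main obstacle I expect is Step 1: verifying carefully that the ex ante value in the audience-cost setting is bounded below by $\mu_0(W^H+F_1\delta)$ in every subcase, including the equality boundary that Proposition \ref{AC2_Large} leaves unstated. In particular, this requires confirming that in the separating case the low type indeed pays no audience cost, since it sends $s=0$ and concedes.
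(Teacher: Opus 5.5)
Your proof is correct and follows the paper's argument: use Proposition \ref{AC2_Large} to bound the audience-cost payoff below by $\mu_0(W^H+F_1\delta)$, then observe that the commitment value from Proposition \ref{Prop_BM_2sided} subtracts the strictly positive term $\min\{\mu_0,1-\mu_0\}s^b_1$, and handle the no-commitment case through $V^b_{nc}\le V^b_c$. Note that only the chain $V^b_{nc}\le V^b_c<V^t$ works under the stated hypotheses: your direct formula for $V^b_{nc}$ comes from Proposition \ref{benchmark_BM2}, which requires Assumption \ref{assum_2sided}-(\ref{assum_rich_F1}) and (\ref{assum_rich_s+Delta}), and those are not assumed here.
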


Without commitment, this result echoes the result in \citet{fearonSignalingForeignPolicy1997}. Moreover, the superiority of audience cost remains even when the defender has commitment power. That is, although Proposition \ref{Prop_BM_2sided} shows that commitment power strictly improves the defender's ex ante payoff from burning money, it is still strictly lower than that from signaling by audience cost.

\subsection{Small Audience Costs}

If the defender can create a sufficiently large audience cost, it achieves the best possible payoff: all deterrable challengers will be deterred at no realized cost. However, the literature highlights that the ability to create large audience costs is rare \citep{katagiriCredibilityPublicPrivate2019,snyderCostEmptyThreats2011,trachtenbergAudienceCostsHistorical2012}. We thus consider possibly more realistic settings where the ability to create audience costs is limited:
\begin{equation*}
    \overline{s}^t<s^*.
\end{equation*}
 Proposition \ref{AC2_Limitation} characterizes the equilibrium.

\begin{prop}[Small Audience Costs]\label{AC2_Limitation}
Suppose $\overline{s}^t<s^*$. In the audience cost setting,
    \begin{itemize}
        \item If $\frac{F_{\mu_0}}{1-F_{\mu_0}}\,\omega^L<\overline{s}^t<s^*$, the defender sends $s=\overline{s}^t$ when $\omega=\omega^H$. He mixes $s=\overline{s}^t$ and $s=0$ when $\omega=\omega^L$. The challenger challenges when $s<\overline{s}^t$ or $v_c>v^*_c\left(\mu_{s=\overline{s}^t}\right)$, upon which the defender fights if and only if $\omega=\omega^H$.
        \item If $\overline{s}^t\le\frac{F_{\mu_0}}{1-F_{\mu_0}}\,\omega^L$, both types of the defender send $s=\overline{s}^t$. The challenger challenges when $s<\overline{s}^t$ or $v_c>v^*_c(\mu_0)$, upon which the defender fights if and only if $\omega=\omega^H$. 
    \end{itemize}
\end{prop}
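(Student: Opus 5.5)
We first pin down behaviour after the challenger attacks. Because $\overline{s}^t<s^*\le c_d-p\omega^L$, every feasible $s\in S^t$ satisfies $-s>p\omega^L-c_d$. Hence on every path the low type backs down and the high type fights. The challenger with belief $\mu_s$ therefore attacks if and only if $v_c>v^*_c(\mu_s)$, and the interim payoffs reduce to the first branch of \eqref{payoff_uncertain_AC}. For a given attack probability, the low type's payoff from signal $s$ with posterior $\mu$ is $F_\mu\omega^L-(1-F_\mu)s$, which is decreasing in $s$. The high type's payoff $F_\mu\omega^H+(1-F_\mu)W^H$ does not depend on $s$ and is increasing in $\mu$.

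Next we restrict off-path beliefs by D1. Because the high type never pays $s$, it strictly prefers any signal that induces a weakly higher belief. The low type's incentive to deviate to a larger $s$ is strictly weaker. So the set of challenger responses that make a deviation to an unused $s'$ attractive is strictly larger for the high type whenever $s'$ exceeds the low type's costliest used signal. D1 then puts belief $1$ on any off-path $s'$ above the lowest used signal of the high type. This rules out any equilibrium in which the high type sends $s<\overline{s}^t$: a deviation to $\overline{s}^t$ (or the next higher signal) would induce $\mu=1$ and raise the high type's payoff. The exception is when the high type already induces $\mu=1$. But in that case the high type's signal $s$ satisfies $F_1\omega^L-(1-F_1)s\le 0$, so $s\ge \frac{F_1}{1-F_1}\omega^L\ge s^*>\overline{s}^t$, which is infeasible. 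Therefore the high type sends $\overline{s}^t$. The low type then either sends $0$, accepting $\mu=0$ and a payoff of $0$ since it is always attacked and backs down, or pools at $\overline{s}^t$. Any other signal is strictly dominated by $0$ or by $\overline{s}^t$, because the belief it induces is at most the belief at $\overline{s}^t$ and it is cheaper only if the belief is $0$. Sender-optimality selects among the remaining candidates.

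It remains to determine the low type's mixing probability. Let $\mu$ be the posterior at $\overline{s}^t$ when the low type pools with probability $\lambda$; then $\mu$ ranges over $[\mu_0,1]$ as $\lambda$ moves from $1$ to $0$. The low type's net gain from pooling is
\[
g(\mu)=F_\mu\omega^L-(1-F_\mu)\overline{s}^t,
\]
which is increasing in $\mu$, with $g(1)<0$ because $\overline{s}^t<\frac{F_1}{1-F_1}\omega^L$.
\begin{itemize}
\item If $g(\mu_0)\ge0$, that is $\overline{s}^t\le \frac{F_{\mu_0}}{1-F_{\mu_0}}\omega^L$, full pooling is an equilibrium. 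The ex ante payoff is maximized there because the high type's payoff is increasing in $\mu$ and the low type earns $\max\{g,0\}$ evaluated at the equilibrium belief. Any mixing equilibrium would require $g(\mu)=0$ at some $\mu\ge\mu_0$, which gives the low type the same payoff of $0$ as separating, while the high type would face a belief that is no better. We will need to check that sender-optimality selects full pooling; comparing the two payoff expressions should suffice.
\item If $g(\mu_0)<0<g(1)$ is impossible since $g(1)<0$, so when $g(\mu_0)<0$ we have $g<0$ on all of $[\mu_0,1]$. The low type then strictly prefers $0$, which yields full separation with the high type at $\overline{s}^t$.
\end{itemize}

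This last conclusion seems to contradict the statement's semi-separating case, so the main obstacle is the inequality itself. I would first recheck the sign of $g(1)$ against the definition of $s^*$. If $\overline{s}^t<s^*$ holds only through $s^*=c_d-p\omega^L$, so that $\frac{F_1}{1-F_1}\omega^L$ may be below $\overline{s}^t$, then $g(1)>0$. In that case, when $g(\mu_0)<0<g(1)$, the intermediate value theorem gives a unique $\mu^*\in(\mu_0,1)$ with $g(\mu^*)=0$. This pins down the mixing probability, and the resulting semi-separating equilibrium matches the first bullet of the statement.
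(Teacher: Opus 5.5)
Your reduction parallels the paper's: since $\overline{s}^t<s^*\le c_d-p\omega^L$ the low type always backs down, the high type's payoff does not depend on $s$, D1 pushes the high type to $\overline{s}^t$, and everything hinges on the low type's gain from pooling, $g(\mu)=F_\mu\omega^L-(1-F_\mu)\overline{s}^t$. The case analysis, however, rests on a sign error. You assert $g(1)<0$ ``because $\overline{s}^t<\frac{F_1}{1-F_1}\omega^L$.'' But that inequality is equivalent to $(1-F_1)\overline{s}^t<F_1\omega^L$, i.e.\ $g(1)>0$. It is the very fact you used correctly one paragraph earlier to show that the high type cannot be revealed ($\mu=1$) at any feasible signal. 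Your second case therefore concludes that for $g(\mu_0)<0$ the low type strictly prefers $0$ and the equilibrium is fully separating. That contradicts both the statement and your own earlier step. The repair you sketch also fails. First, $s^*$ is the \emph{minimum} of $c_d-p\omega^L$ and $\frac{F_1}{1-F_1}\omega^L$, so $\overline{s}^t<s^*$ always gives $\overline{s}^t<\frac{F_1}{1-F_1}\omega^L$; there is no sub-case with $\frac{F_1}{1-F_1}\omega^L<\overline{s}^t$. Second, if there were such a sub-case, it would give $g(1)<0$, not $g(1)>0$. As written, the first bullet is not proved; its negation is asserted.

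With the sign fixed, the argument closes quickly. Under the hypothesis $g(1)>0$ always holds, and $g$ is continuous and strictly increasing on $[\mu_0,1]$ because $f>0$ on $(0,c_c/(1-p))$. So when $g(\mu_0)<0$, equivalently $\overline{s}^t>\frac{F_{\mu_0}}{1-F_{\mu_0}}\omega^L$:
\begin{itemize}
\item there is a unique $\mu^*\in(\mu_0,1)$ with $F_{\mu^*}=\frac{\overline{s}^t}{\omega^L+\overline{s}^t}$, which is the paper's indifference condition;
\item the low type's pooling probability is pinned down by $\mu^*=\mu_0/(\mu_0+(1-\mu_0)\lambda)$;
\item pooling fails because $g(\mu_0)<0$, and separation fails because $g(1)>0$.
\end{itemize}
In your first case no payoff comparison is needed: a strictly mixing equilibrium simply does not exist there, since $g>0$ on $(\mu_0,1]$.

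Two further steps are asserted rather than shown. First, you must verify that D1 assigns $\mu_s=0$ to off-path $s<\overline{s}^t$. Whenever the high type weakly gains from such a deviation, the low type, who also saves on the audience cost, gains strictly. This is what supports both candidate equilibria and justifies your claim that other signals are dominated. Second, your deviation argument for pushing the high type to $\overline{s}^t$ assumes $\overline{s}^t$ is off path. You must therefore separately exclude the high type mixing over two signals that the low type also uses. There the high type's indifference forces equal posteriors, after which the low type strictly prefers the cheaper signal, as in the paper's Step 1.
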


With limited audience costs, the low type cannot commit to defending if attacked, and the high type cannot fully separate. The low type tries to bluff, pretending to be a high type, but backs down when the challenger attacks. As the maximum feasible audience cost falls, equilibrium behavior moves from fully separating to semi-separating to pooling.

As the maximum feasible audience cost falls, less
information is transmitted. However, the defender's payoff is generally non-linear. To see this, suppose $\overline{s}^t<\frac{F_{\mu_0}}{1-F_{\mu_0}}\,\omega^L$, so that all defender types pool on $\overline s^t$, and the low type will back down if attacked. Conditional on pooling behavior, higher audience costs have no informational effect and only reduce the defender payoffs, as they will be incurred by low types. By contrast, when $\overline s^t$ is large enough that the equilibrium is semi-separating, higher feasible audience costs reduce the probability of imitation, raising deterrence and the defender's ex ante payoffs.\footnote{We note that when mixing, the low type remains indifferent between imitating the high type and sending no signal, so its payoff does not change. But the high type benefits from the additional deterrence.} When equilibrium behavior moves from semi-separating to pooling as the feasible audience cost falls, the informational channel and the direct channel coexist and push the relative payoffs in opposite directions.

\begin{figure}[t]
\begin{center}
\begin{tikzpicture}[x=1.7cm,y=1.7cm]

    \draw[very thin,->] (0,0) -- (5.15,0) node[right] {$\mu_0$};
    \draw[very thin,->] (0,-0.55) -- (0,3.5);

    \node[below left] at (0,0) {$0$};

    \draw[very thin] (2.5,0.05) -- (2.5,-0.05)
        node[below] {$\frac{1}{2}$};

    \draw[very thin] (5,0.05) -- (5,-0.05)
        node[below] {$1$};

    \fill[red!70,opacity=0.6,scale=5]
    plot[
        domain=0.5386697094:5/8,
        samples=120,
        smooth,
        variable=\x
    ]
    ({\x},
    {\x*(0.4*1.5-0.5
    +((\x*0.6)/(2.5*(1-(\x*0.4))))
      *((1-0.4)*1.5+0.5))
    +(1-\x)*
    (((\x*0.6)/(2.5*(1-(\x*0.4))))*0.4
    -(1-((\x*0.6)/(2.5*(1-(\x*0.4)))))*0.01)})
    --
    plot[
        domain=5/8:0.5386697094,
        samples=120,
        smooth,
        variable=\x
    ]
    ({\x},
    {\x*(0.4*1.5-0.5
    +(0.1/(0.4+0.1))*((1-0.4)*1.5+0.5))})
    -- cycle;

    \draw[
        blue,
        very thick,
        scale=5,
        domain=0:5/8,
        smooth,
        variable=\x
    ]
    plot ({\x},
    {\x*(0.4*1.5-0.5
    +(0.1/(0.4+0.1))*((1-0.4)*1.5+0.5))});

    \draw[
        blue,
        very thick,
        scale=5,
        domain=5/8:1,
        smooth,
        variable=\x
    ]
    plot ({\x},
    {\x*(0.4*1.5-0.5
    +((\x*0.6)/(2.5*(1-(\x*0.4))))
      *((1-0.4)*1.5+0.5))
    +(1-\x)*
    (((\x*0.6)/(2.5*(1-(\x*0.4))))*0.4
    -(1-((\x*0.6)/(2.5*(1-(\x*0.4)))))*0.1)});
    \node[blue] at (4.4,1.5) {$\overline{s}^t=0.1$};

    \draw[
        red,
        very thick,
        scale=5,
        domain=0:25/256,
        smooth,
        variable=\x
    ]
    plot ({\x},
    {\x*(0.4*1.5-0.5
    +(0.01/(0.4+0.01))*((1-0.4)*1.5+0.5))});

    \draw[
        red,
        very thick,
        scale=5,
        domain=25/256:1,
        smooth,
        variable=\x
    ]
    plot ({\x},
    {\x*(0.4*1.5-0.5
    +((\x*0.6)/(2.5*(1-(\x*0.4))))
      *((1-0.4)*1.5+0.5))
    +(1-\x)*
    (((\x*0.6)/(2.5*(1-(\x*0.4))))*0.4
    -(1-((\x*0.6)/(2.5*(1-(\x*0.4)))))*0.01)});
    \node[red] at (3,2.2) {$\overline{s}^t=0.01$};

    \draw[densely dotted]
        (125/256,0)--(125/256,0.0655011433);
    \draw[very thin] (125/256,0.05) -- (125/256,-0.05);
 \node[] at (3.3,-0.2) {$\mu_0^{0.1}$};
    
    \draw[densely dotted]
        (25/8,0)--(25/8,19/16);
    \draw[very thin] (25/8,0.05) -- (25/8,-0.05);
 \node[] at (.65,-0.2) {$\mu_0^{0.01}$};

\end{tikzpicture}
\caption{The defender's ex ante payoff in the audience-cost setting with $\overline{s}^t=0.1$ (blue) and $\overline{s}^t=0.01$ (red). The equilibrium is semi-separating at $\mu_0<\mu_0^{\overline s^t}$ and pooling at $\mu_0>\mu_0^{\overline s^t}$.}\label{fig.LimitedAC}
\end{center}

\end{figure}

Figure \ref{fig.LimitedAC} illustrates these effects. From Proposition \ref{AC2_Limitation}, $\mu_0^{\overline s^t}:= \frac{F^{-1}\left(\frac{\overline{s}^t}{\omega^L+\overline{s}^t}\right)}{c_c+pF^{-1}\left(\frac{\overline{s}^t}{\omega^L+\overline{s}^t}\right)}$ is the threshold on prior belief at which the equilibrium switches from semi-separating to pooling. When $\mu_0<\mu_0^{0.01}$, the equilibrium is semi-separating under both $\overline s^t=0.01$ and $0.1$, and raising $\overline  s^t$ increases the defender's payoffs. When $\mu_0>\mu_0^{0.1}$, the equilibrium is pooling under both $\overline s^t=0.01$ and $0.1$ and raising $\overline  s^t$ reduces the defender's payoffs. In between, the equilibrium is semi-separating with $\overline s^t=0.1$ and pooling with $\overline s^t=0.01$. The shaded part indicates where the payoff increases as the feasible audience cost falls, so that the equilibrium switches from semi-separating to pooling.

Having characterized the equilibria under limited audience costs, we now compare the equilibrium payoffs with those in burning-money settings. 

\begin{prop}[Small Audience Costs versus Burning Money without Commitment]\label{Comparison_BMwoComm_ACwoComm2_Limitation}
Suppose Assumption \ref{assum_2sided}-(\ref{assum_convex}), (\ref{assum_rich_F1}), and (\ref{assum_rich_s+Delta}) hold, and $\overline{s}^t<s^*$. Let $D:=(1-p)\omega^H+c_d-\omega^L$.
    \begin{itemize}
        \item If $\frac{F_1D}{\delta-F_1D}\,\omega^L<\overline{s}^t$, then the audience-cost setting yields a strictly higher defender ex ante payoff than the burning-money setting without commitment.

  \item If $\overline{s}^t<\frac{F_1D}{\delta-F_1D}\,\omega^L$, then there is a threshold $\overline \mu_0\in(0,1)$  such that the burning-money setting without commitment yields a strictly higher defender ex ante payoff than the audience-cost setting if and only if $\mu_0<\overline \mu_0$.
    \end{itemize}    
\end{prop}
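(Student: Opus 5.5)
The plan is to compute both ex ante payoffs in closed form and compare them piece by piece in $\mu_0$. I will use Proposition \ref{benchmark_BM2} for the burning-money benchmark and Proposition \ref{AC2_Limitation} for limited audience costs.

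\textbf{Burning money without commitment.} By Proposition \ref{benchmark_BM2}, the high type burns $F_1\omega^L$ and is attacked with probability $1-F_1$. It therefore earns $W^H+F_1\delta-F_1\omega^L$. The low type sends $s=0$, is attacked with probability $1-F(0)=1$, and concedes, earning $0$. Since $\delta-\omega^L=(1-p)\omega^H+c_d-\omega^L=D$, we get
\[
V^{BM}(\mu_0)=\mu_0\left(W^H+F_1D\right),
\]
which is linear in $\mu_0$.

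\textbf{Audience costs, semi-separating region.} Let $\hat\mu$ be the prior at which $\frac{F_{\hat\mu}}{1-F_{\hat\mu}}\omega^L=\overline s^t$, that is, $F_{\hat\mu}=\overline s^t/(\omega^L+\overline s^t)$. This is $\mu_0^{\overline s^t}$ in the text. For $\mu_0<\hat\mu$, Proposition \ref{AC2_Limitation} gives a semi-separating equilibrium. The low type is indifferent, so its payoff equals that of $s=0$, namely $0$. The posterior after $\overline s^t$ must make it indifferent, so it equals $\hat\mu$. Hence the high type earns $W^H+F_{\hat\mu}\delta$, and
\[
V^{AC}(\mu_0)=\mu_0\left(W^H+\frac{\overline s^t}{\omega^L+\overline s^t}\,\delta\right)\quad\text{on }[0,\hat\mu].
\]
Comparing slopes, $V^{AC}>V^{BM}$ on $(0,\hat\mu]$ iff $\overline s^t\delta/(\omega^L+\overline s^t)>F_1D$. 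This rearranges to $\overline s^t>\frac{F_1D}{\delta-F_1D}\omega^L$, using $\delta>F_1D$, which holds since $D<\delta$. This is exactly the dichotomy in the statement.

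\textbf{Audience costs, pooling region.} For $\mu_0\ge\hat\mu$,
\[
V^{AC}(\mu_0)=\mu_0\left(W^H+F_{\mu_0}\delta\right)+(1-\mu_0)\left(F_{\mu_0}\omega^L-(1-F_{\mu_0})\overline s^t\right).
\]
Since $F_\mu$ is increasing and $\mu_0\ge\hat\mu$, the low type's term is $\ge0$ and the high type's term is $\ge W^H+F_{\hat\mu}\delta$. Hence $V^{AC}(\mu_0)\ge\mu_0\left(W^H+F_{\hat\mu}\delta\right)$, so the pooling payoff lies weakly above the linear extension of the semi-separating line.

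\textbf{First bullet.} Under the first condition that line strictly exceeds $V^{BM}$ for every $\mu_0>0$. So $V^{AC}>V^{BM}$ on all of $(0,1)$.

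\textbf{Second bullet.} Here $V^{AC}<V^{BM}$ on $(0,\hat\mu]$. Let $g(\mu_0):=V^{AC}(\mu_0)-V^{BM}(\mu_0)$, which is continuous at $\hat\mu$ because the two formulas agree there. At $\mu_0=1$, $g(1)=W^H+F_1\delta-W^H-F_1D=F_1\omega^L>0$. So a crossing exists in $(\hat\mu,1)$, and it remains to show it is unique; that point is $\overline\mu_0$.

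\textbf{Main obstacle: single crossing.} First I would write the pooling payoff as
\[
V^{AC}(\mu)=\widehat v^b(\mu,0)-(1-\mu)(1-F_\mu)\overline s^t,
\]
and try to show $g$ is strictly convex on $[\hat\mu,1]$. Then $g(\hat\mu)<0<g(1)$ forces exactly one root. Convexity of $\widehat v^b(\cdot,0)$ comes from Assumption \ref{assum_2sided}-(\ref{assum_convex}), and $V^{BM}$ is linear.

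The difficulty is the term $-(1-\mu)(1-F_\mu)\overline s^t$, whose curvature is not obviously signed. If convexity fails, I would fall back on showing $g'>0$ at any root. At a root, $V^{AC}$ equals the line $\mu(W^H+F_1D)$, and convexity of $\widehat v^b$ gives $\frac{d}{d\mu}\widehat v^b(\mu,0)\ge\left(\widehat v^b(\mu,0)-\widehat v^b(0,0)\right)/\mu$. The extra term has derivative $\overline s^t\left((1-F_\mu)+(1-\mu)F'_\mu\right)>0$. Combining these with the root equation, and using Assumption \ref{assum_2sided}-(\ref{assum_rich_F1}) and (\ref{assum_rich_s+Delta}) only to guarantee that Proposition \ref{benchmark_BM2} applies, I would aim to show the slope of $V^{AC}$ strictly exceeds $W^H+F_1D$ at any crossing. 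That yields a unique $\overline\mu_0$ with $V^{BM}>V^{AC}$ iff $\mu_0<\overline\mu_0$.
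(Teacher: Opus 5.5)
Your proposal is correct, and apart from one step it is the paper's proof. You use the same closed forms for the three payoffs, the same slope comparison on the semi-separating region (which yields the cutoff $\frac{F_1D}{\delta-F_1D}\omega^L$), and the same observation that the pooling payoff lies above the semi-separating line, which settles the first bullet.

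The difference is the single-crossing step in the second bullet. Your first idea, strict convexity of $g$, is not delivered by Assumption \ref{assum_2sided}-(\ref{assum_convex}) alone. Write $F_\mu'$, $F_\mu''$ for derivatives in $\mu$. The pooling payoff is $\widehat v^t(\mu,\overline s^t)$, whose second derivative equals that of $\widehat v^b(\mu,0)$ plus $\overline s^t\bigl[(1-\mu)F_\mu''-2F_\mu'\bigr]$, and this added term can be negative. It is convex under the sufficient condition of Proposition \ref{prop_convex_AC}, but not in general.

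Your fallback, however, closes in one line. At a root, $\widehat v^b(\mu,0)/\mu=W^H+F_1D+\frac{1-\mu}{\mu}(1-F_\mu)\overline s^t$, so the secant inequality from convexity through the origin gives
\[
g'(\mu)\ \ge\ \overline s^t\left[\frac{1-F_\mu}{\mu}+(1-\mu)F_\mu'\right]>0,
\]
because $F_\mu<F_1\le 1$ for $\mu<1$. So every zero of $g$ is a strict up-crossing. A differentiable function cannot up-cross twice without a down-crossing in between, so together with $g(\hat\mu)<0<g(1)$ there is exactly one root, and it lies in $(\hat\mu,1)$.

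The paper reaches the same conclusion without differentiating $F$. It divides the equality of payoffs by $\mu_0$ and notes two monotonicities:
\begin{itemize}
\item $\widehat v^b(\mu_0,0)/\mu_0$ is strictly increasing, by convexity and $\widehat v^b(0,0)=0$;
\item $\widehat v^b(1,0)-F_1\omega^L+\frac{1-\mu_0}{\mu_0}(1-F_{\mu_0})\overline s^t$ is strictly decreasing.
\end{itemize}
Together these say $g(\mu)/\mu$ is strictly increasing on all of $(0,1)$. Since $(g/\mu)'=g'/\mu$ at a zero, your inequality is the pointwise-at-the-root version of that monotonicity. The paper's form is shorter and global. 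Yours makes explicit that $\overline\mu_0>\hat\mu$, which the paper leaves implicit.
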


Intuitively, in the pooling equilibrium of the audience-cost setting, a lower probability of the high type (smaller $\mu_0$) makes costly bluffing and backing down by the low type more frequent. At the same time, money is burned less frequently in the burning-money setting. Both effects therefore make the burning-money setting relatively more attractive as the high type becomes less likely. In the semi-separating equilibrium, as discussed above, a lower maximum feasible audience cost (smaller $\overline s^t$) reduces the defender's payoff by weakening deterrence, while leaving the burning-money setting unaffected. Once $\overline s^t$ falls below a threshold, the defender's payoff in the audience cost setting falls below that in the burning-money setting. This is the intuition for the second part of Proposition \ref{Comparison_BMwoComm_ACwoComm2_Limitation}, illustrated in Panel (b) of Figure \ref{fig_comparison}. When $\overline s^t$ is above this threshold, as in the first part of the proposition, the audience cost setting instead yields the defender a higher payoff than the burning money setting.

\begin{prop}[Small Audience Costs versus Burning Money with Commitment]\label{Comparison_BMwComm_ACwoComm2_Limitation}
   Suppose Assumption \ref{assum_2sided}-(\ref{assum_convex}), (\ref{assum_sb1_1/2}), and (\ref{assum_sb1_sbart}) hold, and $\overline{s}^t<s^*$.  
   For any $\mu_0\in(0,1)$, the defender's ex ante payoff is strictly lower in the audience-cost setting than in the burning-money setting with commitment.
\end{prop}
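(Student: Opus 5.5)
The plan is to write both ex ante payoffs in closed form and compare them case by case. The burning-money payoff comes from Proposition \ref{Prop_BM_2sided}, and the audience-cost payoff from the two equilibrium regimes of Proposition \ref{AC2_Limitation}.

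\emph{Step 1: the benchmark with commitment.} By Proposition \ref{Prop_BM_2sided}, the defender fully separates, and the costly action $s^b_1$ is always sent by the less likely type. The type revealed as low (posterior $0$) is attacked surely, since $F_0=F(0)=0$, and concedes, so its gross payoff is $0$. The type revealed as high gets $\widehat v^b(1,0)=W^H+F_1\delta$. Hence
\begin{equation*}
V^{bc}(\mu_0)=\mu_0\,(W^H+F_1\delta)-\min\{\mu_0,1-\mu_0\}\,s^b_1 .
\end{equation*}
Here $\mu_0(W^H+F_1\delta)=\mu_0\widehat v^b(1,0)+(1-\mu_0)\widehat v^b(0,0)$ is the chord of $\widehat v^b(\cdot,0)$ between $\mu=0$ and $\mu=1$.

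\emph{Step 2: the pooling regime,} $\overline s^t\le \frac{F_{\mu_0}}{1-F_{\mu_0}}\omega^L$. Comparing \eqref{payoff_uncertain_AC} with \eqref{payoff_uncertain_BM}, the audience-cost payoff is
\begin{equation*}
\widehat v^t(\mu_0,\overline s^t)=\widehat v^b(\mu_0,0)-(1-F_{\mu_0})(1-\mu_0)\overline s^t\le \widehat v^b(\mu_0,0).
\end{equation*}
It therefore suffices to show that the chord exceeds $\widehat v^b(\mu_0,0)$ by more than $\min\{\mu_0,1-\mu_0\}s^b_1$. Convexity alone (Assumption \ref{assum_2sided}-(\ref{assum_convex})) gives a strict gap, but the gap must be quantified.

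To do so, compare with the piecewise-linear interpolant of $\widehat v^b(\cdot,0)$ through the nodes $0,\tfrac12,1$. By convexity, $\widehat v^b(\cdot,0)$ lies weakly below this interpolant. In turn, the interpolant lies below the chord by exactly $2\min\{\mu_0,1-\mu_0\}\,G$, where $G$ is the chord-minus-function gap at $\tfrac12$. A direct computation from \eqref{payoff_uncertain_BM} gives
\begin{equation*}
2G=(F_1-F_{0.5})\delta-F_{0.5}\,\omega^L .
\end{equation*}
So the chord exceeds $\widehat v^b(\mu_0,0)$ by at least $\min\{\mu_0,1-\mu_0\}\bigl[(F_1-F_{0.5})\delta-F_{0.5}\omega^L\bigr]$. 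By Assumption \ref{assum_2sided}-(\ref{assum_sb1_1/2}), this strictly exceeds $\min\{\mu_0,1-\mu_0\}s^b_1$. This yields the strict inequality for every interior $\mu_0$.

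\emph{Step 3: the semi-separating regime,} $\frac{F_{\mu_0}}{1-F_{\mu_0}}\omega^L<\overline s^t<s^*$. The low type mixes, so it is indifferent between $s=0$ and $\overline s^t$. Sending $s=0$ yields posterior $0$ and payoff $0$. Writing $\mu^*=\mu_{s=\overline s^t}$, indifference requires $F_{\mu^*}\omega^L-(1-F_{\mu^*})\overline s^t=0$, that is, $F_{\mu^*}=\overline s^t/(\omega^L+\overline s^t)$. The high type always sends $\overline s^t$, fights if attacked, and never pays the audience cost, so it earns $W^H+F_{\mu^*}\delta$. The ex ante audience-cost payoff is therefore
\begin{equation*}
\mu_0\,(W^H+F_{\mu^*}\delta).
\end{equation*}
Subtracting, and using $\min\{\mu_0,1-\mu_0\}\le\mu_0$, gives
\begin{equation*}
V^{bc}(\mu_0)-\mu_0(W^H+F_{\mu^*}\delta)\ \ge\ \mu_0\Bigl[\Bigl(F_1-\tfrac{\overline s^t}{\omega^L+\overline s^t}\Bigr)\delta-s^b_1\Bigr]>0 ,
\end{equation*}
where the last inequality is exactly Assumption \ref{assum_2sided}-(\ref{assum_sb1_sbart}).

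The main obstacle is Step 2. Convexity alone only gives a strict but unquantified gap, and the gap must dominate $\min\{\mu_0,1-\mu_0\}s^b_1$ uniformly in $\mu_0$. The three-node interpolation through $\tfrac12$ is designed to match the $\min\{\mu_0,1-\mu_0\}$ factor, which is why (\ref{assum_sb1_1/2}) is stated at $F_{0.5}$. Step 3 is routine, provided the indifference condition pins down $F_{\mu^*}$ as computed.
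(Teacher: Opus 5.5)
Your proposal is correct and follows the paper's proof. In the pooling regime you bound the audience-cost payoff by $\widehat v^b(\mu_0,0)$ and show that the commitment payoff strictly exceeds it, re-deriving explicitly, via the interpolation through $\tfrac12$, the strict gap the paper imports from the proof of Proposition \ref{Prop_BM_2sided}; in the semi-separating regime you reduce to Assumption \ref{assum_2sided}-(\ref{assum_sb1_sbart}) using $\min\{\mu_0,1-\mu_0\}\le\mu_0$, where the paper instead argues via the steeper slope of $V^{jo}$ above $\tfrac12$, and the two arguments are equivalent.
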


Intuitively, when the large audience costs are infeasible, the equilibria are not separating. Moreover, the low type bluffs and sometimes incurs the audience cost. Both effects reduce the value of the audience-cost setting to the defender. By contrast, in burning money with commitment, not only can the defender separate, but it can also do so at a minimal cost as long as a sufficiently small burning-money action is available.

\section{Probability of War}

We analyzed the equilibria and compared defender payoffs under different signaling settings. While the defender maximizes its payoffs, from a social planner's perspective, it may not internalize all the costs of war. We now compare the probability of war under different signaling settings.

\begin{prop}[Equilibrium Probability of War]\label{WarProb}
    Suppose Assumption \ref{assum_2sided}-(\ref{assum_convex}), (\ref{assum_sb1_F1}), (\ref{assum_sb1_1/2}), (\ref{assum_rich_F1}), and (\ref{assum_rich_s+Delta}) hold. In equilibrium, the burning-money setting has the same ex ante probability of war with and without commitment. Moreover,
    \begin{itemize}
        \item When $\overline{s}^t\ge s^*$, the audience-cost setting yields a strictly greater probability of war than the burning-money setting if $-p\omega^L+c_d<\frac{F_1}{1-F_1}\,\omega^L$ and $F_1<1$. If $-p\omega^L+c_d>\frac{F_1}{1-F_1}\,\omega^L$ or $F_1=1$, the two settings have the same equilibrium probability of war.         \item When $\overline{s}^t< s^*$, the audience-cost setting yields a strictly greater probability of war than the burning-money setting.
    \end{itemize}
\end{prop}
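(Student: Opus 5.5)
The plan is to read off equilibrium play from Propositions \ref{benchmark_BM2}, \ref{Prop_BM_2sided}, \ref{AC2_Large}, and \ref{AC2_Limitation}, and then compute the ex ante probability of war in each setting. War occurs exactly when the challenger attacks and the defender fights.

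\textbf{Burning money.} In both Proposition \ref{benchmark_BM2} and Proposition \ref{Prop_BM_2sided} the types fully separate: the low type's signal induces $\mu=0$, and the high type's signal induces $\mu=1$. The low type never fights, so it contributes no war, whether or not it is attacked. The high type is attacked iff $v_c>v^*_c(1)$ and then fights. The war probability is therefore $\mu_0(1-F_1)$ under both commitment and no commitment; only the labels of the signals differ. This gives the first claim.

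\textbf{Large audience costs.} If $-p\omega^L+c_d>\frac{F_1}{1-F_1}\omega^L$, the first bullet of Proposition \ref{AC2_Large} is again separating with the same attack and fight rules, so war occurs with probability $\mu_0(1-F_1)$, equal to burning money. If the inequality is reversed, both types send $s\ge c_d-p\omega^L$, are attacked iff $v_c>v^*_c(1)$, and both fight. War then occurs with probability $1-F_1$, which exceeds $\mu_0(1-F_1)$ iff $F_1<1$ (since $\mu_0<1$). If $F_1=1$, both probabilities are zero. I will note that the knife-edge case of equality is excluded by the statement.

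\textbf{Small audience costs.} In Proposition \ref{AC2_Limitation}, only the high type fights, so war occurs with probability $\mu_0\Pr(\text{attack}\mid\omega^H)$. The high type always sends $\overline s^t$, which induces posterior $\mu':=\mu_{s=\overline s^t}\le\mu_0<1$ in either the semi-separating or the pooling case. Hence the war probability is $\mu_0(1-F_{\mu'})$.

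The key step is to show $F_{\mu'}<F_1$. Since $v^*_c$ is strictly increasing and $\mu'<1$, we have $v^*_c(\mu')<v^*_c(1)=c_c/(1-p)$. Because $f>0$ on $(0,c_c/(1-p))$, it follows that $F(v^*_c(\mu'))<F(c_c/(1-p))=F_1$. This gives strictly more war than $\mu_0(1-F_1)$.

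The main obstacle I expect is confirming that $\mu'<1$ strictly in the semi-separating case, i.e., that the low type mixes with positive probability on $\overline s^t$. I would establish this from the indifference condition of the low type, $F_{\mu'}\omega^L-(1-F_{\mu'})\overline s^t=0$, which gives $F_{\mu'}=\overline s^t/(\omega^L+\overline s^t)$. Then $\overline s^t<s^*\le\frac{F_1}{1-F_1}\omega^L$ implies $F_{\mu'}<F_1$ directly, and this also handles the pooling case via $\mu'=\mu_0<1$.
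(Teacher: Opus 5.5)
Your proposal is correct and follows the paper's proof essentially step for step: separation gives $\mu_0(1-F_1)$ in both burning-money settings, the large-audience-cost cases split into separation ($\mu_0(1-F_1)$) or commitment ($1-F_1$), and the small-audience-cost case uses the low type's indifference condition $F_{\mu'}=\overline s^t/(\omega^L+\overline s^t)<F_1$ (semi-separating) and $F_{\mu_0}<F_1$ (pooling). One slip: in the semi-separating equilibrium the posterior after $\overline s^t$ lies in $(\mu_0,1)$, not at or below $\mu_0$, because every high type but only some low types send it. This does not affect the proof, since you only need $\mu'<1$ (equivalently $F_{\mu'}<F_1$), and your final paragraph secures that correctly.
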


Separating equilibria minimize the probability of war because war happens only when an undeterrable challenger attacks the high type. The equilibria of the burning-money setting have this feature. By contrast, only when both large audience costs are feasible and the low type does not want to commit to war do audience-cost settings deliver the same minimal probability of war. In all other circumstances, the probability of war is strictly higher in audience-cost settings.

When large audience costs are feasible and the low type commits to fight, this increases the probability of fighting because an undeterrable challenger will attack and will be fought not only by high types but also by low types. When large audience costs are infeasible, the low type will never fight, but because separation is not possible, challengers who would not attack a high type now attack because they believe that there is a chance that the defender is a bluffing low type pretending to be the high type.

\section{Conclusion}

Academics and practitioners view signaling resolve as a key element of deterrence strategies. Two primary classes of signaling are burning money and burning bridges or creating audience costs. When the defender can create sufficiently large audience costs, it effectively commits to fighting if challenged. We proposed a new signaling approach, burning money with commitment, in which the defender maintains its flexibility to fight or back down, but ex ante commits to \textit{how} it burns money to signal resolve. We showed that this new approach can robustly improve the defender's payoffs and at the same time reduce the likelihood of war and discussed how this approach can be implemented by law and preestablished plans or automated algorithms. We also compared the payoffs of audience-cost and burning-money settings without commitment, characterizing when each dominates and highlighting the nonmonotonic relationship between the defender’s payoff and its capacity to create audience costs: the ability to create more audience costs can harm the defender.

Two directions for future research stand out. First, we focused on the canonical model of \cite{fearonSignalingForeignPolicy1997}. It will be beneficial to analyze settings with more general forms of payoffs than \eqref{eqn:u}-\eqref{eqn:v-t}. In the same vein, it may be helpful to allow for information processing costs \citep{simsImplicationsRationalInattention2003,caplinDeanRevealedPreference2015,dentiPosteriorSeparableCost2022,pomattoCostInformation2023} so that actions that are too similar to each other may be misinterpreted by adversaries. Second, it is valuable to analyze settings with imperfect commitment in which the defender is able to alter its preestablished plan ex post or intervene in the automated algorithm with some likelihood or to some extent. These directions are left to future research.

\newpage
\begin{singlespace}
\markeverypar{\the\everypar\looseness=0\relax}
\printbibliography[title={References}]
\end{singlespace}

\newpage
\appendix
\section*{Appendix}

\section{Additional Model Details}
Here, we provide a sufficient condition for Assumption \ref{assum_2sided}-(\ref{assum_convex}), which requires that the defender's interim payoff is strictly convex.
\begin{prop}\label{prop_convex}
The interim payoff $\widehat v^b(\mu,s)$ is strictly convex in $\mu\in(0,1)$ if $ \inf_{x} (\log(f(x)))^\prime\geq-2(1-p)p/c_c$, where $x\in\left(0,\frac{c_c}{1-p}\right)$. Moreover, this condition holds in the following cases.
\begin{enumerate}
    \item (Uniform) Let $v_c\sim U\left[0,\overline{v}_c\right]$ with $\overline{v}_c>\frac{c_c}{1-p}$ so that $f(x)=1/\overline{v}_c$ and $f^\prime(x)=0$ for $x\in\left(0,\frac{c_c}{1-p}\right)$. Then, the interim payoffs are convex. 
    
    \item (Exponential) Let $f(x)=\lambda e^{-\lambda x}$, where $x\geq0$ and $\lambda>0$. Then, the interim payoffs are convex if $1\leq 2(1-p)p\; (1/\lambda)/c_c= 2(1-p)p\; \mathbb{E}[v_c]/c_c$.

    \item (Lognormal) Let $f(x)=\frac{1}{x\sigma\sqrt{2\pi}}
\exp\left\{-\frac{(\log x-m)^2}{2\sigma^2}
\right\}$, where $x>0$ and $m$  is the mean of $\log v_c$. Then, the interim payoffs are convex if $1 \leq 2(1-p)p\; \sigma^2 e^{m-\sigma^2+1}/c_c=2(1-p)p\; \sigma^2 e^{(1-\sigma^2/2-\sigma^2)}\; \mathbb{E}[v_c]/c_c$.
\end{enumerate}
\end{prop}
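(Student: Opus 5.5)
The plan is a direct second-derivative computation on \eqref{payoff_uncertain_BM}, followed by a bound that holds uniformly in $\mu$, and then a check of the log-density condition for each of the three families. First I rewrite the interim payoff using $\omega^H-W^H=\delta$:
\[
\widehat v^b(\mu,s)=-s+\mu W^H+F\bigl(x(\mu)\bigr)\,g(\mu),\qquad x(\mu):=v_c^*(\mu)=\frac{\mu c_c}{1-\mu p},\qquad g(\mu):=\mu\delta+(1-\mu)\omega^L .
\]
The first two terms are affine in $\mu$, so only $F(x(\mu))g(\mu)$ matters. Direct differentiation gives $x'(\mu)=c_c/(1-\mu p)^2>0$ and $x''(\mu)=2p\,x'(\mu)/(1-\mu p)$. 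Since $g$ is linear,
\[
\frac{\partial^2\widehat v^b}{\partial\mu^2}=\Bigl[f'(x)\,x'^2+f(x)\,x''\Bigr]g(\mu)+2f(x)\,x'\,g'(\mu).
\]
Twice-differentiability of $F$ is what makes this legitimate.

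Next I sign the pieces. We have $g(\mu)>0$ because $\delta>0$ and $\omega^L>0$. I also need $g'=\delta-\omega^L=D>0$. This follows from $W^L<0$: writing $D=(1-p)(\omega^H-\omega^L)+(c_d-p\omega^L)$, both summands are positive. For $\mu\in(0,1)$, $x(\mu)$ ranges over $(0,c_c/(1-p))$, where $f>0$, so the last term $2fx'g'$ is strictly positive. It therefore suffices to show that the bracket is nonnegative, i.e.
\[
(\log f)'(x)\ \ge\ -\frac{x''}{x'^2}=-\frac{2p(1-\mu p)}{c_c}.
\]
Because $1-\mu p\ge 1-p$, the right-hand side is at most $-2p(1-p)/c_c$. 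Hence the hypothesis $\inf_x(\log f)'\ge -2(1-p)p/c_c$ on $(0,c_c/(1-p))$ gives a nonnegative bracket for every $\mu$, and strict convexity follows. The step that needs care is exactly this: the term without $f'$ has to be strictly positive, which is where $D>0$ (coming from $W^L<0$) enters. Once that is seen, the rest is a uniform-in-$\mu$ bound.

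For the examples, I verify the sufficient condition directly; for the non-uniform cases I bound the infimum over all $x>0$, which is weaker than the infimum over the subinterval and therefore suffices.
\begin{enumerate}
\item \emph{Uniform.} Here $(\log f)'=0$, and the condition holds trivially.
\item \emph{Exponential.} Here $(\log f)'=-\lambda$, so the condition is $\lambda\le 2(1-p)p/c_c$. Using $\mathbb E[v_c]=1/\lambda$, this is the stated inequality.
\item \emph{Lognormal.} Here
\[
(\log f)'(x)=-\frac{\sigma^2+\log x-m}{\sigma^2 x}.
\]
Substituting $y=\log x$, I minimize $-(\sigma^2+y-m)e^{-y}/\sigma^2$ over $y$. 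The first-order condition gives $y=m-\sigma^2+1$, and the minimum value is $-e^{-(m-\sigma^2+1)}/\sigma^2$. The condition then becomes $1\le 2(1-p)p\,\sigma^2e^{m-\sigma^2+1}/c_c$. Finally, $\mathbb E[v_c]=e^{m+\sigma^2/2}$ rewrites $e^{m-\sigma^2+1}=e^{1-\sigma^2/2-\sigma^2}\,\mathbb E[v_c]$, which yields the displayed form.
\end{enumerate}
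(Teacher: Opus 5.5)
Your proof is correct and follows essentially the same route as the paper: you differentiate \eqref{payoff_uncertain_BM} twice, reduce convexity to a lower bound on $f'/f$ that is implied uniformly in $\mu$ by $1-\mu p\ge 1-p$, and then compute $\inf(\log f)'$ for the uniform, exponential, and lognormal cases. The only minor difference is that you keep the cross term $2f\,x'g'$ separate and sign it strictly positive using $D>0$, whereas the paper folds that term into the threshold before bounding it; your version makes the strictness of the convexity slightly more explicit.
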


\begin{proof}
From equation \eqref{payoff_uncertain_BM}, it suffices to examine $\widehat v^b(\mu,0)$. 
Let $\displaystyle g(\mu)=\mu c_c/(1-\mu p)$, so that $g^\prime(\mu)=c_c/(1-\mu p)^2>0$ and $g^{\prime\prime}(\mu)=2pg^\prime(\mu)/(1-\mu p)>0$. The first derivative is      \begin{eqnarray*}
        \frac{\mathrm{d}}{\mathrm{d}\mu}\widehat v^b(\mu,s=0) &=& \underbrace{((1-p)\omega^H-\omega^L+c_d)}_{>0 \text{~by~}W^L<0<W^H}\left[\underbrace{F(g(\mu))}_{\ge0}+\mu \underbrace{f(g(\mu))g^\prime(\mu)}_{\ge0}\right]\notag\\
        && +\omega^L f(g(\mu))g^\prime(\mu)+\underbrace{p\omega^H-c_d}_{>0}\\
        &>& 0.
    \end{eqnarray*}
    Next, the second derivative is
    \begin{eqnarray*}
        \frac{\mathrm{d}^2}{\mathrm{d}\mu^2}\widehat v^b(\mu,s=0) &=& \left(\mu\left((1-p)\omega^H-\omega^L+c_d\right)+\omega^L\right)\left[f^\prime(g(\mu))(g^\prime(\mu))^2+f(g(\mu))g^{\prime\prime}(\mu)\right]\notag\\
        && +2(\omega^H(1-p)-\omega^L+c_d)f(g(\mu))g^\prime(\mu).
    \end{eqnarray*}
    Rearranging this and substituting $g^\prime(\mu)=c_c/(1-\mu p)^2$ and $g^{\prime\prime}(\mu)=2c_c p/(1-\mu p)^3$, we have $ \frac{\mathrm{d}^2}{\mathrm{d}\mu^2}\widehat v^b(\mu,s=0) \ge 0$ if
    \begin{eqnarray}
        \frac{f^\prime\left(\frac{\mu c_c}{1-\mu p}\right)}{f\left(\frac{\mu c_c}{1-\mu p}\right)} &\ge& -2(1-\mu p)\frac{p}{c_c}\left[1+\frac{1-\mu p}{p}\frac{(1-p)\omega^H-\omega^L+c_d}{\mu\left((1-p)\omega^H-\omega^L+c_d\right)+\omega^L}\right]
        \label{ineq_convex}
    \end{eqnarray}
    for all $\mu$. Considering the right-hand side of inequality \eqref{ineq_convex}, we have
\begin{eqnarray*}
 -2\frac{(1-p)p}{c_c}\left[1+\frac{1- p}{p}\frac{(1-p)\omega^H-\omega^L+c_d}{(1-p)\omega^H+c_d}\right]
    &\leq& -2\frac{(1-p)p}{c_c}\left[1+\frac{(1- p)^2}{p}\frac{\omega^H-\omega^L}{\omega^H}\right]\\
    &\leq&-\frac{2(1-p)p}{c_c},
\end{eqnarray*}
where the first and second inequalities follow from $W^L<0<W^H$. Thus, inequality \eqref{ineq_convex} holds if $\min_{x} \frac{f^\prime(x)}{f(x)}
    \geq-\frac{2(1-p)p}{c_c}$.

    Now consider the case of a uniform distribution $v_c\sim U\left[0,\overline{v}_c\right]$ with $\overline{v}_c>\frac{c_c}{1-p}$. Then, for $x\in\left(0,\frac{c_c}{1-p}\right)$, we have $f(x)=1/\overline{v}_c$, $f^\prime(x)=0$, and thus $(\log f(x))^\prime=0\ge-2(1-p)p/c_c$.

    Next, consider the case of the exponential distribution. Observe $\log f(x)=\log\lambda-\lambda x$ and thus $(\log f(x))^\prime=-\lambda=\min_x(\log f(x))^\prime$. Hence, by $\mathbb{E}[v_c]=1/\lambda$, rearranging $\min_x(\log f(x))^\prime=-\lambda\ge-\frac{2(1-p)p}{c_c}$ leads to
    \begin{eqnarray*}
        1\le\frac{2(1-p)p}{c_c}\mathbb{E}[v_c].
    \end{eqnarray*}

    Finally, consider the case of the lognormal distribution. By $\log f(x)=-\log x-\log\sigma-\frac{1}{2}\log(2\pi)-\frac{(\log x-m)^2}{2\sigma^2}$, we have $(\log f(x))^\prime=-\frac{1}{x}-\frac{\log x-m}{\sigma^2x}=\frac{m-\sigma^2-\log x}{\sigma^2x}$, and thus $(\log f(x))^{\prime\prime}=\frac{\log x-m+\sigma^2-1}{\sigma^2x^2}$. Here, note that $\lim_{x\to0}(\log f(x))^\prime=+\infty$ and $\lim_{x\to+\infty}(\log f(x))^\prime=0$. Because $(\log f(x))^\prime$ is decreasing for $x<\exp\left\{m-\sigma^2+1\right\}$ and increasing for $x>\exp\left\{m-\sigma^2+1\right\}$, we obtain $\argmin (\log f(x))^\prime=\exp\left\{m-\sigma^2+1\right\}$ and $\min_x(\log f(x))^\prime=-\frac{1}{\sigma^2\exp\{m-\sigma^2+1\}}$. Thus, by $\mathbb{E}[v_c]=\exp\left\{m+\sigma^2/2\right\}$, substituting this into and rearranging $ \min_{x} (\log(f(x)))^\prime\geq-2(1-p)p/c_c$ yields
        $1\le\frac{2(1-p)p}{c_c}\sigma^2e^{1-3\sigma^2/2}\, \mathbb{E}[v_c].$
\end{proof}

Moreover, we can use the same condition for the audience-cost setting for $s<-p\omega^L+c_d$.
\begin{prop}\label{prop_convex_AC}
    $\widehat v^t(\mu,s)$ for $s<-p\omega^L+c_d$ is strictly convex in $\mu$ if $\inf_{x<c_c/(1-p)} (\log(f(x)))^\prime\geq-2(1-p)p/c_c$.
\end{prop}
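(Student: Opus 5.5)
The plan is to mirror the proof of Proposition \ref{prop_convex}. The key observation is that when $s<-p\omega^L+c_d$, the low type never fights when attacked. Only the first indicator in \eqref{payoff_uncertain_AC} is active, so $\widehat v^t(\mu,s)$ has the same functional form as $\widehat v^b(\mu,0)$, with $\omega^L$ shifted by $s$ and an additional linear term.

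\textbf{Step 1: rewrite the payoff.} With $g(\mu)=\mu c_c/(1-\mu p)$ as before, I will write
\begin{equation*}
\widehat v^t(\mu,s)=\mu W^H-(1-\mu)s+F(g(\mu))\,\bigl[\mu A+B\bigr],\qquad A:=(1-p)\omega^H+c_d-\omega^L-s,\quad B:=\omega^L+s.
\end{equation*}
The terms $\mu W^H-(1-\mu)s$ are linear in $\mu$, so convexity depends only on $\phi(\mu):=F(g(\mu))(\mu A+B)$. I will record two sign facts.
\begin{itemize}
\item $B>0$, since $s\ge 0$ and $\omega^L>0$.
\item $A>0$, since $s<c_d-p\omega^L$ gives $A>(1-p)\omega^H-\omega^L+p\omega^L=(1-p)(\omega^H-\omega^L)>0$.
\end{itemize}
Hence $\mu A+B>0$ on $[0,1]$.

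\textbf{Step 2: differentiate twice.} Using $g'=c_c/(1-\mu p)^2>0$ and $g''=2pg'/(1-\mu p)$, I will compute
\begin{equation*}
\phi''(\mu)=(\mu A+B)\bigl[f'(g)\,(g')^2+f(g)\,g''\bigr]+2A\,f(g)\,g'.
\end{equation*}
I will then divide by $f(g)(g')^2(\mu A+B)>0$. For $\mu\in(0,1)$ we have $g(\mu)\in(0,c_c/(1-p))$, which is exactly the range where $f>0$ and where the hypothesis is imposed. After dividing, $\phi''>0$ is equivalent to
\begin{equation*}
\frac{f'(g(\mu))}{f(g(\mu))}>-\frac{2p(1-\mu p)}{c_c}\Bigl[1+\frac{1-\mu p}{p}\cdot\frac{A}{\mu A+B}\Bigr].
\end{equation*}

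\textbf{Step 3: bound the right-hand side.} I will bound this uniformly in $\mu$. The bracketed term is strictly greater than $1$ because $A>0$ and $\mu A+B>0$. Also $1-\mu p\ge 1-p$. Together these give
\begin{equation*}
\text{RHS}<-\frac{2p(1-p)}{c_c}.
\end{equation*}
The hypothesis $\inf_{x<c_c/(1-p)}(\log f(x))'\ge -2(1-p)p/c_c$ then yields the strict inequality $\phi''(\mu)>0$ on $(0,1)$, so $\widehat v^t(\cdot,s)$ is strictly convex.

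\textbf{Main obstacle.} The only real content is verifying $A>0$. This is where the restriction $s<-p\omega^L+c_d$ enters, together with $W^L<0<W^H$. The result is not immediate from Proposition \ref{prop_convex}, because raising $s$ lowers the coefficient $A$ relative to the burning-money analogue $(1-p)\omega^H-\omega^L+c_d$. Once $A>0$ is established, the remaining steps are the same algebra as in the proof of Proposition \ref{prop_convex}. The bound is actually simpler here, because the extra positive term in the bracket can simply be dropped, and that term is also what delivers strictness.
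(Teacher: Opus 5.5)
Your proposal is correct and is essentially the paper's proof. The paper uses the same rewriting ($\mu(p\omega^H-c_d+s)-s+F(g(\mu))[\mu(D-s)+\omega^L+s]$, with your $A=D-s$), the same second derivative, the same key bound $D-s>(1-p)(\omega^H-\omega^L)>0$, and the same estimate $\text{RHS}<-2(1-\mu p)p/c_c\le-2(1-p)p/c_c$. The only difference is cosmetic: the paper handles $s=0$ separately by appeal to Proposition \ref{prop_convex}, whereas your argument covers $s=0$ directly.
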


\begin{proof}
By $\widehat{v}^t(\mu,0)=\widehat{v}^b(\mu,0)$ and Proposition \ref{prop_convex}, $\widehat{v}^t(\mu,0)$ is strictly convex if $\min_x(\log f(x))^\prime\geq-\frac{2(1-p)p}{c_c}$. Thus, consider the case of $s\in(0,-p\omega^L+c_d)$. Using $g(\mu)=\frac{\mu c_c}{1-\mu p}$, $g^\prime(\mu)=\frac{c_c}{(1-\mu p)^2}$, and $g^{\prime\prime}(\mu)=\frac{2c_cp}{(1-\mu p)^3}$ by Proposition \ref{prop_convex}, we can rewrite equation (\ref{payoff_uncertain_AC}) as
\begin{eqnarray*}
    \widehat v^t(\mu,s)=\mu(p\omega^H-c_d+s)-s+F(g(\mu))\left[\mu(D-s)+\omega^L+s\right],
\end{eqnarray*}
where $D=(1-p)\omega^H+c_d-\omega^L$. Differentiating this twice with respect to $\mu$ yields
\begin{eqnarray*}
    \frac{\mathrm d^2}{\mathrm d\mu^2}\widehat v^t(\mu,s)=\left[\mu(D-s)+\omega^L+s\right]\left[f^\prime(g(\mu))(g^\prime(\mu))^2+f(g(\mu))g^{\prime\prime}(\mu)\right]+2(D-s)f(g(\mu))g^\prime(\mu).
\end{eqnarray*}
The assumption of $s<-p\omega^L+c_d$ here implies
\begin{eqnarray*}
    D-s &>& D-\left(-p\omega^L+c_d\right)\\
    &=& (1-p)(\omega^H-\omega^L)>0.
\end{eqnarray*}
Then, because $f(g(\mu))>0$, $g^\prime(\mu)>0$, and $\mu(D-s)+\omega^L+s>0$, the second derivative is strictly positive if
\begin{eqnarray}
    \frac{f^\prime(g(\mu))}{f(g(\mu))}>-2(1-\mu p)\frac{p}{c_c}\left[1+\frac{1-\mu p}{p}\frac{D-s}{\mu(D-s)+\omega^L+s}\right].
\label{ineq_convex_AC1}
\end{eqnarray}
The right-hand side of inequality
(\ref{ineq_convex_AC1}) satisfies
\begin{eqnarray*}
    -2(1-\mu p)\frac{p}{c_c}\left[1+\frac{1-\mu p}{p}\frac{D-s}{\mu(D-s)+\omega^L+s} \right] &<& -2(1-\mu p)\frac{p}{c_c}\\
    &\leq& -\frac{2(1-p)p}{c_c}.
\end{eqnarray*}
Because $g(\mu)\in(0,c_c/(1-p))$ for any $\mu\in(0,1)$, $\min_{x\in(0,c_c/(1-p))}(\log f(x))^\prime\geq-\frac{2(1-p)p}{c_c}$ implies inequality (\ref{ineq_convex_AC1}). Thus, $\widehat v^t(\mu,s)$ is strictly convex in $\mu$ for every $s\in[0,-p\omega^L+c_d)$.
\end{proof}

Next, we show that the right-hand side of Assumption \ref{assum_2sided}-(\ref{assum_sb1_1/2}) is strictly positive when Assumption \ref{assum_2sided}-(\ref{assum_convex}) holds.
\begin{remark}\label{remark_v-c_positive}
 Under Assumption \ref{assum_2sided}-(\ref{assum_convex}) we have $\left(F_1-F\left(\frac{c_c}{2-p}\right)\right)\delta-F\left(\frac{c_c}{2-p}\right)\omega^L>0$.
\end{remark}
\begin{proof}
    By Assumption \ref{assum_2sided}-(\ref{assum_convex}), $\widehat{v}^t(\mu,0)=\widehat{v}^b(\mu,0)$ is strictly convex in $\mu$. Thus, $\widehat{v}^b\left(\frac{1}{2},0\right)<\frac{1}{2}\widehat{v}^b\left(0,0\right)+\frac{1}{2}\widehat{v}^b\left(1,0\right)$. Because $\widehat{v}^b\left(0,0\right)=0$ from expression (\ref{payoff_uncertain_BM}) and $F(0)=0$, this is equivalent to $2\widehat{v}^b\left(\frac{1}{2},0\right)<\widehat{v}^b\left(1,0\right)$. 
    By expression (\ref{payoff_uncertain_BM}), we obtain
    \begin{eqnarray}
    2\, \widehat{v}^b(1/2,0)&=&F\left(v^*_c(1/2)\right)\left(\omega^H+\omega^L\right)+\left(1-F\left(v^*_c(1/2)\right)\right)\left(p\omega^H-c_d\right)\label{eqn:r1}\\
\widehat{v}^b(1,0)&=&F_1\, \omega^H+\left(1-F_1\right)\left(p\omega^H-c_d\right). \label{eqn:r2}
    \end{eqnarray}
Substituting $v^*_c(1/2)=\frac{c_c}{2-p}$ into \eqref{eqn:r1} and \eqref{eqn:r2}, we have: $2\widehat{v}^b\left(\frac{1}{2},0\right)<\widehat{v}^b\left(1,0\right)$ if and only if $\left(F_1-F\left(\frac{c_c}{2-p}\right)\right)\delta-F\left(\frac{c_c}{2-p}\right)\omega^L>0$.
\end{proof}

\section{Proofs}

\begin{proof}[Proof of Proposition \ref{benchmark_BM2}]
    Consider the following cases.
    
    \textbf{Step 1} (Separating). First, consider the case in which the high type of the defender sends $s=s^\prime$ and the low type sends $s=s^{\prime\prime}$, with off-path beliefs $\mu_{s<s^\prime}=0$ and $\mu_{s>s^\prime}=1$. Observe that, in any separating equilibrium, we must have $s^{\prime\prime}=0$. Otherwise, the low type would receive a strictly negative payoff and have an incentive to deviate to $s=0$. The low type does not have the incentive to deviate to $s^\prime$ when
    \begin{eqnarray*}
        0 &\ge& F_1\omega^L+\left(1-F_1\right)\times0-s^\prime\\
        s^\prime &\ge& F_1\omega^L.
    \end{eqnarray*}
    Because of $\mu_{s<s^\prime}=0$, the low type does not deviate to any $s\in\left(0,s^\prime\right)$. Also, while $\mu_{s>s^\prime}=1$, the low type does not have the incentive to deviate to $s>s^\prime$ given $F_1\omega^L-s<F_1\omega^L-s^\prime\le0$.
    
    Under the above off-path beliefs, the high type does not have the incentive to deviate to a larger $s$ because $\mu_{s=s^\prime}$ is already $1$. Because $\mu_{s}=0$ for any $s<s^\prime$, if there is a profitable deviation from $s^\prime$, it must be $s=0$. The high type does not benefit from deviating to $s=0$ when
    \begin{eqnarray*}
        F_1\omega^H+\left(1-F_1\right)\left(p\omega^H-c_d\right)-s^\prime &\ge& p\omega^H-c_d\\
        s^\prime &\le& F_1\delta.
    \end{eqnarray*}
    Thus, $s^\prime$ must be in the interval $\left[F_1\omega^L,F_1\delta\right]$. By $W^L<0<W^H$, we have $\delta>\omega^L$, and thus, this interval is nonempty. Also, observe that the high type's payoff is decreasing in $s^\prime$. Thus, the optimal signal is $s^\prime=F_1\omega^L$, which is in $S^b$ by Assumption \ref{assum_2sided}-(\ref{assum_rich_F1}).

    For any off-path signal $s\in(0,F_1\omega^L)$, the D1 criterion assigns probability zero to the high type, and thus requires $\mu_s=0$. For $s>F_1\omega^L$, neither type can benefit even under $\mu_s=1$, and thus the criterion imposes no restriction. Thus, the off-path beliefs supporting the separating equilibrium satisfy the D1 criterion.

    \textbf{Step 2} (Pooling on $s=0$). Next, consider a pooling equilibrium where both types send $s=0$ with off-path beliefs $\mu_{s>0}\le\mu_0$. Under the pessimistic off-path beliefs, neither type has the incentive to deviate to strictly positive $s$. Thus, we examine the D1 criterion. Observe that we can express the challenger's mixed best response as the probability of not challenging, $F\left(v_c^*(\mu_s)\right)\in[0,F_1]$. Because the equilibrium payoff and the payoff from deviating to an off-path signal $s$ for the low type are $F_{\mu_0}\omega^L$ and $F\left(v^*_c(\mu_s)\right)\omega^L-s$, respectively, the low type has a weak incentive to deviate when
    \begin{eqnarray*}
        F\left(v^*_c(\mu_s)\right)\ge F_{\mu_0}+\frac{s}{\omega^L}.
    \end{eqnarray*}
    On the other hand, the high type's equilibrium payoff and the payoff from deviation are $F_{\mu_0}\omega^H+(1-F_{\mu_0})\left(p\omega^H-c_d\right)$ and $F\left(v^*_c(\mu_s)\right)\omega^H+\left(1-F\left(v^*_c(\mu_s)\right)\right)\left(p\omega^H-c_d\right)-s$, respectively. Thus, the high type has a strict incentive to deviate to an off-path $s>0$ when
    \begin{eqnarray*}
        F\left(v^*_c(\mu_s)\right)>F_{\mu_0}+\frac{s}{\delta}.
    \end{eqnarray*}
    Because $F_{\mu_0}+s/\omega^L>F_{\mu_0}+s/\delta$ by $W^L<0<W^H$, the high type has a strict incentive to deviate whenever the low type has a weak incentive to deviate. The D1 criterion requires $\mu_s=1$ for such $s$. Because this off-path signal yields $F\left(v^*_c(\mu_s)\right)=F_1$, the high type gains $\left(p\omega^H-c_d+F_1\delta-s\right)-\left(p\omega^H-c_d+F_{\mu_0}\delta\right)=\Delta \delta-s$ from deviating, where $\Delta=(F_1-F_{\mu_0})$. Thus, the high type would deviate to any feasible $s\in(0,\Delta \delta)$. By Assumption \ref{assum_2sided}-(\ref{assum_rich_s+Delta}), there exists a feasible signal in the interval. Hence, this equilibrium violates the D1 criterion.
    
    \textbf{Step 3} (Pooling on $s>0$). Consider a case in which both types of the defender send $s=s^\prime>0$ with off-path beliefs $\mu_{s\neq s^\prime}=0$. Given the off-path beliefs, the low type does not have the incentive to deviate to $s=0$ when $F_{\mu_0}\omega^L-s^\prime\ge0$, i.e., $s^\prime\le F_{\mu_0}\omega^L$. Similarly, the condition for the high type is $F_{\mu_0}\omega^H+\left(1-F_{\mu_0}\right)\left(p\omega^H-c_d\right)-s^\prime\ge p\omega^H-c_d$, i.e., $s^\prime\le F_{\mu_0}\left((1-p)\omega^H+c_d\right)$, which is always true whenever the low type does not deviate.
    For a given prior $\mu_0$, this pooling equilibrium exists when $s^\prime\le F_{\mu_0}\omega^L$ because the low type would otherwise deviate to $s=0$.
    
    We now examine the D1 criterion. The low type's equilibrium payoff and the payoff from deviating to some $s>s^\prime$ are $F_{\mu_0}\omega^L-s^\prime$ and $F\left(v^*_c(\mu_s)\right)\omega^L-s$, respectively. Hence, the low type is weakly incentivized to deviate to such $s$ when
    \begin{eqnarray*}
        F\left(v^*_c(\mu_s)\right)\ge F_{\mu_0}+\frac{s-s^\prime}{\omega^L}.
    \end{eqnarray*}
    The high type's equilibrium payoff and the payoff from deviating to $s>s^\prime$ are, respectively, $F_{\mu_0}\omega^H+(1-F_{\mu_0})\left(p\omega^H-c_d\right)-s^\prime$ and $F\left(v^*_c(\mu_s)\right)\omega^H+\left(1-F\left(v^*_c(\mu_s)\right)\right)\left(p\omega^H-c_d\right)-s$. Thus, a deviation to $s>s^\prime$ is strictly beneficial for the high type when
    \begin{eqnarray*}
        F\left(v^*_c(\mu_s)\right)>F_{\mu_0}+\frac{s-s^\prime}{\delta}.
    \end{eqnarray*}
    By $W^L<0<W^H$, we know that $F_{\mu_0}+\left(s-s^\prime\right)/\omega^L>F_{\mu_0}+\left(s-s^\prime\right)/\delta$, and thus, the high type has a strict incentive to deviate to $s>s^\prime$ whenever the low type has a weak incentive to do so. The D1 criterion assigns $\mu_s=1$ to such $s$. Because we have $F\left(v^*_c(\mu_s)\right)=F_1$ for such $s$, the high type's benefit from deviating from $s^\prime>0$ to $s>s^\prime$ is $\Delta \delta-s+s^\prime$. Therefore, the high type has a strictly profitable deviation to any feasible $s\in\left(s^\prime,s^\prime+\Delta \delta\right)$. Assumption \ref{assum_2sided}-(\ref{assum_rich_s+Delta}) guarantees that the interval has a feasible signal for any $s^\prime\in\left(0,F_{\mu_0}\omega^L\right]$. Thus, a pooling equilibrium with $s^\prime>0$ violates the D1 criterion.

    \textbf{Step 4} (Semi-separating). Consider a case in which the high type sends $s=s^\prime>0$, and the low type sends $s=s^\prime$ with probability $q\in(0,1)$ and $s=0$ with probability $1-q$ with off-path beliefs $\mu_{s<s^\prime}=0$ and $\mu_{s>s^\prime}=1$. It is easy to see that the signal only from the low type must be $s=0$ because the posterior is zero; for any positive $s$, the low type would deviate to a smaller $s$. We show that a semi-separating equilibrium in which the low type mixes violates the D1 criterion.
    
    Because the low type's indifference condition between sending $s=s^\prime$ and $s=0$ is $F\left(v^*_c(\mu_{s=s^\prime})\right)\omega^L-s^\prime=0$, we obtain $s^\prime=F\left(v^*_c(\mu_{s=s^\prime})\right)\omega^L$. Also observe that, by Bayes' rule, the posterior $\mu_{s=s^\prime}=\frac{\mu_0}{\mu_0+(1-\mu_0)q}\in(\mu_0,1)$ implies $s^\prime=F\left(v^*_c(\mu_{s=s^\prime})\right)\omega^L\in(F_{\mu_0}\omega^L,F_1\omega^L)$. Consider an off-path signal $s=F_1\omega^L$, which is feasible by Assumption \ref{assum_2sided}-(\ref{assum_rich_F1}). The low type has a weakly profitable deviation to $s=F_1\omega^L$ when
    \begin{eqnarray*}
        F\left(v^*_c(\mu_{s=F_1\omega^L})\right)\ge F\left(v^*_c(\mu_{s=s^\prime})\right)+\frac{F_1\omega^L-s^\prime}{\omega^L}
    \end{eqnarray*}
    because the equilibrium payoff and the payoff from deviation are $F\left(v^*_c(\mu_{s=s^\prime})\right)\omega^L-s^\prime$ and $F\left(v^*_c(\mu_{s=F_1\omega^L})\right)\omega^L-F_1\omega^L$, respectively. The high type's equilibrium payoff and the payoff from deviation are $F\left(v^*_c(\mu_{s=s^\prime})\right)\omega^H+\left(1-F\left(v^*_c(\mu_{s=s^\prime})\right)\right)\left(p\omega^H-c_d\right)-s^\prime$ and $F\left(v^*_c(\mu_{s=F_1\omega^L})\right)\omega^H+\left(1-F\left(v^*_c(\mu_{s=F_1\omega^L})\right)\right)\left(p\omega^H-c_d\right)-F_1\omega^L$. Hence, the high type has a strictly profitable deviation to $s=F_1\omega^L$ when
    \begin{eqnarray*}
        F\left(v^*_c(\mu_{s=F_1\omega^L})\right)> F\left(v^*_c(\mu_{s=s^\prime})\right)+\frac{F_1\omega^L-s^\prime}{\delta}.
    \end{eqnarray*}
    Because we know $s^\prime<F_1\omega^L$ and $\delta>\omega^L$ by $W^L<0<W^H$, we have $F\left(v^*_c(\mu_{s=s^\prime})\right)+\frac{F_1\omega^L-s^\prime}{\omega^L}>F\left(v^*_c(\mu_{s=s^\prime})\right)+\frac{F_1\omega^L-s^\prime}{\delta}$. Thus, the high type has a strictly profitable deviation to $s=F_1\omega^L$ whenever the low type has a weakly profitable deviation. Thus, by the D1 criterion, we obtain $\mu_{s=F_1\omega^L}=1$. Then, the high type's profit from deviating to $s=F_1\omega^L$ is
    \begin{eqnarray*}
        \left(F_1-F\left(v^*_c(\mu_{s=s^\prime})\right)\right)\delta-F_1\omega^L+s^\prime=\left(F_1-F\left(v^*_c(\mu_{s=s^\prime})\right)\right)\left(\delta-\omega^L\right)>0.
    \end{eqnarray*}
    Because the high type has an incentive to deviate, a semi-separating equilibrium in which the low type mixes violates the D1 criterion.

    Next, consider a semi-separating equilibrium in which the low type sends $s=s^\prime$, and the high type sends $s=s^\prime$ with probability $q\in(0,1)$ and $s=s^{\prime\prime}$ with probability $1-q$, with off-path beliefs $\mu_{s<s^\prime}=0$, $\mu_{s\in\left(s^\prime,s^{\prime\prime}\right)}=\mu_{s=s^\prime}$, and $\mu_{s>s^{\prime\prime}}=1$. We show that a semi-separating equilibrium in which the high type mixes violates the D1 criterion.
    
    By Bayes' rule, we have $\mu_{s=s^\prime}=\frac{\mu_0q}{\mu_0q+(1-\mu_0)}<\mu_0$ and thus $F\left(v^*_c(\mu_{s=s^\prime})\right)<F_{\mu_0}$. The indifference condition of the high type yields
    \begin{eqnarray*}
        F\left(v^*_c(\mu_{s=s^\prime})\right)\omega^H+\left(1-F\left(v^*_c(\mu_{s=s^\prime})\right)\right)\left(p\omega^H-c_d\right)-s^\prime &=& F_1\omega^H+(1-F_1)\left(p\omega^H-c_d\right)-s^{\prime\prime}\\
        s^{\prime\prime} &=& s^\prime+\left(F_1-F\left(v^*_c(\mu_{s=s^\prime})\right)\right)\delta.
    \end{eqnarray*}
    We consider an off-path signal $s\in\left(s^\prime,s^{\prime\prime}\right)$. Assumption \ref{assum_2sided}-(\ref{assum_rich_s+Delta}) guarantees that there exists a feasible signal in the interval $\left(s^\prime,s^{\prime\prime}\right)$. To see this, observe that $s^\prime\le F\left(v^*_c(\mu_{s=s^\prime})\right)\omega^L<F_{\mu_0}\omega^L$ by $F\left(v^*_c(\mu_{s=s^\prime})\right)\omega^L-s^\prime\ge0$ and $F\left(v^*_c(\mu_{s=s^\prime})\right)<F_{\mu_0}$. Also observe that $s^{\prime\prime}=s^\prime+\left(F_1-F\left(v^*_c(\mu_{s=s^\prime})\right)\right)\delta>s^\prime+\Delta \delta$ by $F\left(v^*_c(\mu_{s=s^\prime})\right)<F_{\mu_0}$. Thus, we have $\left(s^\prime,s^\prime+\Delta \delta\right)\subset\left(s^\prime,s^{\prime\prime}\right)$. Then, there is some $s$ in $\left(s^\prime,s^{\prime\prime}\right)$ by Assumption \ref{assum_2sided}-(\ref{assum_rich_s+Delta}).
    
    The low type's equilibrium payoff and the payoff from deviating to $s\in\left(s^\prime,s^{\prime\prime}\right)$ are $F\left(v^*_c(\mu_{s=s^\prime})\right)\omega^L-s^\prime$ and $F\left(v^*_c(\mu_{s\in\left(s^\prime,s^{\prime\prime}\right)})\right)\omega^L-s$, respectively. Thus, the low type has a weak incentive to deviate to such $s$ when
    \begin{eqnarray*}
        F\left(v^*_c(\mu_{s\in\left(s^\prime,s^{\prime\prime}\right)})\right)\ge F\left(v^*_c(\mu_{s=s^\prime})\right)+\frac{s-s^\prime}{\omega^L}.
    \end{eqnarray*}
    On the other hand, the high type's equilibrium payoff and the payoff from deviation are respectively
    \begin{eqnarray*}
        && F\left(v^*_c(\mu_{s=s^\prime})\right)\omega^H+\left(1-F\left(v^*_c(\mu_{s=s^\prime})\right)\right)\left(p\omega^H-c_d\right)-s^\prime~\text{and}\\
        && F\left(v^*_c(\mu_{s\in\left(s^\prime,s^{\prime\prime}\right)})\right)\omega^H+\left(1-F\left(v^*_c(\mu_{s\in\left(s^\prime,s^{\prime\prime}\right)})\right)\right)\left(p\omega^H-c_d\right)-s.
    \end{eqnarray*}
    Hence, the high type has a strictly profitable deviation to $s\in\left(s^\prime,s^{\prime\prime}\right)$ when
    \begin{eqnarray*}
        F\left(v^*_c(\mu_{s\in\left(s^\prime,s^{\prime\prime}\right)})\right)>F\left(v^*_c(\mu_{s=s^\prime})\right)+\frac{s-s^\prime}{\delta}.
    \end{eqnarray*}
    By $\delta>\omega^L$ and $s>s^\prime$, we have $F\left(v^*_c(\mu_{s=s^\prime})\right)+\frac{s-s^\prime}{\omega^L}>F\left(v^*_c(\mu_{s=s^\prime})\right)+\frac{s-s^\prime}{\delta}$. Hence, whenever the low type has a weakly profitable deviation to $s\in\left(s^\prime,s^{\prime\prime}\right)$, the high type has a strictly profitable deviation to such $s$. This yields $\mu_{s\in\left(s^\prime,s^{\prime\prime}\right)}=1$. By deviating to $s\in\left(s^\prime,s^{\prime\prime}\right)$, the high type gains
    \begin{eqnarray*}
        \left(F_1-F\left(v^*_c(\mu_{s=s^\prime})\right)\right)\delta-\left(s-s^\prime\right).
    \end{eqnarray*}
    This is strictly positive by $s<s^{\prime\prime}=s^\prime+\left(F_1-F\left(v^*_c(\mu_{s=s^\prime})\right)\right)\delta$. Thus, because the high type deviates to $s\in\left(s^\prime,s^{\prime\prime}\right)$, a semi-separating equilibrium in which the high type mixes violates the D1 criterion.

    Finally, consider a case where both types send $s=s^\prime$ and $s=s^{\prime\prime}$ with positive probability, where $s^\prime<s^{\prime\prime}$. Suppose that such an equilibrium exists. Then, the indifference condition of the low type requires $F\left(v^*_c(\mu_{s=s^\prime})\right)\omega^L-s^\prime=F\left(v^*_c(\mu_{s=s^{\prime\prime}})\right)\omega^L-s^{\prime\prime}$, and thus 
    \begin{eqnarray}\label{eqn:indiff_L}
        s^{\prime\prime}-s^\prime=\left(F\left(v^*_c(\mu_{s=s^{\prime\prime}})\right)-F\left(v^*_c(\mu_{s=s^\prime})\right)\right)\omega^L.
    \end{eqnarray}
    Next, the high type's indifference condition yields
    {\small \begin{eqnarray}\label{eqn:indiff_H}
        F\left(v^*_c(\mu_{s=s^\prime})\right)\omega^H+\left(1-F\left(v^*_c(\mu_{s=s^\prime})\right)\right)W^H-s^\prime = F\left(v^*_c(\mu_{s=s^{\prime\prime}})\right)\omega^H+\left(1-F\left(v^*_c(\mu_{s=s^{\prime\prime}})\right)\right)W^H-s^{\prime\prime}\nonumber
         \end{eqnarray} }\vspace{-4em}
         \begin{eqnarray}
        s^{\prime\prime}-s^\prime =\left(F\left(v^*_c(\mu_{s=s^{\prime\prime}})\right)-F\left(v^*_c(\mu_{s=s^\prime})\right)\right)\delta.
    \end{eqnarray}
    By $\delta>\omega^L$ and equalities \eqref{eqn:indiff_L} and \eqref{eqn:indiff_H}, the two indifference conditions cannot hold simultaneously, a contradiction.

    \textbf{Step 5} (defender-optimal equilibrium). Because only the separating equilibrium in Step 1 satisfies our equilibrium refinement, the D1 criterion, it is the defender-optimal equilibrium.
\end{proof}

\begin{proof}[Proof of Proposition \ref{Prop_BM_2sided}]
    We first specify the signal structure in the case with commitment power. Then, we show that the defender's ex ante payoff under this signal structure is the highest possible payoff in the setting, and thus commitment power is strictly beneficial for the defender.

    \textbf{Step 1} (Signal structure). First, we specify the signal structure in the case with commitment power and the ex ante payoff of the defender. When $\mu_0$ is small, we have $\mu_{s=0}=0,\mu_{s=s^b_1}=1$. Bayes-Plausibility requires $\tau(\mu_{s=0})\mu_{s=0}+(1-\tau(\mu_{s=0}))\mu_{s=s^b_1}=\mu_0$, hence
    \begin{eqnarray*}
        \tau(\mu_{s=0})=1-\mu_0~\text{and}~\tau(\mu_{s=s^b_1})=\mu_0.
    \end{eqnarray*}
    The defender's signal structure follows from the above values:
    \begin{eqnarray}\label{eqn:pi_small_mu}
        \pi(0|\omega^H)=0,~\pi(s=s^b_1|\omega^H)=1,~\pi(0|\omega^L)=1,~\text{and}~\pi(s=s^b_1|\omega^L)=0.
    \end{eqnarray}
    When $\mu_0$ is large, we have $\mu_{s=0}=1$ and $\mu_{s=s^b_1}=0$.
    Bayes-Plausibility yields
    \begin{eqnarray*}
        \tau(\mu_{s=0})=\mu_0~\text{and}~\tau(\mu_{s=s^b_1})=1-\mu_0.
    \end{eqnarray*}
    Hence, we obtain
    \begin{eqnarray}\label{eqn:pi_large_mu}
        \pi(0|\omega^H)=1,~\pi(s^b_1|\omega^H)=0,~\pi(0|\omega^L)=0,~\text{and}~\pi(s^b_1|\omega^L)=1.
    \end{eqnarray}

    By \eqref{eqn:pi_small_mu}, the ex ante payoff when $\mu_0$ is small is
    \begin{eqnarray*}
        \mu_0\left[F_1\omega^H+(1-F_1)W^H-s^b_1\right]+(1-\mu_0)\times0=\mu_0\left(\widehat{v}^b(1,0)-s^b_1\right),
    \end{eqnarray*}
    whereas the payoff when $\mu_0$ is large is
    \begin{eqnarray*}
        \mu_0\left[F_1\omega^H+(1-F_1)W^H\right]-(1-\mu_0)s^b_1=\mu_0\widehat{v}^b(1,0)-(1-\mu_0)s^b_1.
    \end{eqnarray*}
    The above payoffs become equal at $\mu_0=1/2$, and thus, the ex ante defender payoff in the burning-money setting with commitment is expressed as
    \begin{eqnarray}\label{eqn:comm_payoff}
        \mu_0\widehat{v}^b(1,0)-\min\left\{\mu_0,1-\mu_0\right\}s^b_1=V^{jo}\left(\mu_0\middle|(\widehat v_s^b)_{s\in S^b}\right).
    \end{eqnarray}

    \textbf{Step 2} (Payoff optimality). We want to show that $V^{jo}\left(\mu_0\middle|(\widehat v_s^b)_{s\in S^b}\right)$ in \eqref{eqn:comm_payoff} is the highest possible payoff the defender can achieve.     That is, we want to show
    \begin{eqnarray*}
        \sum_{s\in S^b}\tau(\mu_s)\widehat{v}^b(\mu_s,s)\le V^{jo}\left(\mu_0\middle|(\widehat v_s^b)_{s\in S^b}\right)
    \end{eqnarray*}
    for any Bayes-plausible signal structure. First, consider any $s>0$ and $\mu\in[0,1]$. Then, by Assumption \ref{assum_2sided}-(\ref{assum_convex}), we obtain
    \begin{eqnarray*}
        \widehat v^b(\mu,s>0)=\widehat v^b(\mu,0)-s\le \mu\widehat v^b(1,0)+(1-\mu)\underbrace{\widehat v^b(0,0)}_{=0}-s^b_1.
    \end{eqnarray*}

    Next, consider $s=0$ and $\mu_{s=0}\le1/2$. First, by Assumption \ref{assum_2sided}-(\ref{assum_sb1_1/2}), observe
    \begin{eqnarray}\label{eqn:convex_1/2}
        2\widehat v^b\left(\frac{1}{2},0\right) &=& 2\times\left(\frac{1}{2}\left(W^H+F_{0.5}\delta\right)+\frac{1}{2}F_{0.5}\omega^L\right)\nonumber\\
        &=& W^H+F_1\delta-\left[(F_1-F_{0.5})\delta-F_{0.5}\omega^L\right]\nonumber\\
        &<& \widehat v^b(1,0)-s^b_1.
    \end{eqnarray}
    Thus, for any $\mu_{s=0}\le1/2$, we have
    \begin{eqnarray*}
        \widehat v^b(\mu_{s=0},0)\le 2\mu_{s=0}\widehat v^b\left(\frac{1}{2},0\right)+(1-2\mu_{s=0})\widehat v^b(0,0)\le\mu_{s=0}\left(\widehat v^b(1,0)-s^b_1\right),
    \end{eqnarray*}
    by Assumption \ref{assum_2sided}-(\ref{assum_convex}) and $\widehat v^b(0,0)=0$. Then, by Bayes-plausibility, we obtain the ex ante payoff
    \begin{eqnarray}\label{eqn:exante_small}
        \sum_{s\in S^b}\tau(\mu_s)\widehat v^b(\mu_s,s) &\le& \sum_{s\in S^b}\tau(\mu_s)\mu_s\left(\widehat v^b(1,0)-s^b_1\right)\nonumber\\
        &=& \mu_0\left(\widehat v^b(1,0)-s^b_1\right).
    \end{eqnarray}

    Lastly, consider $s=0$ and $\mu_{s=0}\ge1/2$. By Assumption \ref{assum_2sided}-(\ref{assum_convex}) and inequality \eqref{eqn:convex_1/2}, we have
    \begin{eqnarray*}
        \widehat v^b(\mu_{s=0},0) &\le& 2(1-\mu_{s=0})\widehat v^b\left(\frac{1}{2},0\right)+(2\mu_{s=0}-1)\widehat v^b(1,0)\\
        &\le& (1-\mu_{s=0})\left(\widehat v^b(1,0)-s^b_1\right)+(2\mu_{s=0}-1)\widehat v^b(1,0)\\
        &=& \mu_{s=0}\widehat v^b(1,0)-(1-\mu_{s=0})s^b_1.
    \end{eqnarray*}
    Then, by Bayes-plausibility, we obtain the ex ante payoff
    \begin{eqnarray}\label{eqn:exante_large}
        \sum_{s\in S^b}\tau(\mu_s)\widehat v^b(\mu_s,s) &\le& \sum_{s\in S^b}\tau(\mu_s)\left(\mu_s\widehat v^b(1,0)-(1-\mu_s)s^b_1\right)\nonumber\\
        &=& \mu_0\widehat v^b(1,0)-(1-\mu_0)s^b_1.
    \end{eqnarray}
    In sum, every Bayes-plausible signal structure satisfies at least one of \eqref{eqn:exante_small} and \eqref{eqn:exante_large}. Hence, we have
    \begin{eqnarray*}
        \sum_{s\in S^b}\tau(\mu_s)\widehat v^b(\mu_s,s) &\le& \max\left\{\mu_0\left(\widehat v^b(1,0)-s^b_1\right),\mu_0\widehat v^b(1,0)-(1-\mu_0)s^b_1\right\}\\
        &=& V^{jo}\left(\mu_0\middle|(\widehat v_s^b)_{s\in S^b}\right)
    \end{eqnarray*}
    for any Bayes-plausible signal structure.

    Finally, the above signal structures strictly improve the defender's ex ante payoff, compared to the case without commitment. We already know that the ex ante payoff is bounded by $V^{jo}\left(\mu_0\middle|(\widehat v_s^b)_{s\in S^b}\right)$. Then, we show that the case without commitment power cannot attain it and, thus, yields a strictly smaller payoff. First, suppose $\mu_0\le1/2$. The defender obtains the ex ante payoff $\mu_0\left(\widehat v^b(1,0)-s^b_1\right)$ by sending $s=s^b_1$ when $\omega=\omega^H$ and sending $s=0$ when $\omega=\omega^L$. If the defender does not have commitment power, the low type can profitably deviate from $s=0$ to $s^b_1$ because $F_1\omega^L-s^b_1>0$ by Assumption \ref{assum_2sided}-(\ref{assum_sb1_F1}). Second, suppose $\mu_0\ge1/2$. The defender achieves the ex ante payoff $\mu_0\widehat v^b(1,0)-(1-\mu_0)s^b_1$ by sending $s=0$ when $\omega=\omega^H$ and sending $s=s^b_1$ when $\omega=\omega^L$. If the defender does not have commitment power, then the low type can profitably deviate from $s=s^b_1>0$ to $s=0$. In sum, the signal structures in \eqref{eqn:pi_small_mu} and \eqref{eqn:pi_large_mu} cannot be part of an equilibrium in the burning-money setting without commitment. Thus, the case without commitment power does not yield the same ex ante payoff as the setting with commitment.
\end{proof}

\begin{proof}[Proof of Proposition \ref{benchmark_AC2_BridgeBurning}]
    We first consider a special case of $F_1=1$. Because we have adopted the interpretation that $1/(1-F(x))=\infty$ when $F(x)=1$, we have $-p\omega^L+c_d<\frac{F_1}{1-F_1}\,\omega^L=\infty$. Because this implies that the challenger never challenges if the defender fights for sure, both types of the defender send some $s\ge-p\omega^L+c_d$. Consequently, the high and low types receive $\omega^H$ and $\omega^L$, and they do not have the incentive to deviate to any $s<-p\omega^L+c_d$. Moreover, because $\omega^H$ and $\omega^L$ are the defender's highest possible payoffs, this equilibrium satisfies the D1 criterion.
    
    Henceforth, we focus on the case of $F_1<1$ and consider the following cases.
    
    \textbf{Case 1} ($-p\omega^L+c_d\ge\frac{F_1}{1-F_1}\omega^L$). Consider a separating equilibrium where the high-valuation defender sends $s=s^\prime$ and the low type sends $s=s^{\prime\prime}$. As shown in Step 1 of the proof of Proposition \ref{benchmark_BM2}, we have $s^{\prime\prime}=0$ in any separating equilibrium. Because $\mu_{s=s^\prime}=1$ and $\mu_{s=0}=0$, the low type does not have the incentive to deviate to $s^\prime>0$ as long as
    \begin{eqnarray*}
        \underbrace{0}_{\text{Equilibrium payoff}} &\ge& \underbrace{F_1\omega^L-\left(1-F_1\right)s^\prime}_{\text{Deviation to }s=s^\prime}\\
        s^\prime &\ge& \frac{F_1}{1-F_1}\omega^L.
    \end{eqnarray*}
    Observe that the equilibrium payoff for the low type is zero because, by $F(0)=0$, the challenger challenges for sure when $\mu_{s=0}=0$. When $-p\omega^L+c_d\ge\frac{F_1}{1-F_1}\omega^L$, the low type does not have the incentive to deviate to $s\ge-p\omega^L+c_d$ because the inequality is equivalent to $0\ge F_1\omega^L+(1-F_1)\left(p\omega^L-c_d\right)$. Next, the high type's payoff from sending $s=s^\prime$ is $F_1\omega^H+\left(1-F_1\right)\left(p\omega^H-c_d\right)$, which does not depend on $s^\prime$. Thus, any feasible $s^\prime\ge\frac{F_1}{1-F_1}\omega^L$ is optimal for the high type. Our assumption $\overline{s}^t\ge s^*=\min\left\{ -p\omega^L+c_d, \frac{F_1}{1-F_1}\, \omega^L \right\}$ guarantees that such $s^\prime$ is feasible when $-p\omega^L+c_d\ge\frac{F_1}{1-F_1}\omega^L$. Off-path beliefs $\mu_{s<\frac{F_1}{1-F_1}\omega^L}=0$ and $\mu_{s\ge\frac{F_1}{1-F_1}\omega^L}=1$ support this equilibrium.

    This equilibrium satisfies the D1 criterion. First, consider an off-path signal $s<\frac{F_1}{1-F_1}\,\omega^L$. The high type's payoff from deviation under the most favorable posterior is $F_1\omega^H+\left(1-F_1\right)\left(p\omega^H-c_d\right)$, which is equal to his equilibrium payoff. On the other hand, the low type would receive $F_1\omega^L-(1-F_1)s>0$ under the most favorable posterior. Thus, whenever the high type has a weakly profitable deviation, the low type has a strictly profitable deviation. Thus, the D1 criterion assigns $\mu_s=0$ for such an off-path signal. The challenger would always challenge after $s<\frac{F_1}{1-F_1}\,\omega^L$, lowering both types' payoffs. Second, consider an off-path signal $s\in\left(\frac{F_1}{1-F_1}\,\omega^L,-p\omega^L+c_d\right)$. As in the first case, this deviation does not improve the high type's payoff. For the low type, a deviation to $s\in\left(\frac{F_1}{1-F_1}\,\omega^L,-p\omega^L+c_d\right)$ would yield at most $F_1\omega^L-(1-F_1)s<0$. Hence, neither type can have a strict incentive to deviate. Third, consider an off-path signal $s\ge-p\omega^L+c_d\ge\frac{F_1}{1-F_1}\,\omega^L$. This deviation never strictly improves the defender's payoff. In sum, the separating equilibrium satisfies the D1 criterion.

    We show that, when $c_d-p\omega^L>\frac{F_1}{1-F_1}\omega^L$, any equilibrium satisfying the D1 criterion must take the form of the above equilibrium, (a) which is a separating equilibrium, (b) in which the low type sends $s=0$, and (c) in which the high type sends some $s\ge\frac{F_1}{1-F_1}\,\omega^L$. First, suppose to the contrary that feature (a) does not hold: consider a semi-separating or a pooling equilibrium. This implies that some common $s$ is sent by both types with positive probability. Observe that this common signal must satisfy $s<\frac{F_1}{1-F_1}\,\omega^L$. The low type never sends $s>\frac{F_1}{1-F_1}\,\omega^L$. Also, the low type does not send $s=\frac{F_1}{1-F_1}\,\omega^L$, either, because a common signal induces $\mu_s<1$, which gives the low type a strictly negative payoff. Because the low type does not send $s\ge\frac{F_1}{1-F_1}\,\omega^L$, a common signal must be smaller than $\frac{F_1}{1-F_1}\,\omega^L$. Next, fix some feasible $s^\prime\ge\frac{F_1}{1-F_1}\,\omega^L$. Because the low type never sends such $s^\prime$, we have $\mu_{s=s^\prime}=1$ if the high type sends it by Bayes' rule, and the high type would achieve his highest possible payoff. However, an equilibrium that is not separating implies that some common $s$ is sent by both types with positive probability. This common signal leads to $\mu_s<1$, implying that the high type's payoff becomes lower than the highest possible one. Thus, the high type cannot send $s^\prime$ by the indifference condition. Thus, $s^\prime$ must be an off-path signal when feature (a) does not hold. From this large off-path signal $s^\prime$, the low type never benefits even under the most favorable posterior, $\mu_{s=s^\prime}=1$. On the other hand, the high type would strictly benefit from such $s^\prime$. Then, the D1 criterion imposes $\mu_{s=s^\prime}=1$ where $s^\prime\ge\frac{F_1}{1-F_1}\,\omega^L$, and the high type deviates to the large signal, a contradiction.
    
    Second, suppose that feature (b) does not hold: consider a separating equilibrium where the low type sends $s>0$. If $s<-p\omega^L+c_d$, he would receive $-s<0$. If $s\ge-p\omega^L+c_d$, he would receive $F_1\omega^L+(1-F_1)\left(p\omega^L-c_d\right)<0$. Thus, this case cannot constitute an equilibrium because the low type has the incentive to deviate to $s=0$.
    
    Third, suppose that feature (c) does not hold: consider a separating equilibrium where the high type sends $s<\frac{F_1}{1-F_1}\,\omega^L$. Then, because $\mu_s=1$, the low type has the incentive to imitate and receive $F_1\omega^L-(1-F_1)s>0$. In sum, any equilibrium must take the form of the above separating equilibrium, implying that it is defender-optimal.

    \textbf{Case 2} ($-p\omega^L+c_d\le\frac{F_1}{1-F_1}\omega^L$). Next, consider a case in which both types of the defender send some $s\ge-p\omega^L+c_d$ with off-path beliefs $\mu_{s<-p\omega^L+c_d}=0$, where both types are willing to fight on the equilibrium path. By sending some $s\ge c_d-p\omega^L$, the high type obtains the highest possible payoff $F_1\omega^H+\left(1-F_1\right)\left(p\omega^H-c_d\right)$ because the high type never pays the cost of the signal. Now the low type is also willing to fight and does not pay audience costs. Sending $s\ge c_d-p\omega^L$ generates a weakly higher payoff for the low type than sending $s=0$ and then acquiescing when
    \begin{eqnarray*}\label{eq_benchmark_AC2_BridgeBurning1}
        F_1\omega^L+\left(1-F_1\right)\left(p\omega^L-c_d\right) &\ge& 0\\
        -p\omega^L+c_d\le\frac{F_1}{1-F_1}\,\omega^L,\nonumber
    \end{eqnarray*}
    which is our assumption here.

    This equilibrium satisfies the D1 criterion. First, any off-path signal such that $s\ge-p\omega^L+c_d$ does not change the defender's payoff. Second, consider an off-path signal $s<-p\omega^L+c_d$. The high type would receive at most his equilibrium payoff from this deviation. Under the most favorable belief, the low type would receive $F_1\omega^L-(1-F_1)s>0$. The benefit for the low type from this deviation is
    \begin{eqnarray*}
        \left(F_1\omega^L-(1-F_1)s\right)-\left(F_1\omega^L+(1-F_1)\left(p\omega^L-c_d\right)\right)=(1-F_1)\left(c_d-p\omega^L-s\right)>0.
    \end{eqnarray*}
    Thus, whenever the high type weakly prefers deviation to $s<-p\omega^L+c_d$, the low type has a strict incentive to deviate to such $s$. The D1 criterion imposes $\mu_s=0$. This implies that the high type and the low type would receive $p\omega^H-c_d$ and $-s$, respectively, by deviating to $s<-p\omega^L+c_d$, both of which are strictly smaller than their equilibrium payoffs. Thus, this equilibrium satisfies the D1 criterion. 

    There is no other equilibrium with a higher defender payoff. In any equilibrium, the high type must achieve the highest possible payoff, $F_1\omega^H+(1-F_1)\left(p\omega^H-c_d\right)$, because he can always commit to fighting and receive this payoff. First, the two types of the defender never send a common $s<-p\omega^L+c_d$ with positive probability. This is because such pooling leads to $\mu_s<1$, which lowers the high type's payoff. Second, given the first argument, the low type sending $s<-p\omega^L+c_d$ cannot be part of an equilibrium because it would yield $\mu_s=0$ and the low type would receive $-s\le0$, which is strictly smaller than the payoff he can guarantee by committing to fighting. Thus, the low type sends only $s\ge-p\omega^L+c_d$. Lastly, consider any case where the high type sends $s<-p\omega^L+c_d$ with positive probability and the low type sends only $s\ge-p\omega^L+c_d$. Such an equilibrium does not exist because the low type would profitably deviate to the high type's signal, as $F_1\omega^L-(1-F_1)s>F_1\omega^L+(1-F_1)(p\omega^L-c_d)$ whenever $s<-p\omega^L+c_d$. Thus, the remaining candidate is the above case in which both types send only $s\ge-p\omega^L+c_d$.

    \textbf{Case 3} ($-p\omega^L+c_d=\frac{F_1}{1-F_1}\omega^L$). When $-p\omega^L+c_d=\frac{F_1}{1-F_1}\omega^L$, both types of the above equilibria are defender-optimal. Moreover, in this knife-edge case, there is another class of equilibria. Consider a case in which the high type sends some $s^\prime\ge-p\omega^L+c_d$ and the low type sends $s=s^\prime$ with probability $q$ and $s=0$ with probability $1-q$. While $\mu_{s=s^\prime}\in[0,1]$ should follow from Bayes' rule, we know that both types commit to fighting by sending $s^\prime\ge-p\omega^L+c_d$. This implies that the challenger with $v_c>\frac{c_c}{1-p}$ challenges and that the probability of challenge is $1-F_1$. Consequently, the low type's payoff from sending $s=s^\prime$ when $-p\omega^L+c_d\le\frac{F_1}{1-F_1}\omega^L$ is
    \begin{eqnarray*}
        F_1\omega^L+(1-F_1)\left(p\omega^L-c_d\right)=F_1\omega^L-(1-F_1)\frac{F_1}{1-F_1}\,\omega^L=0.
    \end{eqnarray*}
    Because the payoff from sending $s=0$ is zero, we have established the low type's indifference condition. The low type may use an arbitrary $q\in[0,1]$. Similarly, by committing to fight, the high type obtains the highest possible payoff, $F_1\omega^H+(1-F_1)\left(p\omega^H-c_d\right)$, regardless of $q$.

    This class of equilibrium satisfies the D1 criterion. Consider an off-path signal $s\in\left(0,-p\omega^L+c_d\right)$. First, the high type cannot strictly improve his payoff. Second, because the low type would acquiesce to a challenge after sending $s\in\left(0,-p\omega^L+c_d\right)$ and because his equilibrium payoff is zero, he has a strict incentive to deviate to $s\in\left(0,-p\omega^L+c_d\right)$ when
    \begin{eqnarray*}
        F\left(v^*_c(\mu_s)\right)>\frac{s}{\omega^L+s}.
    \end{eqnarray*}
    Thus, whenever the high type has a weak incentive to deviate to $s\in\left(0,-p\omega^L+c_d\right)$ (which occurs only when $F\left(v^*_c(\mu_s)\right)=F_1$), the low type has a strict incentive to do so. The D1 criterion thus assigns $\mu_s=0$ for such $s$. With $\mu_s=0$, both types would receive payoffs strictly smaller than the equilibrium payoff. Therefore, neither type deviates, and thus this class of equilibria satisfies the D1 criterion. 
\end{proof}

\begin{proof}[Proof of Proposition \ref{Comparison_BMwoComm_ACwoComm2}]
    We compare burning money with commitment and audience cost. By Proposition \ref{AC2_Large}, the defender's ex ante payoff in the audience-cost setting is at least $\mu_0\left[F_1\omega^H+\left(1-F_1\right)\left(p\omega^H-c_d\right)\right]$. On the other hand, by Proposition \ref{Prop_BM_2sided}, which relies on Assumption \ref{assum_2sided}-(\ref{assum_convex}), (\ref{assum_sb1_F1}), and (\ref{assum_sb1_1/2}), the burning-money setting with commitment yields either
    \begin{eqnarray*}
        \mu_0\left[F_1\omega^H+\left(1-F_1\right)\left(p\omega^H-c_d\right)-s^b_1\right]~~\text{or}~~\mu_0\left[F_1\omega^H+\left(1-F_1\right)\left(p\omega^H-c_d\right)\right]-(1-\mu_0)s^b_1.
    \end{eqnarray*}
    It is straightforward to see that both of them are strictly smaller than the ex ante payoff in the audience-cost case for any $\mu_0\in(0,1)$.

    Proposition \ref{Prop_BM_2sided} shows that commitment power improves the defender's ex ante payoff. Thus, the payoff from burning money without commitment must be smaller than that from audience cost.
\end{proof}

\begin{proof}[Proof of Proposition \ref{AC2_Limitation}]
    Consider the following cases.
    
    \textbf{Step 1} (Separating). By the assumption that $\overline{s}^t<s^*$, there does not exist a separating equilibrium. This is because the low type would profitably deviate from $0$ to a positive $s$ that the high type sends, and he would receive $F_1\omega^L-(1-F_1)s>0$.
    
    We first show that any equilibrium in this case is either a pooling equilibrium or a semi-separating equilibrium in which the high type sends $s=s^\prime>0$ with probability one and the low type mixes $s=s^\prime$ and $s=0$. First, if the high type mixes, there cannot be a signal that comes only from the high type. This is because, if a signal $s$ comes only from the high type, the posterior becomes $\mu_s=1$. Then, because the high type never pays audience costs in equilibrium, a signal affects the high type's payoff only through posterior beliefs. Thus, the high type would strictly prefer $s$ such that $\mu_s=1$ and send it with probability one. Second, when the low type mixes, the signal sent only by the low type must be $s=0$ because such a signal leads to a posterior $\mu_s=0$ and any $s>0$ would result in a strictly negative payoff. Third, suppose that two different signals are sent by both types with positive probability, denoted by $s^\prime<s^{\prime\prime}$. Because the high type never pays audience costs on any equilibrium path, any difference in an equilibrium payoff must come from different posteriors. However, the high type's indifference condition imposes $\mu_{s=s^\prime}=\mu_{s=s^{\prime\prime}}$. Then, because the posterior is the same, the low type would strictly prefer $s^\prime$ to $s^{\prime\prime}$ by $s^\prime<s^{\prime\prime}$. Consequently, any equilibrium must be a pooling equilibrium or a semi-separating equilibrium in which the high type sends some $s=s^\prime>0$ and the low type mixes $s=s^\prime$ and $s=0$.

    \textbf{Step 2} (Semi-separating). Consider a semi-separating equilibrium where the high-valuation defender sends $s=s^\prime>0$ and the low type sends $s=s^\prime$ with probability $q\in(0,1)$ and $s=0$ with probability $1-q$. The low type's indifference condition yields
    \begin{eqnarray*}
        F\left(v^*_c(\mu_{s=s^\prime})\right)\omega^L-\left(1-F\left(v^*_c(\mu_{s=s^\prime})\right)\right)s^\prime &=& 0\\
        F\left(v^*_c(\mu_{s=s^\prime})\right) &=& \frac{s^\prime}{\omega^L+s^\prime}.
    \end{eqnarray*}

    We first show that the D1 criterion requires $s^\prime=\overline{s}^t$. Suppose $s^\prime<\overline{s}^t$, and consider $s=\overline{s}^t$ as an off-path signal. Because the low type's equilibrium payoff is zero, he weakly benefits from deviating to $s=\overline{s}^t$ if $F\left(v_c^*(\mu_{s=\overline{s}^t})\right)\ge\frac{\overline{s}^t}{\omega^L+\overline{s}^t}$. On the other hand, because the high type never pays audience costs in any equilibrium, he strictly benefits from deviating to $s=\overline{s}^t$ if $F\left(v_c^*(\mu_{s=\overline{s}^t})\right)>F\left(v_c^*(\mu_{s=s^\prime})\right)=\frac{s^\prime}{\omega^L+s^\prime}$. By $\frac{s^\prime}{\omega^L+s^\prime}<\frac{\overline{s}^t}{\omega^L+\overline{s}^t}$, the high type strictly benefits from this deviation whenever the low type weakly benefits from it. Then, the D1 criterion assigns $\mu_{s=\overline{s}^t}=1$. By $\overline{s}^t<\frac{F_1}{1-F_1}\,\omega^L$, the high type profitably deviates to $s=\overline{s}^t$ because we have $\frac{s^\prime}{\omega^L+s^\prime}<\frac{\overline{s}^t}{\omega^L+\overline{s}^t}<F_1$. Thus, any semi-separating equilibrium satisfying the D1 criterion must have $s^\prime=\overline{s}^t$.
    
    Next, by Bayes' rule, we have $\mu_{s=\overline{s}^t}=\frac{\mu_0}{\mu_0+( 1-\mu_0)q}\in(\mu_0,1)$ for $q\in(0,1)$. Thus, for this semi-separating equilibrium to exist, it must be that $F\left(v^*_c(\mu_{s=\overline{s}^t})\right)=\frac{\overline{s}^t}{\omega^L+\overline{s}^t}\in(F_{\mu_0},F_1)$. These conditions are equivalent to $\frac{F_{\mu_0}}{1-F_{\mu_0}}\,\omega^L<\overline{s}^t<\frac{F_1}{1-F_1}\omega^L$. By $f(x)>0$ and $\mu_0<1$, we have $F_{\mu_0}<F_1$, and thus $1-F_{\mu_0}$ is strictly positive. Thus, the threshold $\frac{F_{\mu_0}}{1-F_{\mu_0}}\,\omega^L$ is well defined; by the assumption that $\overline{s}^t<s^*$, the second inequality holds.
    
    Finally, we verify that the semi-separating equilibrium with $s^\prime=\overline{s}^t$ satisfies the D1 criterion. Consider any off-path signal $s<\overline{s}^t$. The high type weakly benefits from a smaller $s$ if $F\left(v^*_c(\mu_{s<\overline{s}^t})\right)\ge\frac{\overline{s}^t}{\omega^L+\overline{s}^t}$. The low type strictly benefits from deviating to a smaller $s$ if $F\left(v^*_c(\mu_{s<\overline{s}^t})\right)>\frac{s}{\omega^L+s}$. By $\frac{s}{\omega^L+s}<\frac{\overline{s}^t}{\omega^L+\overline{s}^t}$ for $s<\overline{s}^t$, the low type has a strictly profitable deviation to a small signal whenever the high type has a weakly profitable deviation. Thus, the D1 criterion imposes $\mu_s=0$ for $s<\overline{s}^t$. The high type would receive $p\omega^H-c_d$ from this deviation, which is strictly smaller than his equilibrium payoff, $\frac{\overline{s}^t}{\omega^L+\overline{s}^t}\,\omega^H+\left(1-\frac{\overline{s}^t}{\omega^L+\overline{s}^t}\right)\left(p\omega^H-c_d\right)$. Under $\mu_{s<\overline{s}^t}=0$, the low type would receive $-s\le0$. Thus, neither type profitably deviates, and the semi-separating equilibrium satisfies the D1 criterion.

    \textbf{Step 3} (Pooling). Consider a pooling equilibrium in which both types send $s=s^\prime\in S^t$. On the equilibrium path, the posterior is $\mu_{s=s^\prime}=\mu_0$. We first show that the D1 criterion requires $s^\prime=\overline{s}^t$. Suppose to the contrary that $s^\prime<\overline{s}^t$, and consider the off-path signal $s=\overline{s}^t$. The low type weakly benefits from deviating to $\overline{s}^t$ if
    \begin{eqnarray*}
        F\left(v^*_c(\mu_{s=\overline{s}^t})\right)\omega^L-\left(1-F\left(v^*_c(\mu_{s=\overline{s}^t})\right)\right)\overline{s}^t &\ge& F_{\mu_0}\omega^L-(1-F_{\mu_0})s^\prime\\
        F\left(v^*_c(\mu_{s=\overline{s}^t})\right) &\ge& \frac{F_{\mu_0}\omega^L-(1-F_{\mu_0})s^\prime+\overline{s}^t}{\omega^L+\overline{s}^t}\\
        &=&F_{\mu_0}+\frac{(1-F_{\mu_0})\left(\overline{s}^t-s^\prime\right)}{\omega^L+\overline{s}^t}>F_{\mu_0}.
    \end{eqnarray*}
    The high type strictly benefits from deviating to $\overline{s}^t$ if $F\left(v^*_c(\mu_{s=\overline{s}^t})\right)>F_{\mu_0}$. Thus, the high type has a strictly profitable deviation to $\overline{s}^t$ whenever the low type weakly benefits from the deviation. The D1 criterion imposes $\mu_{s=\overline{s}^t}=1$. Then, the high type strictly benefits by deviating to $s=\overline{s}^t$. As a result, any pooling equilibrium satisfying the D1 criterion must have $s^\prime=\overline{s}^t$.

    Because the low type can guarantee a payoff of zero by sending $s=0$, a pooling equilibrium in which both types send $s=\overline{s}^t$ exists when $F_{\mu_0}\omega^L-(1-F_{\mu_0})\overline{s}^t\ge0$, which is equivalent to $\overline{s}^t\le\frac{F_{\mu_0}}{1-F_{\mu_0}}\,\omega^L$.
    
    Lastly, we show that the pooling equilibrium with $s^\prime=\overline{s}^t$ satisfies the D1 criterion. Consider an off-path signal $s<\overline{s}^t$. The high type weakly benefits from deviating if $F\left(v^*_c(\mu_s)\right)\ge F_{\mu_0}$. The low type strictly benefits from deviating to $s<\overline{s}^t$ if
    \begin{eqnarray*}
        F\left(v^*_c(\mu_s)\right)>F_{\mu_0}-\frac{(1-F_{\mu_0})\left(\overline{s}^t-s\right)}{\omega^L+s}.
    \end{eqnarray*}
    Because $F_{\mu_0}-\frac{(1-F_{\mu_0})\left(\overline{s}^t-s\right)}{\omega^L+s}<F_{\mu_0}$, the low type has a strictly profitable deviation to $s<\overline{s}^t$ whenever the high type has a weakly profitable deviation to such $s$. Thus, the D1 criterion assigns $\mu_s=0$. Under this belief, the high type would receive $p\omega^H-c_d$, which is strictly smaller than his equilibrium payoff, $F_{\mu_0}\omega^H+(1-F_{\mu_0})\left(p\omega^H-c_d\right)$. The low type would receive $-s\le0$, which never exceeds his equilibrium payoff, $F_{\mu_0}\omega^L-(1-F_{\mu_0})\overline{s}^t\ge0$. Thus, this pooling equilibrium satisfies the D1 criterion.

    \textbf{Step 4} (defender-optimal equilibrium). By Step 1, we know that the only candidates for defender-optimal equilibria are a semi-separating equilibrium where the low type strictly mixes between $s=\overline{s}^t$ and $s=0$, and a pooling equilibrium where both types of the defender send $s=\overline{s}^t$. Step 2 shows that the semi-separating equilibrium exists if and only if $\overline{s}^t>\frac{F_{\mu_0}}{1-F_{\mu_0}}\,\omega^L$. We also know from Step 3 that the pooling equilibrium exists if and only if $\overline{s}^t\le\frac{F_{\mu_0}}{1-F_{\mu_0}}\,\omega^L$. Hence, these equilibria are defender-optimal whenever they exist.
\end{proof}

\begin{proof}[Proof of Proposition \ref{Comparison_BMwoComm_ACwoComm2_Limitation}]
    First, by Proposition \ref{AC2_Limitation}, the ex ante payoffs from the semi-separating and pooling equilibria are, respectively,
    \begin{eqnarray}
        \mu_0\left[\frac{\overline{s}^t}{\omega^L+\overline{s}^t}\omega^H+\left(1-\frac{\overline{s}^t}{\omega^L+\overline{s}^t}\right)\left(p\omega^H-c_d\right)\right]+(1-\mu_0)\times0~\text{and} \label{eqn:smallAC-semi-payoff}\\
        \mu_0\left[F_{\mu_0}\omega^H+(1-F_{\mu_0})\left(p\omega^H-c_d\right)\right]+(1-\mu_0)\left[F_{\mu_0}\omega^L-(1-F_{\mu_0})\overline{s}^t\right]. \label{eqn:smallAC-pool-payoff}
    \end{eqnarray}
    These values become equal at $\mu_0$ such that $F_{\mu_0}=\frac{\overline{s}^t}{\omega^L+\overline{s}^t}$. The former value is greater if and only if $F_{\mu_0}<\frac{\overline{s}^t}{\omega^L+\overline{s}^t}$, which is equivalent to the condition under which the semi-separating equilibrium in the audience-cost setting with $\overline{s}^t<s^*$ exists, i.e., $\overline{s}^t>\frac{F_{\mu_0}}{1-F_{\mu_0}}\,\omega^L$.

    We now compare the above payoffs with the payoff from the separating equilibrium in the burning-money setting without commitment. By Proposition \ref{benchmark_BM2}, observe that the ex ante payoff in the burning-money setting without commitment is
    \begin{eqnarray}\label{eqn:BMwoC-payoff}
        \mu_0\left[F_1\omega^H+(1-F_1)\left(p\omega^H-c_d\right)-F_1\omega^L\right]+(1-\mu_0)\times0.
    \end{eqnarray}
    Rearranging the payoffs in expressions \eqref{eqn:BMwoC-payoff} and \eqref{eqn:smallAC-semi-payoff} shows that the payoff from the semi-separating equilibrium in the small-audience-cost setting is strictly greater than the payoff from the separating equilibrium in the burning-money setting without commitment if and only if $\overline{s}^t>\frac{F_1D}{\delta-F_1D}\,\omega^L$. Because the payoff from the pooling equilibrium with $s=\overline{s}^t$ in the small-audience-cost setting is equal to or strictly greater than the payoff from the semi-separating equilibrium whenever the former exists, the audience-cost setting yields a strictly higher defender ex ante payoff than the burning-money setting without commitment if $\overline{s}^t>\frac{F_1D}{\delta-F_1D}\,\omega^L$.

    Next, suppose $\overline{s}^t<\frac{F_1D}{\delta-F_1D}\,\omega^L$. Because the separating equilibrium yields a strictly greater ex ante payoff than the semi-separating equilibrium for any $\mu_0>0$, we compare the payoffs from the separating equilibrium in the burning-money setting without commitment and the pooling equilibrium in the small-audience-cost setting. Observe that, by expressions \eqref{payoff_uncertain_BM} and \eqref{payoff_uncertain_AC}, the interim payoffs of the two settings are equal when $s=0$, i.e., $\widehat v^b(\mu,0)=\widehat v^t(\mu,0)$. Thus, by expression \eqref{eqn:smallAC-pool-payoff}, we can rewrite the ex ante payoff in the pooling equilibrium as $\widehat v^t\left(\mu_0,\overline{s}^t\right)=\widehat v^b(\mu_0,0)-(1-\mu_0)(1-F_{\mu_0})\overline{s}^t$. By expression \eqref{eqn:BMwoC-payoff}, the ex ante payoff in the burning-money setting without commitment can be rewritten as $\mu_0\left(\widehat v^b(1,0)-F_1\omega^L\right)$. Then, by rearranging these payoffs, we have
    \begin{eqnarray*}
        \frac{\widehat v^b(\mu_0,0)}{\mu_0}=\widehat v^b(1,0)-F_1\omega^L+\frac{1-\mu_0}{\mu_0}(1-F_{\mu_0})\overline{s}^t
    \end{eqnarray*}
    when the payoffs are equal. By Assumption \ref{assum_2sided}-(\ref{assum_convex}), the left-hand side is strictly increasing in $\mu_0$. On the other hand, the right-hand side is strictly decreasing in $\mu_0$ because $F_{\mu_0}$ is increasing and $\frac{1-\mu_0}{\mu_0}$ is decreasing in $\mu_0$. Because $\widehat v^t\left(0,\overline{s}^t\right)=-\overline{s}^t<0$ and $\widehat v^t\left(1,\overline{s}^t\right)=\widehat v^b(1,0)>1\times\left(\widehat v^b(1,0)-F_1\omega^L\right)$, there is a unique $\overline{\mu}_0\in(0,1)$ at which the two payoffs are equal and above which the audience-cost setting yields a strictly greater ex ante payoff.

    Finally, suppose $\overline{s}^t=\frac{F_1D}{\delta-F_1D}\,\omega^L$. In this case, observe that the ex ante payoff in the burning-money setting without commitment is equal to that of the semi-separating equilibrium in the audience-cost setting. Hence, the two settings yield the same ex ante payoff for $\mu_0\in\left[0,\mu_0^\prime\right]$ where $F_{\mu_0^\prime}=\frac{\overline{s}^t}{\omega^L+\overline{s}^t}$, and the audience-cost setting achieves a strictly higher payoff for $\mu_0>\mu_0^\prime$.
\end{proof}

\begin{proof}[Proof of Proposition \ref{Comparison_BMwComm_ACwoComm2_Limitation}]
    By Proposition \ref{AC2_Limitation}, the defender-optimal equilibrium in the audience-cost setting with $\overline{s}^t<s^*$ is either a semi-separating equilibrium or a pooling equilibrium in which both types of the defender send $s=\overline{s}^t$.

    First, suppose that the semi-separating equilibrium is defender-optimal in the small-audience-cost setting. By Proposition \ref{AC2_Limitation}, we know that the payoff from the semi-separating equilibrium is linear in $\mu_0$. In the burning-money setting with commitment, we also know that, for any pair of $\mu_0^\prime\le1/2$ and $\mu_0^{\prime\prime}>1/2$, the slope of $V^{jo}\left(\mu_0^{\prime\prime}\middle|\left(\widehat v^b_s\right)_{s\in S^b}\right)$ is greater than that of $V^{jo}\left(\mu_0^\prime\middle|\left(\widehat v^b_s\right)_{s\in S^b}\right)$. Hence, burning money with commitment is strictly more beneficial for any $\mu_0\in(0,1)$ whenever it is beneficial for $\mu\in(0,1/2]$. This is true when
    \begin{eqnarray*}
        && \overbrace{\mu_0\left[F_1\omega^H+\left(1-F_1\right)\left(p\omega^H-c_d\right)-s^b_1\right]+(1-\mu_0)\times0}^{V^{jo}\left(\mu_0\le\frac{1}{2}\middle|\left(\widehat v^b_s\right)_{s\in S^b}\right)}\\
        &>& \underbrace{\mu_0\left[\frac{\overline{s}^t}{\omega^L+\overline{s}^t}\omega^H+\left(1-\frac{\overline{s}^t}{\omega^L+\overline{s}^t}\right)\left(p\omega^H-c_d\right)\right]+(1-\mu_0)\times0}_{\text{Semi-separating in audience costs}}.
    \end{eqnarray*}
    This condition simplifies to
    \begin{eqnarray*}
        s^b_1<\left(F_1-\frac{\overline{s}^t}{\omega^L+\overline{s}^t}\right)\left((1-p)\omega^H+c_d\right),
    \end{eqnarray*}
    which is Assumption \ref{assum_2sided}-(\ref{assum_sb1_sbart}).

    Second, suppose the pooling equilibrium is defender-optimal in the small-audience-cost setting. By the proof of Proposition \ref{Comparison_BMwoComm_ACwoComm2_Limitation}, we know that $\widehat v^t\left(\mu_0,\overline{s}^t\right)=\widehat v^b(\mu_0,0)-(1-\mu_0)(1-F_{\mu_0})\overline{s}^t<\widehat v^b(\mu_0,0)$. Moreover, Proposition \ref{Prop_BM_2sided} shows that the ex ante payoff in the burning-money setting with commitment is strictly greater than $\widehat v^b(\mu_0,0)$ for any $\mu_0\in(0,1)$ under Assumption \ref{assum_2sided}-(\ref{assum_convex}) and (\ref{assum_sb1_1/2}). Therefore, the defender's ex ante payoff of the burning-money setting with commitment is strictly greater than that of the audience-cost setting with $\overline{s}^t<s^*$ for any $\mu_0\in(0,1)$.
\end{proof}

\begin{proof}[Proof of Proposition \ref{WarProb}]
    First, consider the burning-money setting.
    The results in Propositions \ref{benchmark_BM2} and \ref{Prop_BM_2sided} use Assumption \ref{assum_2sided}-(\ref{assum_convex}), (\ref{assum_sb1_F1}), (\ref{assum_sb1_1/2}), (\ref{assum_rich_F1}), and (\ref{assum_rich_s+Delta}).
    By Propositions \ref{benchmark_BM2} and \ref{Prop_BM_2sided}, the defender conveys his type in the defender-optimal equilibrium in the burning-money setting with and without commitment. Thus, war occurs when a challenger with $v_c>v^*_c(1)$ challenges and the high type fights. The ex ante probability of war in this case is $\mu_0\times(1-F_1)$.

    Second, consider the audience-cost setting with $\overline{s}^t\ge s^*$. When $-p\omega^L+c_d \ge \frac{F_1}{1-F_1}\, \omega^L$, by Proposition \ref{benchmark_AC2_BridgeBurning}, a separating equilibrium is defender-optimal. Thus, as in the burning-money setting, the ex ante probability of war is $\mu_0(1-F_1)$, which is equal to the probability of war in the burning-money setting.
    
    When $-p\omega^L+c_d \le \frac{F_1}{1-F_1}\, \omega^L$, the defender-optimal equilibrium becomes the one in which both types commit to fighting by sending $s\ge-p\omega^L+c_d$. Thus, war erupts when a challenger with $v_c>v^*_c(1)$ challenges. Then, the ex ante probability of war is $1-F_1$. Therefore, when $-p\omega^L+c_d < \frac{F_1}{1-F_1}\,\omega^L$, the probability of war in the large audience-cost setting is strictly greater than that of the burning-money setting for any $\mu_0<1$ and any $F_1<1$. The special case of $F_1=1$ belongs to the condition $-p\omega^L+c_d \le \frac{F_1}{1-F_1}\, \omega^L$ because $\frac{F_1}{1-F_1}\,\omega^L=\infty$. In this case, the burning-money and audience-cost settings yield the same ex ante probability of war by $\mu_0(1-F_1)=1-F_1=0$ when $F_1=1$.
    
    When $-p\omega^L+c_d=\frac{F_1}{1-F_1}\, \omega^L$, both of the above equilibria are optimal. Moreover, there also exist semi-separating equilibria in which the low type mixes $s\ge-p\omega^L+c_d$ and $s=0$. Case 3 of the proof of Proposition \ref{benchmark_AC2_BridgeBurning} shows that the low type may commit to fighting with any probability $q\in[0,1]$, whereas the high type commits to fighting for sure. Consequently, the ex ante probability of war is $\left(\mu_0+(1-\mu_0)q\right)(1-F_1)\in[\mu_0(1-F_1),1-F_1]$. Therefore, when $-p\omega^L+c_d=\frac{F_1}{1-F_1}\, \omega^L$, the audience-cost setting yields the same probability of war as the burning-money setting if $q=0$ and a strictly greater probability of war otherwise.

    Third, consider the audience-cost setting with $\overline{s}^t<s^*$. By Proposition \ref{AC2_Limitation}, when $\overline{s}^t\in\left(\frac{F_{\mu_0}}{1-F_{\mu_0}}\,\omega^L,s^*\right)$, the defender-optimal equilibrium is a semi-separating equilibrium where the low type mixes between $s=0$ and $s=\overline{s}^t$. By $F\left(v^*_c(\mu_{s=\overline{s}^t})\right)=\frac{\overline{s}^t}{\omega^L+\overline{s}^t}$ as shown in Proposition \ref{AC2_Limitation}, the ex ante probability of war is $\mu_0\left(1-F\left(v^*_c(\mu_{s=\overline{s}^t})\right)\right)=\frac{\omega^L}{\omega^L+\overline{s}^t}\,\mu_0$. This is strictly greater than the probability of war in the burning-money setting because we have $1-F\left(v^*_c(\mu_{s=\overline{s}^t})\right)>1-F_1$ when $\overline{s}^t<\frac{F_1}{1-F_1}\,\omega^L$.
    
    By Proposition \ref{AC2_Limitation}, the defender-optimal equilibrium is a pooling equilibrium with $s=\overline{s}^t$ when $\overline{s}^t\le\frac{F_{\mu_0}}{1-F_{\mu_0}}\,\omega^L$. Thus, war occurs when the challenger with $v_c>v^*_c(\mu_0)$ challenges and then the high type fights. Hence, the probability of war is $\mu_0(1-F_{\mu_0})$, which is greater than the probability of war in the burning-money setting, $\mu_0(1-F_1)$, by $F_{\mu_0}<F_1$. In sum, when $\overline{s}^t<s^*$, the ex ante probability of war is strictly greater in the audience-cost setting than in the burning-money setting for any $\mu_0\in(0,1)$.
\end{proof}

\section{Audience Costs with Commitment}\label{app:C}

For completeness, here we characterize the audience-cost setting with commitment power.

\begin{prop}\label{Prop_AC_2sided}
    Suppose Assumption \ref{assum_2sided}-(\ref{assum_convex}) holds.
        \begin{itemize}
            \item When $-p\omega^L+c_d \ge \frac{F_1}{1-F_1}\, \omega^L$ or $\overline{s}^t<-p\omega^L+c_d$, the defender sends some $s>0$ when $\omega=\omega^H$ and sends $s=0$ when $\omega=\omega^L$. The challenger challenges when $s=0$ or $v_c>\frac{c_c}{1-p}$, upon which the defender fights if and only if $\omega=\omega^H$.
            \item When $-p\omega^L+c_d \le \frac{F_1}{1-F_1}\, \omega^L$ and $\overline{s}^t\ge-p\omega^L+c_d$, both types of the defender send some $s\ge c_d-p\omega^L$. The challenger challenges when $s<c_d-p\omega^L$ or $v_c>\frac{c_c}{1-p}$. Both types of the defender fight when challenged on the equilibrium path.
        \end{itemize}
\end{prop}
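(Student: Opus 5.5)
The plan is to reuse the join-envelope argument from the proof of Proposition \ref{Prop_BM_2sided}, now applied to the audience-cost interim payoffs $\widehat v^t(\mu,s)$ in \eqref{payoff_uncertain_AC}. First, I derive a pointwise linear upper bound on each graph $\widehat v^t(\cdot,s)$. Then I use Bayes plausibility \eqref{eqn:bayes-plausibility} to aggregate these bounds into a bound on the ex ante payoff. Finally, I exhibit the commitment strategy in each bullet and show that it attains the bound. With commitment there is no interim incentive constraint for the defender. Hence the only remaining equilibrium conditions are the challenger's attack rule, $v_c>v^*_c(\mu_s)$, and the defender's ex post fight decision, and both are already built into $\widehat v^t$.

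\textbf{Step 1: endpoint values.} I compute $\widehat v^t$ at $\mu=0$ and $\mu=1$. At $\mu=1$, only the high type sends $s$. It fights when challenged and therefore never pays $s$, so $\widehat v^t(1,s)=\widehat v^b(1,0)=F_1\omega^H+(1-F_1)W^H$ for every $s$. At $\mu=0$ there are two cases.
\begin{itemize}
\item If $s<c_d-p\omega^L$, the challenger attacks surely because $F(0)=0$, and the low type concedes. Hence $\widehat v^t(0,s)=-s\le 0$.
\item If $s\ge c_d-p\omega^L$, the low type is committed to fight. The challenger then attacks iff $v_c>c_c/(1-p)$, and $\widehat v^t(0,s)=\kappa:=F_1\omega^L+(1-F_1)W^L$.
\end{itemize}
In the second case \eqref{payoff_uncertain_AC} is exactly linear in $\mu$: $\widehat v^t(\mu,s)=\mu\,\widehat v^b(1,0)+(1-\mu)\kappa$. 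In the first case, Proposition \ref{prop_convex_AC} gives strict convexity in $\mu$; its hypothesis is the sufficient condition of Proposition \ref{prop_convex}, which I take as given under Assumption \ref{assum_2sided}-(\ref{assum_convex}). Convexity then yields the chord bound $\widehat v^t(\mu,s)\le \mu\,\widehat v^b(1,0)-(1-\mu)s\le \mu\,\widehat v^b(1,0)$. I expect this step to be the main obstacle. The interim payoff is discontinuous in $s$ at $c_d-p\omega^L$, and the convexity result of Proposition \ref{prop_convex_AC} covers only $s<c_d-p\omega^L$. So the two regimes must be bounded separately, and I need to confirm that the chord endpoints are the values computed above and not limits from the other regime.

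\textbf{Step 2: aggregation and attainment.} Combining the two cases gives, for every $s$ and $\mu$,
$$\widehat v^t(\mu,s)\le \mu\,\widehat v^b(1,0)+(1-\mu)\max\{0,\kappa\cdot\mathbbm{1}(s\ge c_d-p\omega^L)\}.$$
Taking the $\tau$-weighted sum and using $\sum_s\tau_s\mu_s=\mu_0$ bounds the ex ante payoff.

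Suppose first that $\kappa\le0$, which is equivalent to $-p\omega^L+c_d\ge \frac{F_1}{1-F_1}\omega^L$, or that $\overline s^t<c_d-p\omega^L$, so the committed regime is unavailable. Then the bound is $\mu_0\,\widehat v^b(1,0)$. It is attained by $\pi(0|\omega^L)=1$ and $\pi(s|\omega^H)=1$ for some feasible $s>0$, for instance $s^t_1$, which separates and gives payoff $\mu_0\widehat v^b(1,0)+(1-\mu_0)\cdot 0$. This is the first bullet, with the challenger attacking iff $s=0$ or $v_c>c_c/(1-p)$.

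Suppose next that $\kappa\ge0$ and $\overline s^t\ge c_d-p\omega^L$. Then the bound is $\mu_0\widehat v^b(1,0)+(1-\mu_0)\kappa$. It is attained by having both types send signals $s\ge c_d-p\omega^L$, whatever the posteriors these signals induce, because the payoff in that regime is linear. This is the second bullet: both types fight on path, and the challenger attacks iff $s<c_d-p\omega^L$ or $v_c>c_c/(1-p)$.

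\textbf{Step 3: loose ends.} At the knife edge $\kappa=0$ both constructions are optimal, consistent with the weak inequalities in the statement. Strict convexity is used only to obtain the upper bound in Step 1; it is not needed to establish that the proposed strategies attain it.
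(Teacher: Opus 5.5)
\textbf{Gap: the bound for $0<s<c_d-p\omega^L$ uses a hypothesis the statement does not give you.}

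Your architecture matches the paper's proof: a linear upper bound on each graph $\widehat v^t(\cdot,s)$, aggregation through Bayes plausibility, and attainment by either the separating structure or the commit-to-fight structure. Your endpoint values, the exact linearity of $\widehat v^t$ in the committed regime, and both attainment constructions are correct.

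The failing step is the bound in the non-committed regime. There you invoke Proposition \ref{prop_convex_AC}, whose hypothesis is the log-density condition $\inf_x(\log f(x))'\ge-2(1-p)p/c_c$. You say you take this as given under Assumption \ref{assum_2sided}-(\ref{assum_convex}), but it does not follow. That assumption only posits strict convexity of $\widehat v^b(\cdot,s)$. Proposition \ref{prop_convex} shows the log-density condition is \emph{sufficient} for that convexity, not implied by it.

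You also cannot recover convexity of $\widehat v^t(\cdot,s)$ from convexity of $\widehat v^b(\cdot,0)$. From \eqref{payoff_uncertain_AC}, $\widehat v^t(\mu,s)=\widehat v^b(\mu,0)-(1-\mu)(1-F_\mu)s$. The subtracted term $(1-\mu)(1-F_\mu)s$ is strictly convex in $\mu$ when $F$ is uniform, so it erodes convexity rather than preserving it. Under the stated hypothesis alone, your chord bound $\mu\,\widehat v^b(1,0)-(1-\mu)s$ is therefore unproven.

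\textbf{Fix: you do not need convexity of $\widehat v^t(\cdot,s)$ at all.} You discard the $-(1-\mu)s$ term immediately anyway. For $s<c_d-p\omega^L$,
$$\widehat v^t(\mu,s)=\widehat v^b(\mu,0)-(1-\mu)(1-F_\mu)s\le\widehat v^b(\mu,0)\le\mu\,\widehat v^b(1,0),$$
where the last inequality uses only Assumption \ref{assum_2sided}-(\ref{assum_convex}) and $\widehat v^b(0,0)=0$. This is exactly how the paper's proof handles that regime.

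The same observation removes your worry about chord endpoints and cross-regime limits. Whether the low type is committed to fight depends on $s$ alone, not on $\mu$, so each graph $\widehat v^t(\cdot,s)$ lies entirely within one regime.

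With this one-line replacement, your bound $\mu\,\widehat v^b(1,0)+(1-\mu)\max\{0,\kappa\,\mathbbm{1}(s\ge c_d-p\omega^L)\}$, the aggregation step, and the two attainment arguments go through. The resulting proof coincides with the paper's.
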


\begin{proof}[Proof of Proposition \ref{Prop_AC_2sided}]
    We first specify the signal structure and then show that it achieves the defender's highest possible ex ante payoff.
    
    \textbf{Step 1} (Signal structure). First, suppose $-p\omega^L+c_d \ge \frac{F_1}{1-F_1}\, \omega^L$ or $\overline{s}^t<-p\omega^L+c_d$. Consider a separating signal structure in which the low type sends $s=0$ and the high type sends a fixed feasible signal $s>0$. The corresponding posteriors are $\mu_{s=0}=0$ and $\mu_{s>0}=1$. Bayes plausibility requires
    \begin{eqnarray*}
        && \tau(\mu_{s=0})\mu_{s=0}+(1-\tau(\mu_{s=0}))\mu_{s>0}=\mu_0\\
        &\Longrightarrow& \tau(\mu_{s=0})=1-\mu_0~\text{and}~\tau(\mu_{s>0})=\mu_0.
    \end{eqnarray*}
    The defender's signal structure follows from the above values:
    \begin{eqnarray*}
        \pi(0|\omega^H)=0,~\pi(s>0|\omega^H)=1,~\pi(0|\omega^L)=1,~\text{and}~\pi(s>0|\omega^L)=0.
    \end{eqnarray*}
    As a result, the challenger challenges when $s=0$ or $v_c>\frac{c_c}{1-p}$, upon which the high-valuation defender fights.

    Next, suppose $-p\omega^L+c_d \le \frac{F_1}{1-F_1}\, \omega^L$ and $\overline{s}^t\ge-p\omega^L+c_d$. As in the case without commitment (Proposition \ref{benchmark_AC2_BridgeBurning}), the defender is better off committing to fighting ($s\ge-p\omega^L+c_d$) rather than separating. That is, when
    \begin{eqnarray*}
        &&\overbrace{F_1\left(\mu_0\omega^H+(1-\mu_0)\omega^L\right)+(1-F_1)\left(p\left(\mu_0\omega^H+(1-\mu_0)\omega^L\right)-c_d\right)}^{\text{Committing to fighting}}\\
        &\ge& \underbrace{\mu_0\left(F_1\omega^H+(1-F_1)W^H\right),}_{\text{Committing to separating}}
    \end{eqnarray*}
    committing to fighting is weakly better than the separating signal structure. Rearranging, we obtain the condition $-p\omega^L+c_d \le \frac{F_1}{1-F_1}\, \omega^L$.

    \textbf{Step 2} (Payoff optimality). The above signal structures yield the highest possible ex ante payoff in each case. First, suppose $-p\omega^L+c_d \ge \frac{F_1}{1-F_1}\, \omega^L$ or $\overline{s}^t<-p\omega^L+c_d$. The ex ante payoff in the case with commitment is $\mu_0\widehat v^t(1,0)=\mu_0\widehat v^b(1,0)$. By $\widehat v^t(\mu,s)\le\widehat v^b(\mu,0)$ for $s<-p\omega^L+c_d$, Assumption \ref{assum_2sided}-(\ref{assum_convex}) guarantees $\widehat v^t(\mu,s)\le\widehat v^b(\mu,0)\le\mu\widehat v^b(1,0)$. When some $s\ge-p\omega^L+c_d$ is feasible, note that the condition $-p\omega^L+c_d\ge\frac{F_1}{1-F_1}\,\omega^L$ is equivalent to $F_1\omega^L+(1-F_1)W^L\le0$. Thus, for any $s\ge-p\omega^L+c_d$, we have
    \begin{eqnarray*}
        \widehat v^t\left(\mu,s\ge-p\omega^L+c_d\right)=\mu\widehat v^t(1,0)+(1-\mu)\left(F_1\omega^L+(1-F_1)W^L\right)\le\mu\widehat v^t(1,0).
    \end{eqnarray*}
    In sum, by Bayes plausibility, we obtain the ex ante payoffs
    \begin{eqnarray*}
        \sum_{s\in S^t}\tau(\mu_s)\widehat v^t(\mu_s,s)\le\sum_{s\in S^t}\tau(\mu_s)\mu_s\widehat v^t(1,0)=\mu_0\widehat v^t(1,0).
    \end{eqnarray*}
    This upper bound in the ex ante payoff is attained when the high type sends some $s>0$ and the low type sends $s=0$.
    
    Next, suppose $-p\omega^L+c_d \le \frac{F_1}{1-F_1}\, \omega^L$ and $\overline{s}^t\ge-p\omega^L+c_d$. Because $F_1\omega^L+(1-F_1)W^L\ge0$ in this case, it is straightforward to see
    \begin{eqnarray*}
        \widehat v^t(\mu,s\ge-p\omega^L+c_d)=\mu\widehat v^t(1,0)+(1-\mu)\left(F_1\omega^L+(1-F_1)W^L\right)
    \end{eqnarray*}
    for any $s\ge-p\omega^L+c_d$ and $\mu$ because the low type also fights for sure if challenged and the Challenger knows that both types will fight. Moreover, for $s<-p\omega^L+c_d$, we have $\widehat v^t\left(\mu,s<-p\omega^L+c_d\right)\le\mu\widehat v^t(1,0)\le\mu\widehat v^t(1,0)+(1-\mu)\left(F_1\omega^L+(1-F_1)W^L\right)$
    In sum, by Bayes plausibility, we have the ex ante payoffs
    \begin{eqnarray*}
        \sum_{s\in S^t}\tau(\mu_s)\widehat v^t(\mu_s,s) &\le& \sum_{s\in S^t}\tau(\mu_s)\mu_s\widehat v^t(1,0)+\sum_{s\in S^t}\tau(\mu_s)(1-\mu_s)\left(F_1\omega^L+(1-F_1)W^L\right)\\
        &=& \mu_0\widehat v^t(1,0)+(1-\mu_0)\left(F_1\omega^L+(1-F_1)W^L\right),
    \end{eqnarray*}
    which establishes the payoff optimality. This upper bound is attained when both types send some $s\ge-p\omega^L+c_d$ and fight if challenged.
Therefore, committing to fighting is defender-optimal.
    When $-p\omega^L+c_d = \frac{F_1}{1-F_1}\, \omega^L$ and $\overline{s}^t\ge-p\omega^L+c_d$, both signaling structures are optimal.
\end{proof}

\end{document}